\numberwithin{equation}{section}
\newcommand{\D}{{\mathbb D}}
\newcommand{\C}{{\mathbb C}}
\newcommand{\A}{{\mathbb A}}
\newcommand{\Z}{{\mathbb Z}}\newcommand{\E}{{\mathbb E}}
\newcommand{\bS}{{\mathbb S}}
\renewcommand{\H}{{\mathbb H}}
\newcommand{\crs}{\beta}
\renewcommand{\P}{{\mathbb P}}
\newcommand{\R}{{\mathbb R}}
\newcommand{\const}{\mathrm{const}\, }
\newcommand{\comp}{\mathcal{C}}
\newcommand{\SLE}{\mathrm{SLE}}
\newcommand{\mesh}{\delta}
\newcommand{\Od}{\Omega^\mesh}
\newcommand{\Cvr}{\varpi}
\newcommand\Conf{\mathrm{Conf}}
\newcommand\wind{\mathrm{w}}
\newcommand\pa{\partial}
\renewcommand\Re{\mathrm{Re}}
\renewcommand\Im{\mathrm{Im}}
\def\CSpace{{\Theta}}
\def\res{\mathop{\mathrm{res}}\limits}
\theoremstyle{remark}
\theoremstyle{plain}
\newtheorem{theorem}{Theorem}[section]
\newtheorem{lemma}[theorem]{Lemma} 
\newtheorem{corollary}[theorem]{Corollary} 
\newtheorem{proposition}[theorem]{Proposition} 
\theoremstyle{remark}
\newtheorem{rem}[theorem]{Remark}
\date{}
\begin{document}
\title{Critical Ising interfaces in multiply-connected domains}
\author{Konstantin Izyurov}
\maketitle
\begin{abstract}
We prove a general result on convergence of interfaces in the critical planar Ising model to conformally invariant curves absolutely continuous with respect to SLE(3). Our setup includes multiple interfaces on arbitrary finitely connected domains, and we also treat the radial SLE case. In the case of simply and doubly connected domains, the limiting processes are described explicitly in terms of rational and elliptic functions, respectively. 
\end{abstract}
\newcounter{Listcounter}
\section{Introduction}
In 2000, Schramm \cite{Schramm00} proposed his celebrated tool to study conformally invariant two-dimensional lattice models, namely, the Stochastic Loewner Evolutions (SLE). These are random conformally invariant planar curves that describe limits of interfaces, or discrete curves, in lattice models of statistical physics at criticality. Numerous conjectures have been made on convergence of interfaces to $\SLE$, and a few have actually been proven  rigorously \cite{LSW, Smirnov01, SS_harm, SS_DGFF, CHS3}.  We refer the reader to \cite{Lawler-book, Werner-book} for a background on Schramm-Loewner evolution.

The Ising model is one of the most well-studied models of statistical mechanics, and its conformal invariance at criticality has long been conjectured by physicists. In 2006, S. Smirnov proved this conjecture for fermionic observables \cite{Smirnov06, CHS2}. This, in particular, led to the  proof of convergence of the chordal Ising and FK-Ising interfaces to chordal SLE${}_3$ and SLE${}_\frac{16}{3}$ curves, respectively \cite{CHS3, SmirnovAntti}. In the latter case a description of the full scaling limit of loop soups in terms of the SLE($\frac{16}{3}$,$-\frac{2}{3}$) process was also announced. D. Chelkak and S. Smirnov \cite{CHS2} also proved universality of the critical Ising model on isoradial graphs, and C. Hongler and K. Kyt\"ol\"a extended convergence of the interface to the case $+/-/$free boundary conditions, where the scaling limit is dipolar SLE${}_3$ \cite{HonKyt}. Another proof of this result, and a generalization to arbitrary number of marked points, was recently proposed 
by the author \cite{Izy_free}.

Most of the SLE literature deals with the \emph{chordal} SLE in simply-connected domains; SLE in multiply-connected domains and multiple SLE are less studied. The difficulties arising in these cases are both conceptual, e. g. there is no unique definition of SLE because of lack of Schramm's principle, and technical, since explicit computations are hard. Nevertheless, D. Zhan \cite{Zhan-LERW} was able to extend the convergence of planar loop-erased random walks \cite{LSW} to the case of multiply-connected domains. M. Bauer, D. Bernard and K. Kyt\"ol\"a \cite{BBK05} suggested to define multiple SLE using Conformal Field Theory correlation functions. G. Lawler and M. Kozdron \cite{LK} and J. Dub\'edat \cite{Dub07} obtained similar definitions by imposing certain consistency (or ``commutation'') conditions. A definition of SLE for arbitrary $\kappa$ in multiply-connected domains on the grounds of restriction properties of SLE was proposed by G. Lawler in \cite{Law11}. Recently, in a series of papers \cite{FK}, S.
 Flores and P. Kleban have given 
a rigorous description of the space of multiple SLE partition functions. An alternative description was suggested by K. Kyt\"ol\"a and E. Peltola \cite{KP, KP15}.

\subsection{Statements of the main results}

The present paper aims at generalizing the convergence results of \cite{CHS3} for the interfaces in the Ising model. Let $\Od$ and $a_1^\delta,\dots,a_{2k}^\delta\in\partial\Od$ be discrete approximations (see Section \ref{Sec: notations} for precise definitions) to a finitely-connected domain $\Omega$ with marked points $a_1,\dots,a_{2k}\in\partial\Omega$. We will work with the \emph{low-temperature} expansion of the Ising model in $\Od$ with ``boundary change operators'' at $a^\delta_1,\dots,a^\delta_{2k}$. (Alternatively, it can be viewed as a dilute $O(1)$ loop model, with ``one-leg insertions'' on the boundary.) The \emph{configuration set}  $\Conf:=\Conf(\Od,a_1^\delta,\dots,a_{2k}^\delta)$ of the model consists of subsets $S$ of edges of $\Od$ such that any vertex of $\Od$ is incident to an even number of edges $S$, and $a_1^\delta,\dots,a^\delta_{2k}$ is the set of boundary edges in $S$, i. e. those connecting a vertex in $\Od$ with one not in $\Od$; see Figure \ref{fig: Conf}. The probability of a 
configuration $S$ is given by
$$
\P(S)=\frac{x^{|S|}}{Z},
$$
where $|S|$ is the number of edges in $S$ and $Z=\sum\limits_{S\in\Conf}x^{|S|}$ is the partition function. Throughout the paper, the weight $x$ will be set to its critical value:
$$
x=x_{crit}=\sqrt{2}-1.
$$
We discuss the relation of this setup to the spin Ising model below.

Any configuration $S\in\Conf(\Od,a_1^\delta,\dots,a_{2k}^\delta)$ can be decomposed into a collection of  simple, edge-disjoint loops and $k$ interfaces connecting the points $a^\delta_1\dots,a^\delta_{2k}$; the decomposition is non-unique, but the scaling limit is independent on its choice. We wish to describe the scaling limit of (initial segments of) the interface $\gamma=\gamma^\delta$ starting at the point $a^{\delta}_1$ on the boundary component $\partial_{a_1} \Od\ni a^\delta_1$. Note that we do not specify a priory at which point it ends up. We assume that $(\Od,a^\delta_1,\dots,a^\delta_{2k})$ converges to $\Omega,a_1,\dots,a_{2k}$ in the sense of Carath\'eodory (see, e.g., \cite{Pommerenke} for a background on this notion), and that $\partial_{a_1}\Omega$ consists of more than one point. We require the convergence to be regular near the marked points $(a_2,\dots,a_{2k})$ in the sense of \cite[Definition 3.14]{ChIz}, which means that in some small but fixed neighborhood of each $a_j$, the boundaries 
of $\Od$ are vertical or horizontal straight lines. With some work, this technical assumption can be 
removed, cf. \cite{Izy_free}.

\begin{figure}[t]
\begin{center}
\includegraphics[width=0.7\textwidth]{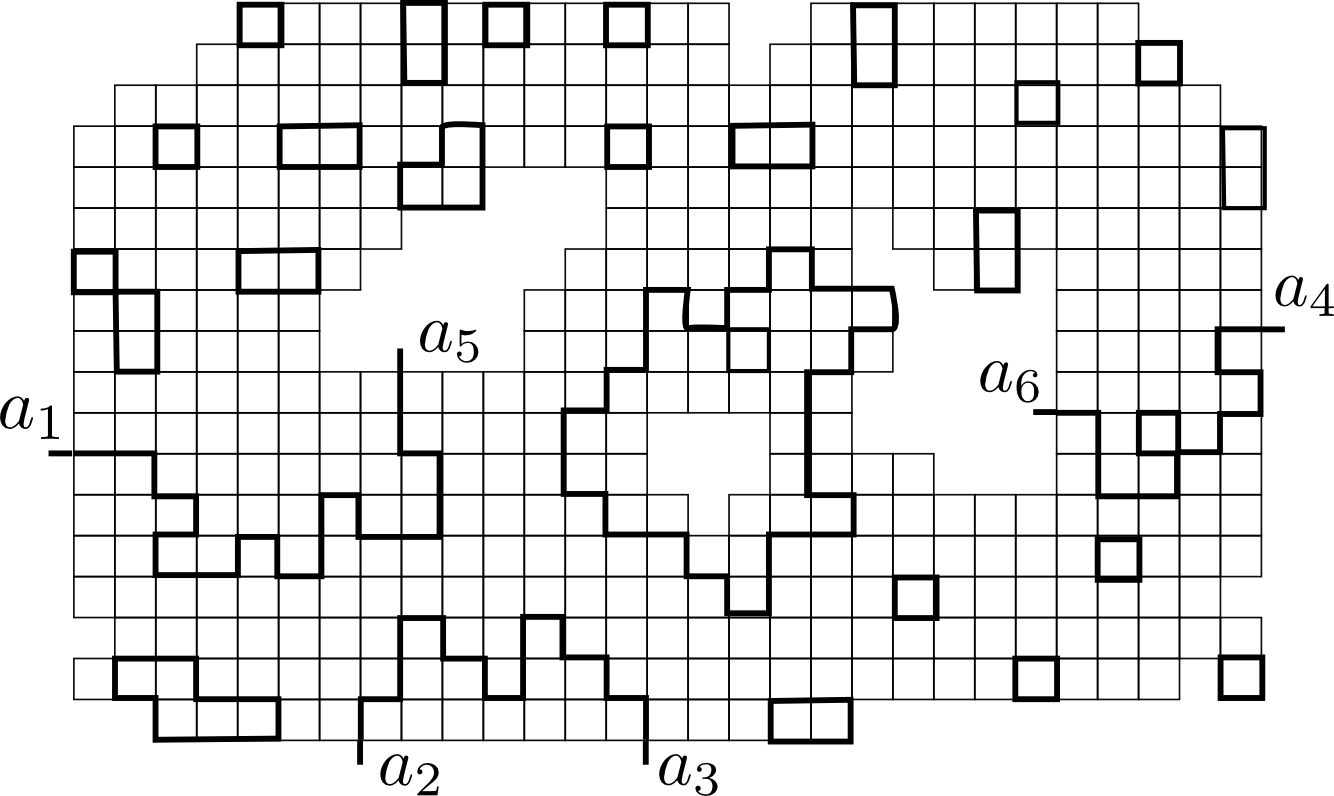}
\end{center}
\caption{\label{fig: Conf} An example of a four-connected discrete domains $\Od$ with six marked points and a configuration $S\in\Conf(\Od,a_1,\dots,a_{6})$}
\end{figure}

In order to describe the law of the interface in the multiply-connected setup, we follow \cite{Zhan-LERW} and use the standard chordal Loewner evolution in the upper half-plane $\H$. We call a finitely-connected domain $\Lambda$ \emph{an almost-$\H$ domain} if $\Lambda\subset \H$ and $\pa\Lambda$ consists of $\R$ and analytic Jordan arcs. Let $g_0$ be a conformal map from $\Omega$ to an almost-$\H$ domain $\Lambda_0$  such that $g_0(\partial_{a_1} \Omega)=\R$. For each $\delta$, we consider a conformal map $g_0^\delta$ from $\Od$ to an almost-$\H$ domain  $\Lambda^\delta_0$, such that as $\delta$ tends to zero, $g_0^\delta$ converges to $g_0$ uniformly on compact subsets of $\Omega$, hence also $g_0(a_m^\delta)\to g_0(a_m)$, and, moreover, $\partial \Lambda^\delta_0\to \partial \Lambda_0$ in the sense of~$C^2$. Such a sequence can be constructed in many ways, see Section~\ref{sec: proof}.

We encode the image $K_t^\delta:= g^\delta_0(\gamma^{\delta})$ by chordal Loewner evolution in $\H$, that is, parametrize $K_t^\delta$ by twice the half-plane capacity $t$ and consider the maps $g^\delta_t:=g_{K_t^\delta}\circ g_0^\delta$, where $g_{K_t^\delta}$ are conformal maps from $\H\backslash K_t^\delta$ to $\H$ with hydrodynamic normalization at infinity. These maps satisfy Loewner's equation
$$
\partial_t g^\delta_t(z)=\frac{2}{g^\delta_t(z)-a^{\delta}_1(t)},
$$
where $a^{\delta}_1(t)=g^\delta_t(\gamma_t)$ is a (random) continuous driving function which completely determines the curve $\gamma_t$. Our goal is to describe the law of $a^\delta_1(t)$ in the limit $\delta\to 0$.

The statements of our results will involve the \emph{SLE${}_3$ partition functions}, which are positive quantities $Z(\Omega,a_1,\dots,a_{2k})$ associated with each domain $\Omega$ with marked points $a_1,\dots,a_{2k} \in \partial \Omega$ such that $\pa \Omega$ is smooth near $a_1,\dots,a_{2k}$ (in particular, with every almost-$\H$ domain $\Lambda$).  We discuss this notion in detail in Section~\ref{sec: intro_pf}, for now we just mention that it depends nicely enough on $\Omega$ and $a_1,\dots,a_{2k}$. We define the \emph{drift term} by
$$
D(\Lambda,a_1,\ldots,a_{2k}):=3\partial_{a_1}\log Z(\Lambda,a_1,\dots,a_{2k})
$$
Given a Loewner chain $K_t$ in $\Lambda_0$ with a driving force $a_1(t)$, denote $\Lambda_t:=g_{K_t}(\Lambda_0)$, $a_i(t):=g_{K_t}(a_i(0))$ for $i=2,\dots,{2k}$. Let $B_t$ be a standard Brownian motion on $\R$, and let the driving force $a_1(t)$ be the strong solution to the integral equation
\begin{equation}
\label{eq: a_t}
a_1(t):=a_1(0)+\sqrt{3}B_t+\int\limits_0^tD (\Lambda_s,a_1(s),\dots,a_{2k}(s))ds
\end{equation}
with initial data $\Lambda_0=g_0(\Omega)$ and $a_i(0)=g_0(a_i)$. Note that $D$ is a function on the infinitely-dimensional space of almost-$\H$ domains. We will prove that this function is locally Lipschitz in  $L^{\infty}$ norm with respect to $a_1(\cdot)$. The existence and the uniqueness of a strong solution to (\ref{eq: a_t}) then can be derived in a standard way, see~\cite{Zhan-LERW}.

Fix any cross-cut $\crs$ in $\Lambda_0$ that separates $g_0(a_1)$ from other marked points, other boundary components and infinity. We denote by $T_\crs$ the first time that the Loewner chain driven by the solution $a_1(t)$ to (\ref{eq: a_t}) hits $\crs$. Similarly, denote by $T^\delta_\crs$ the first time that $K^\delta_t$ hits $\crs$. 

We will say that a random driving force $a_1^\delta(t)$ \emph{locally converges} to $a_1(t)$ if for any such crosscut $\crs$ the two processes can be coupled so that $\sup_{t\in [0,T_\crs^\delta\wedge T_\crs]}|a_1^\delta(t)-a_1(t)|$ tends to zero in probability as $\delta\to 0$.
\begin{theorem}
\label{thm: conv_int}
As $\delta\rightarrow 0$, the process $a^{\delta}_1(t)$, i. e. the random driving force of the conformal image of the discrete Ising interface, locally converges to $a_1(t)$, i. e. the unique strong solution to (\ref{eq: a_t}).
\end{theorem}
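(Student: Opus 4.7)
The plan is to follow the standard three-step strategy for proving convergence of discrete interfaces to SLE-type processes: precompactness of driving forces, construction of a martingale observable that pins down the limit SDE, and pathwise uniqueness. As a preliminary step, I would establish tightness of the family $\{a_1^\delta(\cdot)\}_\delta$ in the topology of uniform convergence on $[0, T_\crs^\delta]$, for each admissible crosscut $\crs$. Via the Kemppainen--Smirnov framework, this reduces to a uniform unforced-crossing estimate (``condition G2'') for the Ising interface in a neighbourhood of the tip, bounded away from the other marked points and boundary components. For the critical Ising model such estimates follow from RSW-type bounds for FK-Ising together with the spatial Markov property; in the multiply-connected setup they can be localised by conformal mapping to the simply-connected case. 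Tightness guarantees that any subsequential limit is a continuous driving force generating a genuine Loewner chain $K_t$ in $\Lambda_0$, with $a_i(t) = g_{K_t}(a_i(0))$.

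Next I would construct a martingale observable from the slit-domain partition function. By the spatial Markov property of the Ising model, conditionally on an initial segment $\gamma^\delta_{[0,t]}$ of the interface, the remaining configuration is distributed as the Ising model in $\Od \setminus \gamma^\delta_{[0,t]}$ with marked points $a_1^\delta(t), a_2^\delta, \dots, a_{2k}^\delta$. Consequently
\begin{equation*}
M^\delta_t := \frac{Z^\delta(\Od \setminus \gamma^\delta_{[0,t]}, a_1^\delta(t), a_2^\delta, \dots, a_{2k}^\delta)}{Z^\delta(\Od, a_1^\delta, \dots, a_{2k}^\delta)}
\end{equation*}
is a martingale for the exploration filtration. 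Dividing by the explicit boundary singularity at the tip and taking the scaling limit, the convergence results for ratios of discrete Ising partition functions with boundary change operators, from \cite{CHS2, CHS3, Izy_free} and the author's analysis of $Z$ in Section~\ref{sec: intro_pf}, should give that $M^\delta_t$ is comparable (up to an explicit conformally-covariant normalisation) to the continuous $\SLE_3$ partition function $Z(\Lambda_t, a_1(t), \dots, a_{2k}(t))$, uniformly on compact time intervals inside $[0, T_\crs)$. Combined with tightness, this yields that along any subsequential limit, $Z(\Lambda_t, a_1(t), \dots, a_{2k}(t))$ (suitably renormalised) is a continuous local martingale of the exploration.

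To identify the limit SDE, I would write a subsequential limit via Doob--Meyer as $a_1(t) = a_1(0) + M_t + A_t$ with $M_t$ a continuous local martingale and $A_t$ of bounded variation, then apply It\^o's formula to $\log Z(\Lambda_t, a_1(t), \dots, a_{2k}(t))$. Together with the level-two null-vector PDE
\begin{equation*}
\tfrac{3}{2} \partial_{a_1}^2 Z + \sum_{i \ge 2} \tfrac{2}{a_i - a_1} \partial_{a_i} Z + (\text{terms in moduli parameters}) = 0
\end{equation*}
satisfied by the $\SLE_3$ partition function, the martingale property forces $d\langle M \rangle_t = 3\,dt$ and $dA_t = 3\,\partial_{a_1} \log Z \, dt = D(\Lambda_t, a_1(t), \dots, a_{2k}(t))\,dt$. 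Hence every subsequential limit solves (\ref{eq: a_t}); pathwise uniqueness, provided by the local Lipschitz property of $D$ with respect to $a_1(\cdot)$ promised in the excerpt, then upgrades this to full local convergence as in \cite{Zhan-LERW}.

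The conceptually central and, in my view, hardest step is the second one: producing a uniform convergence statement for $M^\delta_t$ as the slit develops, in a form robust enough to survive up to $T_\crs^\delta \wedge T_\crs$ and to accommodate the multiply-connected geometry together with the boundary singularities at every $a_j^\delta$. In particular, one has to match the discrete observable to the continuous $Z$ with the correct conformal covariance and boundary behaviour near the tip, and establish the null-vector PDE for $Z$ in the almost-$\H$ setting. Once this analytic input is in place, tightness, It\^o calculus and SDE uniqueness combine to give the theorem in a soft manner.
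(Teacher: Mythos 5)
Your overall three-step plan (precompactness, martingale observable, SDE identification) is familiar, but the central object you propose in Step~2 does not work as stated, and this is the crux of the proof, so the gap is substantial. The quantity
\begin{equation*}
M^\delta_t := \frac{Z^\delta\bigl(\Od \setminus \gamma^\delta_{[0,t]}, a_1^\delta(t), a_2^\delta, \dots, a_{2k}^\delta\bigr)}{Z^\delta\bigl(\Od, a_1^\delta, \dots, a_{2k}^\delta\bigr)}
\end{equation*}
is \emph{not} a martingale for the exploration filtration. Indeed, if $\gamma_{[0,n]}$ has $n$ edges, then by the domain Markov property $\P(\gamma_{[0,n]}) = x^n\,Z(\Od\setminus\gamma_{[0,n]},\gamma_n,\dots,a_{2k}^\delta)/Z(\Od,a_1^\delta,\dots)$, so your $M^\delta_t$ is $x^{-n}\P(\gamma_{[0,n]})$. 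Its one-step conditional expectation is
$\sum_e x\, Z(\Od\setminus(\gamma_{[0,n]}\cup\{e\}))^2 / Z(\Od\setminus\gamma_{[0,n]})$
(after cancelling the fixed denominator), which does not equal $Z(\Od\setminus\gamma_{[0,n]})$ in general; the identity $\sum_e x\, Z(\Od\setminus(\gamma_{[0,n]}\cup\{e\})) = Z(\Od\setminus\gamma_{[0,n]})$ is linear, not quadratic, in the $Z$'s. What the paper actually uses (Proposition~\ref{prop: mart}) is the \emph{two-point} fermionic observable with a free variable $z$,
\begin{equation*}
M^\delta_{\gamma_{[0,n]}}(z) = \frac{F_{\Cvr_Z}(\Od\setminus\gamma_{[0,n]},\gamma_n,a_2^\delta,\dots,a_{2k-1}^\delta,z)}{F_{\Cvr_Z}(\Od\setminus\gamma_{[0,n]},\gamma_n,a_2^\delta,\dots,a_{2k}^\delta)},
\end{equation*}
where the numerator is genuinely $z$-dependent; the martingale property follows precisely because both numerator and denominator scale linearly under the one-step conditional expectation, so the quadratic term you implicitly introduce never appears. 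The $z$-dependence is also what makes the subsequent drift extraction possible: one reads off the coefficients of $(g_t(z)-a_1(t))^{-2}$ and $(g_t(z)-a_1(t))^{-3}$ in the expansion near the tip, and this requires a whole family of equations indexed by $z$, not a scalar identity.

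Your third step also diverges from the paper in a way that creates extra work rather than saving it. You propose to use the level-two null-vector PDE for $Z$ (with moduli terms) to identify the SDE. That PDE is \emph{not} established in this paper for multiply-connected almost-$\H$ domains and would have to be derived, including the non-trivial terms coming from variation of the conformal structure. The paper instead bypasses the BPZ/null-vector route entirely: it computes the It\^o drift of the $z$-dependent observable directly via Lemmas~\ref{lem: diff_one} and~\ref{lem: variation} (a Taylor expansion of the Loewner-evolved fermionic observable), and extracts the relations on $\E[\Delta a]$ and $\E[\Delta a^2]$ by the linear-algebra Lemma~\ref{lem: coef_vanish}, never needing the partition function to solve any PDE. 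Finally, your tightness-first step is a legitimate alternative (and indeed what is used for the stronger topology in Corollary~\ref{cor: stronger topology}), but the paper deliberately avoids it for the driving process and instead runs the LSW stopping-time/Skorokhod coupling scheme directly on $a_1^\delta(t)$; this sidesteps having to show that subsequential limits are semimartingales generating honest Loewner chains in the multiply-connected setting. In short, your second and third steps need to be replaced by the $z$-dependent observable and the direct Taylor-expansion argument; as written, the proposed martingale and the null-vector PDE are both unproved (and the former is in fact false).
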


By now, enough tools have been developed allowing one to strengthen the topology of convergence:
\begin{corollary}
The convergence in Theorem \ref{thm: conv_int} holds in the $L^\infty$ norm on curves up to re-parametrization, as in \cite{CHS3}. 
\label{cor: stronger topology}
\end{corollary}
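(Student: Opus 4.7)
The plan is to combine the driving-force convergence already obtained in Theorem~\ref{thm: conv_int} with the Kemppainen--Smirnov tightness machinery. As a first step, I would verify that the discrete interfaces $\gamma^\delta$, stopped at the hitting time $T_\crs^\delta$ of a fixed crosscut $\crs\subset\Lambda_0$, satisfy a uniform annulus crossing (``unforced crossing'') estimate: the probability that $\gamma^\delta$ crosses a fixed annulus $A(z,r,R)$ (with $A(z,r,R)$ not disconnecting the tip from the target in an essential way) decays as a power of $r/R$. For the low-temperature expansion of the critical Ising model, such estimates follow from the Russo--Seymour--Welsh theory already used in \cite{CHS3}; since they are essentially local in the domain, they transfer to our multiply-connected approximations $\Od$ as long as the tip stays away from $\pa\Od\setminus\pa_{a_1}\Od$, which is ensured by stopping at $T^\delta_\crs$.

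Given this input, the Kemppainen--Smirnov theorem yields precompactness of the laws of $\gamma^\delta$ (viewed modulo reparametrization) in the sup-norm topology, and moreover guarantees that any subsequential limit $\gamma$ is a Loewner curve, whose conformal image $g_0(\gamma)$ is described by a continuous driving function which is the uniform limit of the discrete driving functions $a_1^{\delta_n}(\cdot)$ on $[0,T_\crs^{\delta_n}\wedge T_\crs]$. Combining this with Theorem~\ref{thm: conv_int}, the limiting driving function is the unique strong solution of (\ref{eq: a_t}), so every subsequential limit is the same random curve and the whole family $\gamma^\delta$ converges in distribution (and, via the coupling provided by Theorem~\ref{thm: conv_int}, in probability) in the curve topology up to $T_\crs^\delta\wedge T_\crs$.

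To conclude in $\Omega$ rather than in $\Lambda_0^\delta$, I would use the $C^2$-convergence $\pa\Lambda_0^\delta\to \pa\Lambda_0$ and the Carath\'eodory convergence $g_0^\delta\to g_0$ to get locally uniform convergence of the inverses $(g_0^\delta)^{-1}$ on compact subsets of $\Lambda_0$ away from $g_0(a_1)$ and the other marked points; this pushes sup-norm convergence of $g_0^\delta(\gamma^\delta)$ back to sup-norm convergence of $\gamma^\delta$ in $\Omega$ up to the stopping times. Since the crosscut $\crs$ can be chosen arbitrarily close to the union of the other marked points and the other boundary components, this gives the desired topology of convergence.

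The main technical point I expect to be delicate is ensuring that the annulus crossing estimates needed to invoke the Kemppainen--Smirnov framework hold uniformly in $\delta$ across all boundary components and in particular near the additional marked points $a_2,\dots,a_{2k}$, where boundary conditions change. The regularity assumption on the approximations near these points (straight vertical/horizontal $\pa\Od$ in a neighborhood of each $a_j$, as in \cite[Definition~3.14]{ChIz}) is precisely what is needed for the standard Ising RSW bounds to apply locally; once this is verified, the rest of the argument is routine given the tools now available.
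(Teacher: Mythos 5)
Your proposal follows the same high-level blueprint as the paper (crossing estimates, Kemppainen--Smirnov tightness, identification of the limit via Theorem~\ref{thm: conv_int}), but it glosses over the one issue the paper singles out as \emph{the} obstacle: the Kemppainen--Smirnov theory in \cite{AnttiStas} is formulated for simply-connected domains and for interfaces running from one marked boundary point to another. In the present situation you have a multiply-connected domain and a curve that is \emph{stopped at a crosscut}, not at a boundary marked point, and your claim that the estimates ``transfer to our multiply-connected approximations'' because they are ``essentially local'' does not by itself let you invoke the KS theorem, whose hypotheses and conclusions are stated for the simply-connected chordal setting.

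The paper's actual argument addresses this directly and contains two ingredients your proposal is missing. First, one passes to the simply-connected domain $\widehat{\Omega}^\delta$ obtained by filling in every component of $\widehat{\C}\setminus\Od$ except the one carrying $a_1^\delta$, and one \emph{extends} $\gamma_{[0,T^\delta_\crs]}$ to a genuine interface $\tilde\gamma$ in $\widehat\Omega^\delta$ from $a_1^\delta$ to a suitably placed auxiliary boundary point $\hat a^\delta$ on the far side of $\crs$; only then does the object literally fit the KS framework. Second, this construction introduces a subtle point: the continuation must be chosen so that it creates no additional unforced annulus crossings, which can only be arranged when the hitting point $\gamma_{T^\delta_\crs}$ is not too close to $\pa\Od$. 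The paper handles this by two applications of Corollary~1.7 of \cite{CDH} --- once to bound the unforced crossings of the initial segment, and once (after possibly modifying $\crs_{\Od}$) to show that the probability of $\gamma_{T^\delta_\crs}$ being near the boundary is small. Without this two-step reduction to a bona fide chordal interface in a simply-connected domain, plus the control on the hitting-point location, the appeal to \cite{AnttiStas} is not justified; that is the gap in your write-up.
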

We sketch a proof in Section \ref{sec: proof}. Using the estimates of crossing probabilities obtained in \cite{CDH}, it should be possible to show that almost surely, as $t$ tends to $\sup_\crs T_\crs$, the tip of the curve driven by  $a_1(t)$ tends to one of the points $g_0(a_2),\ldots,g_0(a_{2k})$; see \cite{Disser} for weaker results in the same spirit.

In the simply-connected case, one has $\Lambda_t\equiv \H$, and thus the random driving force $a_1(t)$ is described in terms of a diffusion on a \emph{finitely dimensional} space. For a similar description in the doubly connected case, one has to use the annulus Loewner evolution \cite{BB-annulus, Zhan-annulus}. Recall that the annulus Loewner evolution is a solution to the differential equation
\begin{equation}
\partial_t g_t(z)=V^{p-t}_{\theta(t)}\left(g_t(z)\right);  
\label{eq: ann_Loewner}
\end{equation}
here $V^p_{\theta}(\cdot)$ is the Loewner kernel for the annulus $\A_{p}:=\{e^{-p}\leq |z|\leq 1\}$, and $\theta(t)\in\R$ is a driving function, see Section \ref{sec: annulus}. If the initial data $g_0(\cdot)$ is chosen to be a conformal map from a doubly connected domain $\Omega$ to $\A_p$, then $g_t(z)$ is a conformal map from a sub-domain of $\Omega$ to $\A_{p-t}$. 

It is convenient to write the partition function in terms of the arguments $\theta_{1},\ldots,\theta_{2k}$ of $a_1,\ldots,a_{2k}\in\pa\A_p$. By \emph{annulus SLE${}_3$} with partition function $Z=Z(p,\theta_{a_1},\dots,\theta_{a_{2k}})$, we mean the annulus Loewner evolution driven by the unique strong solution to the following SDE:
\begin{equation}
d\theta_{1}(t)=\sqrt{3}dB_t + D(p-t,\theta_{1}(t),\dots,\theta_{2k}(t))dt
\label{eqn: annulus}
\end{equation}
where  $D(p,\theta_1,\dots,\theta_{2k})= 3\partial_{\theta_{1}}\log Z(p,\theta_{1},\dots,\theta_{{2k}})$ and $\theta_i(t)=\arg g_t(a_i)$. 
\begin{figure}[t]
\begin{center}
\includegraphics[width=0.7\textwidth]{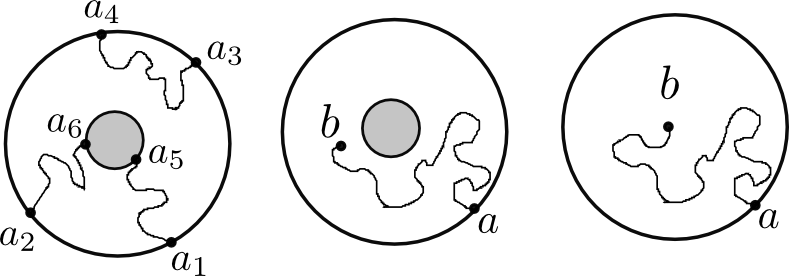}
\end{center}
\caption{\label{fig: Three cases} The setups covered by Propositions \ref{prop: ann1} (left) and \ref{prop: ann2}.}
\end{figure}

Similarly to the above setup, for each $\delta$ we choose a conformal map $g_0^\delta$ from $\Od$ to  $\A^{p_{\delta}}$, where $p_\delta$ is the conformal modulus of $\Od$ and $|g^\delta_0(a^\delta_1)|=1$, in such a way that $g_0^\delta$ converges to $g_0$ on compact subsets of $\Omega$. The curves $\gamma^\delta_{[0,t]}$ parametrized by the modulus of $\Od\backslash \gamma^\delta_{[0,t]}$ can be then described by an annulus Loewner evolution $g_t^\delta$ with a random driving process $\theta^\delta_{1}(t)=\arg a^{\delta}_1(t)$. Our result then reads as follows.
\begin{proposition}
\label{prop: ann1}
 As $\delta\to 0$, the process $\theta^\delta_1(t)$ above locally converges to the driving process of the annulus SLE with the partition function given by (\ref{eq: pf_ann_1})--(\ref{eq: pf_ann_2}) below.
\end{proposition}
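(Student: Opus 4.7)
The plan is to derive Proposition \ref{prop: ann1} from Theorem \ref{thm: conv_int} by passing between the chordal and annulus Loewner encodings of the same Ising interface. Since $\Omega$ is doubly-connected, I would pick a conformal map $\phi$ from the almost-$\H$ realization $\Lambda_0 = g_0(\Omega)$ onto the annulus $\A_p$ sending $\R$ to the outer circle $\{|z|=1\}$, so that $\psi_0 = \phi \circ g_0$ is (up to rotation) the annulus uniformizer used to set up \eqref{eqn: annulus}; corresponding discrete choices $\phi^\delta$ with $\phi^\delta \to \phi$ in $C^2$ near $\R$ are available. The chordal hull $K_t^\delta = g_0^\delta(\gamma^\delta_{[0,t]})$ and the annulus hull describe the same initial segment of $\gamma^\delta$, so pathwise there is a smooth strictly increasing time change $\tau = \tau^\delta(t)$ from half-plane capacity to loss of annulus modulus, and
\begin{equation*}
\theta_1^\delta(\tau^\delta(t)) \;=\; \arg \phi_t^\delta\!\bigl(a_1^\delta(t)\bigr),
\end{equation*}
where $\phi_t^\delta$ is the conformal map from the current slitted domain $\Lambda_t^\delta = g_t^\delta(\Lambda_0^\delta \setminus K_t^\delta)$ onto $\A_{p_\delta - \tau(t)}$ that sends $\R$ to the outer circle.

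The first step is to check that this pathwise dictionary is compatible with local convergence. Pick a crosscut $\crs$ in $\Lambda_0$ separating $g_0(a_1)$ from the inner component and from the other marked points. As long as the chordal tip stays away from $\crs$, the maps $\phi_t^\delta$, their inverses, and the time change $\tau^\delta$ depend smoothly on $a_1^\delta(\cdot)\vert_{[0,t]}$, uniformly in $\delta$. Local convergence of $a_1^\delta$ to $a_1$ furnished by Theorem \ref{thm: conv_int} then upgrades to local convergence of $\theta_1^\delta$ to the deterministic functional $\theta_1(\tau) = \arg \phi_t(a_1(t))$ of the limit chordal process, with the stopping time $T_{\crs}$ interpreted via the image of $\crs$ in $\A_p$.

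The second step is to identify $\theta_1$ with the strong solution of \eqref{eqn: annulus}. Applying Itô's formula to $\theta_1(\tau) = \arg \phi_t(a_1(t))$ under the chordal SDE \eqref{eq: a_t} with $\kappa=3$, and using that the capacity-to-modulus time change has derivative proportional to $|\phi_t'(a_1(t))|^2$ (which, together with the annulus Loewner kernel $V^{p-\tau}_{\theta_1}$, absorbs the conformal distortion of the chordal kernel $2/(g_t - a_1)$), one obtains $d\theta_1 = \sqrt{3}\,dB_\tau + \tilde D\,d\tau$, where $\tilde D$ combines the pushforward of $3 \partial_{a_1}\log Z^{\mathrm{ch}}$, an Itô correction involving $\phi_t''(a_1)/\phi_t'(a_1)$, and the residual kernel discrepancy at the tip. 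The standard SLE conformal-covariance identity, with boundary conformal weight $h = \tfrac{1}{2}$ for the Ising model, states that
\begin{equation*}
Z^{\mathrm{ann}}\bigl(p-\tau,\theta_1,\dots,\theta_{2k}\bigr)
\;=\; C(p-\tau)\prod_{i=1}^{2k} \brm{\phi_t'(a_i)}^{1/2} \cdot Z^{\mathrm{ch}}\bigl(\Lambda_t, a_1, \dots, a_{2k}\bigr),
\end{equation*}
where $C(p-\tau)$ depends only on the modulus. Differentiating this relation in $\theta_1$ and invoking the standard Schwarzian/Loewner bookkeeping cancels the Itô and kernel-discrepancy terms exactly, collapsing $\tilde D$ to $3 \partial_{\theta_1}\log Z^{\mathrm{ann}}$, which is the drift in \eqref{eqn: annulus}. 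The explicit formula \eqref{eq: pf_ann_1}--\eqref{eq: pf_ann_2} in elliptic functions then follows by transporting the explicit chordal partition function from Section \ref{sec: intro_pf}.

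The hardest part will be the Itô bookkeeping in the second step: confirming that the correction from $\phi_t''/\phi_t'$, the distortion between $2/(g_t - a_1)$ and $V^{p-\tau}_{\theta_1}$, and the weight-$\tfrac{1}{2}$ covariance of the partition function assemble into the pure logarithmic-derivative drift of \eqref{eqn: annulus}. This is the direct analogue of Schramm's coordinate-change computation adapted to the annulus, in the spirit of \cite{Zhan-LERW, BB-annulus, Zhan-annulus}; once the correct conformal weight of the chordal partition function is in hand---a property established in Section \ref{sec: intro_pf}---the drift identification is a mechanical calculation. The remaining minor task is strong existence and uniqueness for \eqref{eqn: annulus}, which follows by the same local-Lipschitz argument already used for \eqref{eq: a_t}.
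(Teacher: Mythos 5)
The paper proves Proposition \ref{prop: ann1} by \emph{re-running} the martingale-observable heuristic of Section \ref{sec: DP_general} directly in the annulus Loewner coordinates: it expands $f(\A^p,a,z)$ near a boundary point, writes the martingale $M_{\gamma_{[0,t]}}$ in terms of the annulus Loewner maps, uses the expansion of the annulus kernel $V^{p-t}_{\theta}$ near the singularity, and reads off the drift of $\theta_1(t)$ from It\^o's formula — then notes that the rigorous justification ``goes along the same lines'' as for Theorem \ref{thm: conv_int}. Your proposal takes a genuinely different route: you keep the chordal almost-$\H$ encoding of Theorem \ref{thm: conv_int} as the source of truth, and transfer the SDE to annulus coordinates by a Schramm-type coordinate change, using the weight-$\tfrac12$ covariance of the partition function to collapse the It\^o corrections. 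This is a legitimate alternative (and uses less duplicated analysis), but note that the paper never needs to establish the full chordal-to-annulus coordinate-change identity, whereas your route makes that identity (including the path-dependent time change $\tau^\delta$ and the fact that the correction terms cancel exactly) the crux of the argument, and you only gesture at it as ``mechanical.''

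Two concrete points you should fix. First, your covariance relation is stated in the wrong direction: by \eqref{eq: conf_cov_Z}, with $\phi_t\colon\Lambda_t\to\A_{p-\tau}$, one has $Z(\Lambda_t,a_1,\dots,a_{2k})=\prod_i|\phi_t'(a_i)|^{1/2}\,Z(\A_{p-\tau},\phi_t(a_1),\dots,\phi_t(a_{2k}))$, not the reverse. The $C(p-\tau)$ factor you insert does legitimately appear once $Z(\A_p,\cdot)$ is reparametrized by angles $\theta_i$ (and it is harmless, being $a_1$-independent), but the derivative factors sit on the other side. Second, your closing sentence — that \eqref{eq: pf_ann_1}--\eqref{eq: pf_ann_2} ``follows by transporting the explicit chordal partition function from Section \ref{sec: intro_pf}'' — is off: Section \ref{sec: intro_pf} gives an explicit formula \eqref{eq: Z_simply_conn} only in the simply-connected case; for doubly-connected domains the partition function is defined abstractly via the Pfaffian of the spinor observable \eqref{eq: defZ}, and the elliptic-function expressions are obtained by actually solving the Riemann boundary value problem \eqref{eq: obs_on_bdry}--\eqref{eq: obs_at_a} in the annulus (Section \ref{DP_explicit}). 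There is no chordal closed form to transport; that explicit solve is an independent ingredient that your proof would still need.
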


It is also interesting to treat the following situation. Suppose $a^\delta$ is a boundary edge and $b^\delta$ is an \emph{inner vertex} in $\Od$, and denote by $\Conf(\Od,a^\delta,b^\delta)$ the set of subsets $S$ of edges in $\Omega$ such that $a^\delta$ is the only boundary edge in $S$ and all inner vertices have even degrees in $S$, except for $b^\delta$. We assume that the triplet $(\Od,a^\delta,b^\delta)$ approximates $(\Omega,a,b)$, where $a\in\pa\Omega$ and $b\in\Omega$. 

\begin{proposition}
\label{prop: ann2}
 In the case of a simply-connected (respectively, doubly-connected) domain $\Omega$, the interface connecting $a^\delta$ to $b^\delta$ converges as $\delta\to 0$ to the radial SLE${}_3$ (respectively, to the annulus SLE${}_3$ with partition function given by (\ref{eq: pf_annulus_radial}) below) in the sense of local convergence of driving functions.
\end{proposition}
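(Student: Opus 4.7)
My strategy is to adapt the martingale-observable approach used for Theorem~\ref{thm: conv_int} to the situation where one of the marked points lies in the interior of the domain. The setup differs from Theorem~\ref{thm: conv_int} in only two respects: the interface terminates at an interior vertex rather than a boundary point, and in the doubly-connected case the uniformizing map is the annulus map governed by (\ref{eq: ann_Loewner}) rather than an almost-$\H$ conformal map. Both features already appear in the literature on radial and annulus SLE, so the overall scheme---build a discrete martingale observable, prove its convergence to a conformally covariant holomorphic object, read off the driving SDE by It\^o calculus, and couple via tightness and uniqueness---carries over essentially unchanged.

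The key new object is a \emph{spinor} fermionic observable $F^\delta(z)$ for configurations in $\Conf(\Od,a^\delta,b^\delta)$, living on the double cover of $\Od\setminus\{b^\delta\}$ branching at $b^\delta$, defined in exact analogy with the interior spin observables of Chelkak--Hongler--Izyurov. By the standard s-holomorphicity argument, $F^\delta$ is s-holomorphic and, evaluated appropriately near the tip, is a martingale with respect to the exploration filtration of $\gamma^\delta$. As $\mesh\to 0$, $F^\delta$ converges away from $a$ and $b$ to a continuous holomorphic function $F$ on the uniformized domain, solving a Riemann boundary value problem with a $(z-b)^{-1/2}$ singularity at the interior point. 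In the simply-connected case one can solve this boundary value problem explicitly on $\D$ with $b=0$: up to the standard local behaviour at the tip, the observable is a power of the classical radial SLE${}_3$ partition function, and the resulting drift $3\partial_{a_1}\log Z$ produces the usual radial SLE${}_3$ cotangent drift. In the doubly-connected case the boundary value problem is solved using theta functions on the elliptic curve obtained by doubling the annulus, yielding the partition function (\ref{eq: pf_annulus_radial}). Applying It\^o's formula to $\log F^\delta$ restricted to the tip, and comparing the martingale part with the finite-variation part, extracts the SDE (\ref{eqn: annulus}) with exactly this partition function; tightness of the driving processes via the crossing estimates of \cite{CDH}, together with the uniqueness of strong solutions, then upgrades subsequential identification to local convergence in probability.

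The main obstacle is the analysis of the observable near the interior insertion $b^\delta$. Unlike the boundary insertions handled in Theorem~\ref{thm: conv_int}, a spinor branch point at an interior vertex cannot be controlled by boundary-regularity estimates; it requires the refined discrete harmonic measure and s-holomorphic singular-extension techniques developed for discrete spin correlations. A secondary but nontrivial issue is the bookkeeping of winding of the interface around $b$ (in the radial case) and around the inner boundary component (in the annulus case): both the partition function and the observable are naturally defined on covers of the corresponding moduli space, and one must verify that the martingale identification selects the correct branch at every step of the exploration.
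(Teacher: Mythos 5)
Your approach is essentially the one taken in the paper: the discrete martingale is the spinor observable $F_\Cvr\bigl(\Od\setminus(\{b^\delta\}\cup\gamma_{[0,n]}),\gamma_n,z\bigr)$ normalized by the partition function $Z\bigl(\Od\setminus(\{b^\delta\}\cup\gamma_{[0,n]}),\gamma_n,b^\delta\bigr)$, the interior branch-point convergence is supplied by \cite[Theorem~2.19]{CHHI} via Lemma~\ref{lem: conv_single_point}, non-vanishing of the normalizing factor is Lemma~\ref{lem: symmetry_bis}, the boundary value problem is solved explicitly by Weierstrass $\wp$-functions on the doubled annulus to give (\ref{eq: pf_annulus_radial}), and the machinery of Theorem~\ref{thm: conv_int} then extends. (The paper's closing coupling argument follows the LSW stopping-time/martingale-increment scheme rather than a tightness-plus-uniqueness route, but this is an interchangeable technicality.)
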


Assume that, in the setup of Theorem \ref{thm: conv_int}, every boundary component contains an \emph{even} number of marked points $a_1,\ldots,a_{2k}$. Then, given $S\in\Conf(\Od,a_1^\delta,\ldots,a_{2k}^\delta)$, one can assign spins $\pm 1$ to the faces of $\Od$ so that the edges of $S$ separate faces with different spins. This gives a measure-preserving mapping of the low-temperature expansion to the critical spin Ising model with \emph{locally monochromatic} boundary conditions, i. e. with the spins conditioned to be constant ($+1$ or $-1$) along each boundary arc and changing at the marked points. Perhaps, more natural boundary conditions in the spin Ising setup are when these spins are \emph{fixed}. Since the spins change at $a_i^\delta$, this amounts to prescribing a spin at one face of each boundary component. Our techniques also cover this case modulo a result that will appear in a forthcoming paper with D. Chelkak and C. Hongler. 

\begin{figure}[t]
\begin{center}
\includegraphics[width=0.7\textwidth]{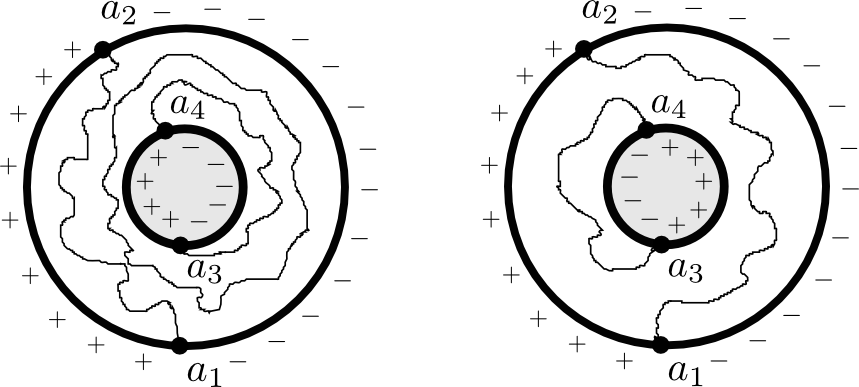}
\end{center}
\caption{\label{fig: spin} The fixed spin boundary conditions of Proposition \ref{prop: spin}. Both configurations contribute to $Z^\delta$; however, they correspond to $\tilde{Z}^\delta$ with boundary spins fixed in two different ways.}
\end{figure}

Let $Z^\delta$ and $\tilde{Z}^\delta$ denote the partition functions of the critical Ising model on the same domain $\Od$ with the same marked boundary edges $a^\delta_1,\dots,a^\delta_{2k}$, with locally monochromatic and fixed (in one of the possible ways) boundary conditions, respectively. Let us assume the following result to be true: the limit
$$
\Gamma(\Omega,a_1,\dots,a_{2k}):=\lim\limits_{\delta\to 0}\frac{\tilde{Z}^\delta}{{Z}^\delta}
$$
exists, is conformally invariant, depends Carath\'eodory continuously on $\Omega$ and smoothly on the positions of marked points, and all its derivatives are Lipschitz under $C_1$ perturbations of domains. We define 
$$
\tilde{Z}(\Omega,a_1,\ldots,a_{2k}):=Z(\Omega,a_1,\ldots,a_{2k})\Gamma(\Omega,a_1,\ldots,a_{2k}),
$$
where $Z$ is the SLE${}_3$ partition function defined in Section \ref{sec: intro_pf}. Then, we have the following proposition: 

\begin{proposition}
\label{prop: spin}
Conditionally on the above assumption, the driving process $a^\delta_1(t)$, defined as in Theorem \ref{thm: conv_int}, but this time for the spin Ising model with fixed boundary conditions, locally converges to the SLE${}_3$ with partition function $\tilde{Z}$.
\end{proposition}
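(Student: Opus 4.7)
The plan is a Girsanov-type argument built on top of Theorem~\ref{thm: conv_int}. The key observation is that the fixed-boundary-condition measure $\tilde{\P}^\delta$ is the locally-monochromatic measure $\P^\delta$ conditioned on a boundary event $X$: given $S\in\Conf(\Od,a_1^\delta,\ldots,a_{2k}^\delta)$, the spin assignments compatible with $S$ are parametrized by a $\pm 1$ label at one reference face per boundary component, and $X$ is the event that this label agrees with the prescribed fixed pattern. A direct count gives $\P^\delta[X]=\tilde Z^\delta/Z^\delta$. Since the interface $\gamma^\delta$ is a function of the edge set $S$ alone, the domain Markov property of the Ising model yields, for any stopping time $\tau$ of $\gamma^\delta$,
$$
\frac{d\tilde{\P}^\delta}{d\P^\delta}\bigg|_{\F_\tau}=\frac{\P^\delta\brb{X\mid \F_\tau}}{\P^\delta\brb{X}}=\frac{\tilde Z^\delta\br{\Od\backslash\gamma^\delta_{[0,\tau]},\gamma^\delta_\tau,a_2^\delta,\ldots,a_{2k}^\delta}\big/Z^\delta\br{\Od\backslash\gamma^\delta_{[0,\tau]},\gamma^\delta_\tau,a_2^\delta,\ldots,a_{2k}^\delta}}{\tilde Z^\delta(\Od,a_1^\delta,\ldots,a_{2k}^\delta)\big/Z^\delta(\Od,a_1^\delta,\ldots,a_{2k}^\delta)}.
$$

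By the standing assumption on $\Gamma$, the right-hand side converges as $\delta\to 0$ to $M_\tau:=\Gamma(\Omega\backslash\gamma_{[0,\tau]},\gamma_\tau,a_2,\ldots,a_{2k})/\Gamma(\Omega,a_1,\ldots,a_{2k})$. Being a Radon--Nikodym derivative, $M^\delta_t$ is automatically a positive $(\P^\delta,\F_t)$-martingale; combining the uniform convergence of $\gamma^\delta$ from Corollary~\ref{cor: stronger topology} with the Carath\'eodory continuity of $\Gamma$, I identify $M_t$ as a positive continuous martingale under the limit law of Theorem~\ref{thm: conv_int}, up to the hitting time $T_\crs$.

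Running the convergence of $a_1^\delta(t)$ under the tilted measure $\tilde{\P}^\delta$ and invoking Girsanov's theorem in the limit then gives
$$
a_1(t)=a_1(0)+\sqrt 3\,\tilde B_t+\int_0^t\brb{3\pa_{a_1}\log Z+3\pa_{a_1}\log \Gamma}\br{\Lambda_s,a_1(s),\ldots,a_{2k}(s)}\,ds = a_1(0)+\sqrt 3\,\tilde B_t+\int_0^t 3\pa_{a_1}\log\tilde Z\br{\Lambda_s,a_1(s),\ldots,a_{2k}(s)}\,ds,
$$
which is exactly the SLE${}_3$ driving equation with partition function $\tilde Z=Z\Gamma$. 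The Girsanov shift itself is obtained by It\^o-expanding $\log M_t$ along the Loewner flow: the $C_1$-Lipschitz regularity of the derivatives of $\Gamma$ makes this expansion legitimate, the martingale property of $M$ fixes the drift coefficient, and the covariation $d\langle \log M,\sqrt 3 B\rangle_t = 3\pa_{a_1}\log\Gamma\,dt$ is precisely the required drift shift.

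The main technical obstacle is the uniform control of $M^\delta$ up to $T_\crs\wedge T^\delta_\crs$, where the slit $\Od\backslash\gamma^\delta_{[0,\tau]}$ may become geometrically degenerate. The Carath\'eodory continuity, smoothness in marked points, and $C_1$-Lipschitz hypotheses on $\Gamma$ are tailored to this situation: they allow one to pass from the evident discrete martingale property of $M^\delta$ to an It\^o identity for its limit via tightness and a weak-limit argument in the spirit of Section~\ref{sec: proof}. Once this is established, the Girsanov step and the combination $3\pa_{a_1}\log Z+3\pa_{a_1}\log \Gamma = 3\pa_{a_1}\log\tilde Z$ are standard.
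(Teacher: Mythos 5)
Your route is genuinely different from the paper's: you work at the continuum level, identifying the Radon--Nikodym derivative $d\tilde{\P}^\delta/d\P^\delta\vert_{\F_\tau}$ with the ratio of partition functions, passing to its limit $M_t$, and then invoking Girsanov on top of the already-established convergence of Theorem \ref{thm: conv_int}. The drift computation $d\langle\log M,\sqrt 3 B\rangle_t=3\pa_{a_1}\log\Gamma\,dt$ is correct and gives the right answer $D+3\pa_{a_1}\log\Gamma=3\pa_{a_1}\log\tilde Z$. The paper instead stays at the discrete level throughout: it notes that
$$
\tilde M^\delta_{\gamma_{[0,n]}}(z)=M^\delta_{\gamma_{[0,n]}}(z)\cdot\frac{Z(\Od\setminus\gamma_{[0,n]},\gamma_n,\dots,a_{2k}^\delta)}{\tilde Z(\Od\setminus\gamma_{[0,n]},\gamma_n,\dots,a_{2k}^\delta)}=\sqrt{in_{a_{2k}}}\,\frac{F(\Od\setminus\gamma_{[0,n]},\gamma_n,\dots,z)}{\tilde Z(\Od\setminus\gamma_{[0,n]},\gamma_n,\dots,a_{2k}^\delta)}
$$
is a martingale under $\tilde\P^\delta$ (by the same telescoping computation as Proposition \ref{prop: mart}), converges to $M_{\gamma_{[0,t]}}(z)\,\Gamma^{-1}(\Omega\setminus\gamma_{[0,t]},\gamma_t,\dots,a_{2k})$, and then simply re-runs the entire argument of Section \ref{sec: proof} with $R_{\gamma_{[0,t]}}$ replaced by $R_{\gamma_{[0,t]}}\Gamma^{-1}$; since $D=-3R'/R$, the drift picks up exactly the term $3\pa_{a_1}\log\Gamma$.

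The concrete gap in your write-up is the step ``Running the convergence of $a_1^\delta(t)$ under the tilted measure $\tilde{\P}^\delta$ and invoking Girsanov's theorem in the limit.'' Theorem \ref{thm: conv_int} asserts a coupling/convergence-in-probability statement under $\P^\delta$, and you want to deduce the analogous statement under $\tilde\P^\delta$ from the convergence of the Radon--Nikodym derivatives. That transfer is not automatic: it needs uniform integrability of $d\tilde\P^\delta/d\P^\delta$ (to upgrade convergence of laws), tightness of $\theta^\delta$ under $\tilde\P^\delta$, and then an identification of the limit via a Novikov-type condition for the exponential martingale $M_t$ before Girsanov can be applied up to $T_\crs$. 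None of these are established, and your remark that the $C_1$-Lipschitz hypothesis on $\Gamma$ ``is tailored to this situation'' is aspirational rather than a proof. The paper's route avoids all of this precisely because the whole machinery of Section \ref{sec: proof} --- the stopping times $t_i$, the conditional moment estimates (\ref{eq: exp_Delta_1})--(\ref{eq: exp_Delta_2}), and the Skorokhod-type Brownian coupling --- is run directly with the $\tilde\P^\delta$-martingale $\tilde M^\delta$, so no comparison to the $\P^\delta$-law is needed. If you want to salvage the Girsanov route you would have to supply the missing tightness and uniform integrability estimates; otherwise, it is simpler to observe, as the paper does, that multiplication by the $z$-independent factor $\Gamma^{-1}$ replaces $R$ by $R\Gamma^{-1}$ and leaves the rest of the proof of Theorem \ref{thm: conv_int} untouched.
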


\subsection{Fermionic observables and the SLE$_{3}$ partition functions.}
\label{sec: intro_pf}
The SLE${}_3$ partition functions involved in the statements of the above results are defined in terms of the \emph{continuous spinor fermionic observables}. Let $\Cvr:\widetilde{\Omega}\to\Omega$ be a double cover. We say that a function $f:\widetilde{\Omega}\to\C$ is a $\Cvr$-\emph{spinor} if it satisfies $f(z)=-f(z^*)$ for fibers $z,z^*$ over the same point of $\Omega$. Given a domain $\Omega$ such that $\pa \Omega$ consist of finitely many smooth Jordan curves, equipped with a double cover $\Cvr$ and a point $a\in\pa\Omega$, we define the observable $f_\Cvr(\Omega,a,\cdot)$ to be the unique holomorphic $\Cvr$-spinor that is continuous up to $\pa\Omega\setminus\{a\}$ and solves the following Riemann boundary value problem: 
\begin{gather}
f_\Cvr(\Omega,a,z)\sqrt{i n_z}\in \R,\quad z\in \pa \Omega \setminus \{a\}, \label{eq: obs_on_bdry}\\
f_\Cvr(\Omega,a,z)= \frac{\sqrt{i n_a}}{z-a}+O(1),\quad z\to a, \label{eq: obs_at_a}
\end{gather}
where $n_z$ stands for the outer normal at the point $z$. It was shown in \cite[Section~3]{ChIz} that the solution to this boundary value problem exists and is unique. Note that if $\varphi:\Omega\to\Omega'$ is a conformal map, then 
\begin{equation}
f_\Cvr(\Omega,a,z)=|\varphi'(a)|^{\frac12}\varphi'(z)^{\frac12}f_{\Cvr'}\left(\Omega',\varphi(a),\varphi(z)\right),
\label{eq: conf_inv_f}
\end{equation}
since the right-hand side solves the above boundary value problem in $\Omega$. We use this formula to \emph{define} the observable $f_\Cvr(\Omega,a,z)$ whenever $\Omega$ is unbounded or has a non-smooth boundary away from $a$, in particular, if $\Omega=\Lambda$ is an almost-$\H$ domain. In the latter case, one has
\begin{equation}
f_{\Cvr}(\Lambda,a,z)=O\left(|z|^{-1}\right),\quad z\to \infty. 
\label{eq: f_at_infty}
\end{equation}
Alternatively, one could define $f_{\Cvr}(\Lambda,a,z)$ as the unique solution to the boundary value problem (\ref{eq: obs_on_bdry})--(\ref{eq: obs_at_a})  satisfying (\ref{eq: f_at_infty}). If the boundary of $\Omega$ is not smooth at $a$, then $f_\Cvr(\Omega,a,z)$ is well-defined (via (\ref{eq: conf_inv_f}) with $|\varphi'(a)|^{\frac12}$ dropped) up to a real non-zero factor. 

For a configuration $(\Omega,a_1,\ldots,a_{2k})$, define $\Cvr_Z$ to be the double cover that branches around the boundary components containing an \emph{odd} number of marked points $a_1,\ldots,a_{2k}$. The SLE${}_3$ partition function is defined by the following Pfaffian formula:
\begin{equation}
Z(\Omega,a_1,\dots,a_{2k}):=\mathrm{Pf} \left[\sqrt{in_{a_r}}f_{\Cvr_Z}(\Omega,a_m,a_r)\right]_{1\leq m,r\leq 2k}.
\label{eq: defZ}
\end{equation}
It follows from (\ref{eq: conf_inv_f}) and the homogeneity property of the Pfaffian that under conformal maps one has 
\begin{equation}
Z(\Omega,a_1,\dots,a_{2k})=\prod_{i=1}^{2k}|\varphi'(a_i)|^{\frac12}\cdot Z\left(\varphi(\Omega),\varphi(a_1),\dots,\varphi(a_{2k})\right).
\label{eq: conf_cov_Z}
\end{equation}

In the SLE literature, a multiple SLE${}_\kappa$ partition functions is defined as a quantity attached to each (simply or finitely connected) domain with a smooth boundary and obeying the conformal covariance property (\ref{eq: conf_cov_Z}), the exponent $\frac12$ generally being $\frac{6-\kappa}{2\kappa}$. It is then postulated that 
\begin{equation}
\frac{Z(\Omega\setminus \gamma_{[0,t]},\gamma_t,a_2,\ldots,a_{2k})}{Z(\Omega'\setminus \gamma_{[0,t]},\gamma_t,a_{2k+1})}
\label{eq: part_function_SLE} 
\end{equation}
is the Radon-Nikodim derivative of the law of the curve $\gamma_t$ in the multiple SLE with respect to the chordal SLE in the \emph{simply-connected} domain $\Omega'$ (obtained from $\Omega$ by filling the holes) from $a_1$ to an auxiliary point $a_{2k+1}$. Note that due to the cancellation of the factors $|\varphi'(a_1)|^{\frac{6-\kappa}{2\kappa}}$, this ratio is well defined even though the boundary is not smooth at~$\gamma_t$. For the above scheme to be consistent, the ratio (\ref{eq: part_function_SLE}) must be a chordal SLE martingale. We warn the reader that this in not necessarily the case in our notation; one might need an additional factor independent of $a_1$. Such a factor does not change the logarithmic derivative and hence it is irrelevant for the statements of our results; therefore, we prefer not to keep track of it.

Our analysis bypasses a direct comparison with the chordal SLE${}_3$. Instead, the SLE${}_3$ partition functions appear as normalizing factors for the continuous (multi-point) martingale observables, which in turn are scaling limits of discrete fermionic observables. They are directly related to the underlying Ising model, namely, the ratios
$$
\frac{Z(\Omega,a_1,a_2,\dots,a_{2k})}{Z(\Omega,a'_1,a'_2\dots,a'_{2k})}
$$
are in fact limits of the ratios of the Ising model partition functions with corresponding boundary conditions (under the regularity assumption on the boundaries of the approximations $\Od$ to $\Omega$ at $a_1,a_1'\ldots,a_{2k},a'_{2k}$, see \cite{ChIz} for further details). 

Let us specialize the formula (\ref{eq: defZ}) to a few particular cases. In the simply-connected case, the only double cover is trivial, and it is easy to see that the function $\frac{1}{z-a}$ satisfies all the conditions of $f(\H,a,z)$ above. Therefore, one has explicitly
\begin{equation}
Z(\H,a_1,\dots,a_{2k})=\mathrm{Pf}\left[(a_m-a_r)^{-1}\right].\label{eq: Z_simply_conn}
\end{equation}

In the case of an annulus $\A_p=\{e^{-p}\leq |z|\leq 1\}$, $0<p<\infty$, it is convenient to write the partition function in terms of the arguments $\theta_{1},\ldots,\theta_{2k}$ of $a_1,\ldots,a_{2k}\in\pa\A_p$. The partition function for any doubly-connected smooth domain $\Omega$ can then be obtained by (\ref{eq: conf_cov_Z}). Denote  $\zeta_{mr}=0$ if $a_m$ and $a_r$ belong to the same boundary component, and $\zeta_{mr}=1$ otherwise. Writing explicitly the solution to the system (\ref{eq: obs_on_bdry})--(\ref{eq: obs_at_a}) in the annulus $\A_p$ (see Section \ref{DP_explicit}), we obtain 
\begin{eqnarray}
 Z(\A_p,\theta_1,\dots,\theta_{2k})=\mathrm{Pf}\left[Z^{\zeta_{rm}}_\text{odd}(p,\theta_m,\theta_r)\right],& \text{or}  \label{eq: pf_ann_1}\\
 Z(\A_p,\theta_1,\dots,\theta_{2k})=\mathrm{Pf}\left[Z^{\zeta_{rm}}_\text{even}(p,\theta_m,\theta_r)\right],\label{eq: pf_ann_2}
\end{eqnarray}
depending on whether the number of marked points on each boundary component is odd or even. Here
\begin{eqnarray*}
 Z^{0}_\text{odd}\left(p,\theta_m,\theta_r\right)=cs\left(\theta_r-\theta_m|\pi,p\right);& Z^{1}_\text{odd}(p,\theta_m,\theta_r)=dn(\theta_r-\theta_m|\pi,p);\\
 Z^{0}_\text{even}(p,\theta_m,\theta_r)=ds(\theta_r-\theta_m|\pi,p);& Z^{1}_\text{even}(p,\theta_m,\theta_r)=cn(\theta_r-\theta_m|\pi,p).
\end{eqnarray*}
The functions in the right-hand side are the Jacobian elliptic functions \cite{Spec_functions} with quarter-periods $K=\pi$ and $iK'=ip$. In the classical notation, $cs(z|\pi,p)=cs(\alpha(p)z;k(p))$ with the scaling factor $\alpha(p)$ and the modulus $k(p)$ given in terms of Jacobian theta constants as $\alpha(p)=\frac12\theta_2^2(0,e^{-p})$ and $k(p)=\theta_2^2(0,e^{-p})\theta_3^{-2}(0,e^{-p})$, and similarly for $dn,cn,ds$.
Finally, we give a formula for the partition function for the case of the annulus $\A^p$ with a marked point $a=e^{i\theta_a}$ on the boundary and $b=e^{-\rho_b+i\theta_b}$ inside:
\begin{equation}
Z_e(p,\theta_a,\theta_b,\rho_b)= \left(\frac{-i\wp'(\pi+(\rho_b-p) i| 2\pi; 2pi)}{\wp(\theta_b-\theta_a-\pi-ip| 2\pi, 2pi)-\wp(\pi+(\rho_b-p) i | 2\pi, 2pi)}\right)^{\frac12},
\label{eq: pf_annulus_radial}
\end{equation}
where $\wp$ is the Weierstrass elliptic $\wp$-function. See Section \ref{DP_explicit} for the proof.
 
\subsection{Remarks on the proof and related results.}

The main ingredient of our proof is a computation of the scaling limit of the martingale observable carried out in \cite{ChIz}. This observable generalizes one in \cite{CHS3}; we stress that in general, to ensure the martingale property, it is necessary to use the spinor version introduced in \cite{ChIz}. The rest of the proof closely follows one implemented by D. Zhan \cite{Zhan-LERW} for the convergence loop-erased random walks in multiply-connected domains. It is fair to say that since that work (and the earlier work regarding simply-connected domains \cite{LSW}) it is known that the convergence of holomorphic or harmonic martingale observable implies the convergence of driving forces of interfaces. In the most general situation of multiply-connected domains, explicit expressions for the observable are usually unavailable or hard to deal with;  however, to identify the scaling limits of driving forces, it suffices to know the conformal covariance properties of the observable and its behavior near the 
marked 
point $a_1$.  The proof is rather technical and involves a lot of details, mostly regarding the regularity properties of the observables and variational formulae. An important trick, also 
employed in \cite{Zhan-LERW}, is the use of compactness arguments with respect to Carath\'eodory topology in order to get uniform estimates from pointwise ones. 

The explicit description of the scaling limits of multiple Ising interfaces in simply-connected domains with four boundary points obtained in the present paper was conjectured in \cite{BBK05}. Standard SLE techniques together with a priori Ising model estimates allow one to prove the formula for the scaling limit of relative probabilities of two possible types of connection, also settling the conjecture from \cite[Section 8.3]{BBK05}, see further details in~\cite{Disser}. In the case of doubly-connected domains with two marked points on the boundary, the corresponding driving processes first appeared in \cite{Zhan-WP} as examples of reversible annulus SLE${}_3$. 

\subsection{Acknowledgements}
The author is grateful to the referee for valuable remarks. Work supported by Academy of Finland.

\section{Spinor observables and martingale property}
\setcounter{equation}{0}

\label{Sec: Obs_sc_Limits}

\subsection{Observables and their convergence}

\label{Sec: notations}
In this section we introduce the notation and recall the definitions and results from \cite{ChIz} concerning convergence of fermionic observables. By a \emph{discrete domain} $\Od$ we mean a connected finite subset of faces of $\mathbb{Z}^2$, together with all their incident vertices and edges. By \emph{boundary edges} of $\Od$ we mean ones that are incident to a vertex of $\Od$, but not to a face in $\Od$. If such an edge is incident to two different vertices in $\Od$, we naturally consider it as two distinct edges. Edges incident to a face in $\Od$ are called inner.

We will work with the low-temperature contour representation of the critical Ising model in
$\Od$ (see~\cite{Palmer07}). We denote by $\Conf(\Od)$ the set of all subsets $S$ of its inner edges, such that each vertex in $\Od$ is incident to $0,2$ or $4$ edges $S$. Further, given a collection $a_1^\delta,\ldots,a_{2k}^\delta$ of boundary edges of $\Od$, we connect them by $k$ arbitrary paths $\gamma_1,\dots,\gamma_k$ running along the edges of $\Od$, and define  
$$
\Conf(\Od,a_1^\delta,\ldots,a^\delta_{2k}) = \{S\bigtriangleup\gamma_1\bigtriangleup\dots\bigtriangleup\gamma_k:S\in\Conf(\Od)\},
$$
where $\bigtriangleup$ stands for the symmetric difference. It is easy to see that this set does not depend on the choice of $\gamma_1,\ldots,\gamma_k$, and that every $S\in \Conf(\Od,a_1^\delta,\ldots,a^\delta_{2k})$ can be decomposed into a collection of loops and $k$ paths connecting the marked points, such that none of those have transversal intersections or self-intersections, see Figure \ref{fig: Conf}. We will use the same notation when $a_1^\delta,\ldots,a^\delta_{2k}$ are edges on a double cover of $\Od$, identifying them with cossesponding edges in the base.

The \emph{spinor observable} (see \cite{ChIz}) is a function of a boundary edge $a^\delta$ and an inner or boundary edge $z$ of the double cover $\Cvr:\widetilde{\Omega}^\delta\to\Od$, defined by
\begin{equation}
\label{Observable} F_\Cvr(\Od,a^\delta,z):=i(in_{a_1^\delta})^{-\frac{1}{2}}\sum_{S\in \Conf(\Od,a^\delta,z)} W_\Cvr(a^\delta,z,S) x^{|S|}.
\end{equation}
where $W_\Cvr(a^\delta,z,S):= e^{-\frac{i}{2}\wind(\gamma)}\cdot (-1)^{l(S)+\mathbf{1}_{\gamma_{a^\delta\to z}}}$, and 
\begin{itemize}
 \item $\wind(\gamma)$ is the winding of the curve $\gamma\subset S$ connecting $a^\delta$ to $z$;
 \item $l(z)$ is the number of ``non-trivial'' loops in $S$ (i. e. those that do not lift as loops to the double cover $\widetilde{\Omega}^\delta$);
 \item $\mathbf{1}_{\gamma_{a^\delta\to z}}$ is the indicator of the event that $\gamma$ lifts as a path from $a^\delta$ to $z$ on the double cover;
 \item the outer normal $n_{a^\delta}$ is defined to be the edge $a^\delta$, viewed as a complex number, oriented outwards and normalized to have absolute value 1. 
\end{itemize}
For a domain with $2k-1$ marked points $a^\delta_1,\dots,a^\delta_{2k-1}$ on the boundary, the observable is defined by a similar formula 
\begin{equation}
\label{eq: Obs_many} F_\Cvr(\Od,a^\delta_1,\dots,a^\delta_{2k-1},z):=i(in_{a_1^\delta})^{-\frac{1}{2}}\sum_{S\in \Conf(\Od,a_1^\delta,\ldots,z)} W_\Cvr(a^\delta_1,\ldots,z,S) x^{|S|},
\end{equation}
where the winding for the curve $\gamma$ from $a_1$ to $z$ is defined by adding ``artificial arcs'' connecting $a^\delta_2$ to $a^\delta_3$, $a^\delta_4$ to $a^\delta_5$ etc., so that $\gamma$ really becomes a curve from $a^\delta_1$ to $z$  (see \cite[Section 5]{ChIz}). An alternative representation, essentially due to Hongler \cite{HThesis},  expresses this observable in terms of the basic one \cite[Proposition 5.8]{ChIz}: 
\begin{equation}
\label{eq: obs_pfaff}
 F_\Cvr(\Od,a^\delta_1,\dots,a^\delta_{2k-1},z)= \pm\sum\limits_{s=1}^{2k-1} (-1)^{s}\,\mathrm{Pf}\,P_\Cvr[s]\cdot F_\Cvr(\Od,a^\delta_{s},z).
\end{equation}
Here $\mathrm{Pf}\,P_\Cvr[s]$ denotes the Pfaffian of the real antisymmetric matrix   
\begin{equation}
\label{eq: def_G}
(P_\Cvr)_{m,r}:= \frac{F_\Cvr(\Od,a_m,a_{r})}{iF_\Cvr(\Od,a_r,a_r)} = - \frac{F_\Cvr(\Od,a_r,a_{m})}{iF_\Cvr(\Od,a_m,a_m)}=:-(P_\Cvr)_{r,m},\quad 1\leq m,r\leq 2k-1,
\end{equation}
from which $s$-th row and column are removed.

A crucial role in our analysis is played by the observables (\ref{eq: Obs_many}) with the double cover $\Cvr=\Cvr_Z$ defined in Section \ref{sec: intro_pf}. It was shown in \cite[Proposition 5.6]{ChIz} that 
\begin{equation}
\label{eq: obs_Z}
 F_{\Cvr_Z}(\Od,a^\delta_1,\dots,a^\delta_{2k-1},a^\delta_{2k})= (in_{a^\delta_{2k}})^{-\frac{1}{2}} Z(\Od,a^\delta_1,\dots,a^\delta_{2k}).
\end{equation} 
This equality will be essential for the martingale property of the observable. For the rest of the paper, we set
$$
F:=F_{\Cvr_Z}
$$
and will usually omit the double cover from the notation.

The main result in \cite{ChIz} is the following theorem relating the discrete fermionic observables to the continuous ones introduced in Section \ref{sec: intro_pf}.
\begin{theorem}{\cite[Theorem 3.13 and Lemma 4.8]{ChIz}} 
\label{thm: mainconv_int_1}
 Suppose that $\Od\stackrel{\text{Cara}}{\longrightarrow}\Omega$ as $\delta\to 0$, and that $a^\mesh\in \partial \Od$ tends to $a\in\pa \Omega$. Then, there exists a sequence of normalizing factors $\beta(\delta)=\beta(\delta,\Od,a^\delta,\Cvr)$ such that
$$
\beta(\delta) F_\Cvr(\Od,a^\delta,\cdot)\to f_\Cvr(\Omega,a,\cdot),\quad \delta\to 0,
$$
uniformly on compact subsets of $\Omega$. Moreover if $b^\delta\in\partial\Od$ tends to $b\in \Od$ and the convergence of domains is regular at $b$ (see \cite[Definition 3.14]{ChIz}), then $\beta(\delta)F_\Cvr(\Od,a^\delta,b^\delta)\to f_\Cvr(\Omega,a,b)$.
\end{theorem}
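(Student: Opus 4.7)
My plan follows the strategy developed by Smirnov and Chelkak--Smirnov for convergence of fermionic Ising observables, adapted here to the spinor setup. The starting point is the \emph{discrete s-holomorphicity} of $F_\Cvr(\Od,a^\delta,\cdot)$ on the double cover $\dOd$: at the critical weight $x_{\mathrm{crit}}=\sqrt{2}-1$, the two orthogonal projections of $F$ attached to a given edge have equal values at its two endpoints. The identity is checked by pairing configurations in \eqref{Observable} that differ only by a local move near a vertex; the phase $e^{-\frac{i}{2}\wind(\gamma)}$ together with the critical weight makes the contributions of neighbouring configurations combine in the right way. S-holomorphicity then lets one construct a real-valued discrete primitive $H^\delta$ of $\Im(F_\Cvr^2\, dz)$ on the faces of $\Od$, which is discrete sub- or super-harmonic on the two bipartite face classes, and the Riemann-type condition built into the definition of $F$ translates into $H^\delta$ being locally constant on each connected component of $\pa \Od\setminus\{a^\delta\}$.

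The second step is to choose the normalization $\beta(\delta)$ so that $\beta(\delta) F_\Cvr(\Od,a^\delta,\cdot)$ is bounded on some fixed macroscopic annulus around $a$ — for instance, by normalizing a contour integral of $H^\delta$ around $a^\delta$ to be of unit order. Beurling-type estimates applied to $H^\delta$ on the double cover yield uniform boundedness and equicontinuity of $\beta(\delta) F_\Cvr$ on each compact subset of $\Omega\setminus\{a\}$; Arzel\`a--Ascoli extracts subsequential limits, which by passing s-holomorphicity to the scaling limit are holomorphic $\Cvr$-spinors on $\dOm$. The Riemann boundary condition \eqref{eq: obs_on_bdry} passes to the limit along $\pa\Omega$ away from $a$ via Carath\'eodory convergence of the boundary, and the prescribed singularity \eqref{eq: obs_at_a} is pinned down by the explicit pre-factor $i(in_{a^\delta})^{-\frac12}$ in \eqref{Observable} together with the calibration of $\beta(\delta)$. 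Uniqueness of the boundary value problem \eqref{eq: obs_on_bdry}--\eqref{eq: obs_at_a}, established in \cite{ChIz}, then identifies every subsequential limit with $f_\Cvr(\Omega,a,\cdot)$ and upgrades subsequential convergence to full convergence.

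For the pointwise convergence at a regular boundary point $b^\delta\to b$, the assumption that $\pa\Od$ near $b$ consists of straight horizontal/vertical segments allows one to reflect $F_\Cvr$ across the boundary, using the Riemann condition, to produce a bounded discrete s-holomorphic function on a slightly enlarged discrete domain containing $b^\delta$ in its interior; the already-established interior convergence then applies at $b^\delta$. The principal obstacle throughout is the \emph{spinor} aspect: both the Beurling-type estimates and the reflection across the straight boundary have to be performed on the double cover with the correct monodromy, so that a priori estimates for branched discrete harmonic functions on $\dOd$ — rather than their single-valued counterparts — are required. Once this branched discrete potential theory is in place, the remaining pieces of the argument are a fairly standard application of discrete complex analysis to pass from local combinatorial identities to a limiting holomorphic boundary value problem.
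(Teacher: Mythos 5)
This statement is not proved in the present paper at all: it is imported verbatim (with attribution) from \cite{ChIz}, where it appears as Theorem~3.13 and Lemma~4.8. So there is no ``paper's own proof'' against which to compare; the relevant benchmark is the proof in \cite{ChIz}, and your sketch is in fact a faithful high-level reconstruction of that argument. You correctly identify the core pipeline: discrete s-holomorphicity of the spinor observable at $x=\sqrt{2}-1$, construction of the sub-/super-harmonic primitive $H^\delta=\Im\int(F_\Cvr)^2\,dz$ with locally constant boundary values, a priori (Harnack- and Beurling-type) estimates on $H^\delta$ leading to precompactness, identification of any subsequential limit via uniqueness of the Riemann boundary value problem \eqref{eq: obs_on_bdry}--\eqref{eq: obs_at_a}, and finally the reflection argument at a straight boundary segment to promote interior convergence to convergence at the regular boundary point $b^\delta$. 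You also rightly flag the one genuinely new technical ingredient relative to \cite{CHS2}/\cite{Smirnov10}, namely that all of the discrete complex-analytic toolbox must be run on the double cover with the prescribed branching, so that the potential theory is for branched discrete harmonic functions.

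One small imprecision: the phrase ``normalizing a contour integral of $H^\delta$ around $a^\delta$ to be of unit order'' does not quite work as written. Near $a$ one has $F^2\sim in_a/(z-a)^2$, which has vanishing residue, so $H$ is single-valued and its net flux through a small contour around $a$ vanishes; there is no natural nonzero contour integral to calibrate. In \cite{ChIz}, as in \cite{CHS2}, the normalization is instead obtained by comparison with a reference observable (equivalently, by fixing the value of $\beta(\delta)F_\Cvr$ or of $H^\delta$ at a fixed interior point, or by taking the sup norm on a fixed compact subset to be one), after which the residue condition \eqref{eq: obs_at_a} is recovered in the limit. Since the theorem only asserts the \emph{existence} of a normalizing sequence, and the normalization cancels in all ratios used downstream (see the proof of Corollary~\ref{cor: conv_obs}), this imprecision is not structurally damaging, but the mechanism you propose for pinning down $\beta(\delta)$ should be replaced by one of these standard alternatives.
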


The \emph{continuous multi-point observable} is defined as the linear combination
\begin{equation}
f_\Cvr(\Lambda,a_1,\dots,a_{2k-1},z):=\sum\limits_{s=1}^{2k-1} P_s f_\Cvr(\Lambda,a_s,z)\label{eq: obs_tilda}
\end{equation}
with the real coefficients $P_s=P^\Cvr_{s}(\Lambda,a_1,\ldots,\hat{a}_s,\ldots,a_{2k-1})$ given by the Pfaffians
$$
P_s:=(-1)^{s+1} \mathrm{Pf} \left[\sqrt{in_{a_r}}f_\Cvr(\Lambda,a_m,a_r)\right]_{1\leq m,r\neq s\leq 2k-1}.
$$

We then have the following corollary: 
\begin{corollary}
\label{cor: conv_obs}
 If $\Od\stackrel{\text{Cara}}{\longrightarrow}\Omega$ regularly at $a_2,\dots,a_{2k}$ as $\delta\to 0$, then there exist normalizing factors $\beta(\delta)=\beta(\delta,\Od,\{a_j^\delta\},\Cvr)$ such that 
$$
\beta(\delta)F_\Cvr(\Od,a_1^\delta,\dots a^\delta_{2k-1},\cdot)\to f_\Cvr(\Omega,a_1,\dots,a_{2k-1},\cdot)
$$
uniformly on compact subsets of $\Omega$, and 
$$
\beta(\delta)F_\Cvr(\Od,a_1^\delta,\dots a^\delta_{2k-1},a^\delta_{2k})\to f_\Cvr(\Omega,a_1,\dots,a_{2k-1},a_{2k}).
$$
\end{corollary}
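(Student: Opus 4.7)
The plan is to exploit the Pfaffian identity (\ref{eq: obs_pfaff}), which expresses
$$
F_\Cvr(\Od, a_1^\delta, \ldots, a_{2k-1}^\delta, z) = \pm \sum_{s=1}^{2k-1} (-1)^s \mathrm{Pf}\,P_\Cvr[s] \cdot F_\Cvr(\Od, a_s^\delta, z)
$$
as an explicit finite linear combination of single-point observables with coefficients built from the matrix $P_\Cvr$ of values at the marked points. The definition (\ref{eq: obs_tilda}) of the continuous multi-point observable has the identical algebraic structure, with coefficients $P_s$ assembled as Pfaffians of $\left[\sqrt{in_{a_r}} f_\Cvr(\Omega, a_m, a_r)\right]_{m, r \neq s}$. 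So my strategy is to apply Theorem \ref{thm: mainconv_int_1} term by term and match the two expansions.

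First I would invoke Theorem \ref{thm: mainconv_int_1} separately for each source: there exist $\beta_s(\delta)$ (for $s = 1, \ldots, 2k-1$) with $\beta_s(\delta) F_\Cvr(\Od, a_s^\delta, \cdot) \to f_\Cvr(\Omega, a_s, \cdot)$ uniformly on compacts of $\Omega$, and, using the regularity hypothesis at $a_2, \ldots, a_{2k}$, also $\beta_s(\delta) F_\Cvr(\Od, a_s^\delta, a_r^\delta) \to f_\Cvr(\Omega, a_s, a_r)$ for $r \neq s$. I would then set $\beta(\delta) := \beta_1(\delta)$ and regroup each summand as
$$
\beta_1(\delta) \mathrm{Pf}\,P_\Cvr[s] \cdot F_\Cvr(\Od, a_s^\delta, z) = \bigl(\beta_1(\delta) \beta_s(\delta)^{-1} \mathrm{Pf}\,P_\Cvr[s]\bigr) \cdot \bigl(\beta_s(\delta) F_\Cvr(\Od, a_s^\delta, z)\bigr).
$$
The second factor converges to $f_\Cvr(\Omega, a_s, z)$ by Theorem \ref{thm: mainconv_int_1}. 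For the first factor, I would expand the Pfaffian over perfect matchings of $\{1, \ldots, 2k-1\} \setminus \{s\}$ and use both equivalent forms in (\ref{eq: def_G}) to rewrite each product so that every $F_\Cvr(\Od, a_m^\delta, a_r^\delta)$ appears paired with the matching factor $\beta_m(\delta)$; the remaining self-values $F_\Cvr(\Od, a_r^\delta, a_r^\delta)$ in the denominators absorb the ratios $\beta_s/\beta_1$ and yield in the limit (up to a sign) the continuous coefficient $P_s$. Summing over $s$ recovers (\ref{eq: obs_tilda}). The boundary statement at $z = a_{2k}^\delta$ follows from the same computation, using the regular convergence at $a_{2k}$ to invoke the boundary version of Theorem \ref{thm: mainconv_int_1}.

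The principal obstacle is the last step: Theorem \ref{thm: mainconv_int_1} does not directly describe how the self-value $F_\Cvr(\Od, a_r^\delta, a_r^\delta)$ scales with $\delta$, so one cannot simply pass to the limit in the matrix entries $(P_\Cvr)_{m,r}$ one at a time. The cancellation of the $\beta$-ratios in the Pfaffian coefficient therefore relies on the antisymmetry of $P_\Cvr$ recorded in (\ref{eq: def_G}), which links the asymptotics of the self-values to those of the off-diagonal values where Theorem \ref{thm: mainconv_int_1} does apply. This algebraic identity mirrors the antisymmetry of the continuous matrix $\left[\sqrt{in_{a_r}} f_\Cvr(\Omega, a_m, a_r)\right]$ underlying (\ref{eq: defZ}), and is the essential ingredient that makes the whole procedure consistent.
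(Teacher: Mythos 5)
Your plan is the same as the paper's: expand via (\ref{eq: obs_pfaff}), apply Theorem \ref{thm: mainconv_int_1} term by term, and match with (\ref{eq: obs_tilda}). Where you diverge is in the choice of normalizing factor and in the handling of the coefficient convergence, and both points leave a gap.

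You set $\beta(\delta):=\beta_1(\delta)$ and then need $\beta_1(\delta)\beta_s(\delta)^{-1}\,\mathrm{Pf}\,P_\Cvr[s]$ to converge; this would require both that $\beta_s/\beta_1\to 1$ (which Theorem \ref{thm: mainconv_int_1} does not give you directly) and that $\mathrm{Pf}\,P_\Cvr[s]$ itself converges, which again is not an immediate consequence of the theorem since each matrix entry $(P_\Cvr)_{m,r}=F_\Cvr(\Od,a_m,a_r)/(iF_\Cvr(\Od,a_r,a_r))$ contains an uncontrolled self-value in the denominator. Your proposed fix is to say these self-values ``absorb the ratios $\beta_s/\beta_1$'', but that is not what happens and is not a statement one can act on. The paper's route is both simpler and more precise: one observes that the scalar $\bigl(i\sqrt{in_{a_r^\delta}}F_\Cvr(\Od,a_r,a_r)\bigr)^{-1}$ is a real nonzero number \emph{independent of $r$} (this is exactly what the two equivalent expressions in (\ref{eq: def_G}) plus the discrete antisymmetry $\sqrt{in_{a_r}}F_\Cvr(\Od,a_m,a_r)=-\sqrt{in_{a_m}}F_\Cvr(\Od,a_r,a_m)$ force). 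This single scalar can therefore be pulled out of the whole Pfaffian at once, and the paper takes
$$
\beta(\delta):=\prod_{s=1}^{2k-1}\beta_s(\delta)\cdot\bigl(i\sqrt{in_{a_1^\delta}}F(\Od,a_1,a_1)\bigr)^{2k-2},
$$
so that multilinearity of the Pfaffian in its rows and columns distributes the $\beta_s(\delta)$ factors onto the off-diagonal entries, each of which then converges by Theorem \ref{thm: mainconv_int_1}. You gesture towards the antisymmetry ``linking'' the self-values to the off-diagonal entries, but you never extract the decisive consequence --- that the self-value factor is constant in $r$ --- nor do you build it into $\beta(\delta)$. As stated, your regrouping into perfect matchings assigns $\beta_m$'s inconsistently across different matchings (each index appears once per matching, but you need one $\beta$ per \emph{entry}), and the convergence of the coefficient is not actually established. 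Adopting the paper's $\beta(\delta)$ and stating the independence-of-$r$ fact explicitly would close the gap.
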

\begin{proof}
 We use the expansion \ref{eq: obs_pfaff}; observe that 
$$
(P_\Cvr)_{m,r}= \left(i\sqrt{in_{a^\delta_r}}F_\Cvr(\Od,a_r,a_r)\right)^{-1}i\sqrt{in_{a^\delta_r}}F_\Cvr(\Od,a_m,a_r)\;
$$
the real non-zero factor $(i\sqrt{in_{a^\delta_r}}F_\Cvr(\Od,a_r,a_r))^{-1}$ in fact does not depend on $r$. Thus, if one takes 
$$
\beta(\delta):=\prod\limits_{s=1}^{2k-1}\beta_s(\delta) \cdot \left(i\sqrt{in_{a^\delta_1}}F(\Od,a_1,a_1)\right)^{2k-2},
$$ 
where $\beta_s(\delta)$ is the factor from Theorem \ref{thm: mainconv_int_1} applied to $F_\Cvr(\Od,a_s,\cdot)$, then linearity of the Pfaffian guarantees the result.
\end{proof}

\subsection{Domain slit by the interface and the martingale property}
\label{sec: mart}
Here we invoke the martingale property of the observable. The computation is based on (\ref{eq: obs_Z}) and is essentially the same as \cite[Remark 2.4]{CHS2} and \cite[Proposition~9]{HonKyt}. For the sake of completeness, we include the proof.

Suppose that a domain $\Od$, the marked points $a^\delta_1,\ldots,a_{2k}^\delta\subset\partial\Od$ and a configuration $S\in \Conf(\Od, a^\delta_1,\dots,a^\delta_{2k})$ are given. Denote by $\gamma$ the curve in the decomposition of $S$ that starts at $a^\delta_1$, and by $\gamma_{[0,n]}$ its initial segment containing $n$ edges $a^\delta_1=\gamma_0,\dots,\gamma_n$. We assume any deterministic or random rule to resolve ambiguities arising in the decomposition of $S$. 

\begin{proposition}
\label{prop: mart}
 Let $\mathfrak{F}_n$ denote the filtration on the space of all configurations induced by the initial segments $\gamma_{[0,n]}$ of the curve $\gamma$. Then, for any edge $z\in \Od$, the process
\begin{equation}
\label{Mart}
M^\delta_{\gamma_{[0,n]}}(z):=
\frac{F_{\Cvr_Z}(\Od\setminus \gamma_{[0,n]},\gamma_n,a^\delta_2,\dots,a^\delta_{2k-1},z)}{F_{\Cvr_Z}(\Od\setminus \gamma_{[0,n]},\gamma_n,a^\delta_2,\dots,a^\delta_{2k})},
\end{equation}
stopped at the moment it becomes non-defined (that is, the interface hits or separates one of the boundary components different from $\partial_{a_1}$ or marked points), is a martingale with respect to $\mathfrak{F}_n$.
\end{proposition}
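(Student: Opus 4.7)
The plan is to reduce the martingale property to the one-step identity $\E[M^\delta_{\gamma_{[0,n+1]}}(z)\mid\mathfrak{F}_n]=M^\delta_{\gamma_{[0,n]}}(z)$, valid on the event that both sides are defined, and then invoke the tower property together with optional stopping at the hitting time in the statement. This follows the template of \cite[Remark 2.4]{CHS2} and \cite[Proposition 9]{HonKyt}, transported to the multi-point spinor observable setting of Section \ref{Sec: notations}.

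The first ingredient is the domain Markov property of the low-temperature expansion: conditionally on $\gamma_{[0,n]}$, the edges of $S$ lying in the slit domain are distributed as the critical measure on $\Conf(\Od\setminus\gamma_{[0,n]},\gamma_n,a^\delta_2,\ldots,a^\delta_{2k})$, so the conditional probability that $\gamma$ extends by a specific next edge $e$ is
\begin{equation*}
\P(\gamma_{n+1}\mid\mathfrak{F}_n)\;=\;\frac{x\,Z(\Od\setminus\gamma_{[0,n+1]},\gamma_{n+1},a^\delta_2,\ldots,a^\delta_{2k})}{Z(\Od\setminus\gamma_{[0,n]},\gamma_n,a^\delta_2,\ldots,a^\delta_{2k})}.
\end{equation*}
By (\ref{eq: obs_Z}) the two partition functions are, up to the common factor $(in_{a^\delta_{2k}})^{1/2}$, exactly the denominators of $M^\delta_{\gamma_{[0,n+1]}}(z)$ and $M^\delta_{\gamma_{[0,n]}}(z)$. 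The second ingredient is the matching combinatorial identity for the numerator,
\begin{equation*}
F(\Od\setminus\gamma_{[0,n]},\gamma_n,a^\delta_2,\ldots,a^\delta_{2k-1},z)\;=\;\sum_{e} x\cdot F(\Od\setminus\gamma_{[0,n+1]},\gamma_{n+1},a^\delta_2,\ldots,a^\delta_{2k-1},z),
\end{equation*}
obtained by partitioning the sum (\ref{eq: Obs_many}) according to the first new edge $e$ of the curve starting at $\gamma_n$ (using the same rule for resolving ambiguities). Each summand in this decomposition picks up the normalisation ratio $(in_{\gamma_{n+1}}/in_{\gamma_n})^{1/2}$, the winding factor $e^{-i\wind(e)/2}$, and a $\pm 1$ from the change in the loop count and in the spinor-branch indicator; the crucial claim is that these multiply to exactly $1$ for every admissible $e$, leaving only the factor $x$ for the new edge. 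Granted this identity, specialising it to $z=a^\delta_{2k}$ reproduces the Markov one-step relation for the partition function via (\ref{eq: obs_Z}), and dividing the two versions of the identity yields $\E[M^\delta_{\gamma_{[0,n+1]}}(z)\mid\mathfrak{F}_n]=M^\delta_{\gamma_{[0,n]}}(z)$ by direct calculation.

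The main obstacle will be the sign and double-cover bookkeeping needed to establish that the one-edge phases collapse to unity. A short calculation shows that the winding factor and the normalisation ratio cancel modulo a sign, since by definition $e^{-i\wind(e)/2}$ encodes the rotation from the inward direction at $\gamma_n$ to the tangent of $e$, while the outer normal of the new boundary edge $\gamma_{n+1}$ points against that tangent. However, adding $e$ to the slit can merge or split components of the complement and flip the parity of marked points on some boundary component, so both the double cover $\Cvr_Z$ and the partition of loops into trivial versus nontrivial classes can change; one must verify that the residual sign from $(-1)^{\Delta l(S)+\Delta\mathbf{1}_\gamma}$ together with the transported branch choice on $\Cvr_Z$ exactly compensates the remaining winding sign. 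Once this combinatorial identity is established, the stopping-time restriction is automatic, since the one-step identity is only required on the event that $\gamma$ has not yet swallowed the other marked points nor disconnected a non-$\partial_{a_1}$ boundary component, which is precisely when $M$ is defined.
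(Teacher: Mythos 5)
Your proposal follows essentially the same route as the paper's own proof. Both arguments compute the one-step transition probability $p_e$ as a ratio of partition functions and translate it to a ratio of observables via (\ref{eq: obs_Z}), then invoke the first-edge decomposition
$\sum_{e} x\,F(\Od\setminus(\gamma_{[0,n]}\cup e),e,\dots,z)=F(\Od\setminus\gamma_{[0,n]},\gamma_n,\dots,z)$
and divide the two instances (general $z$ and $z=a^\delta_{2k}$) to obtain the one-step martingale identity. The paper states the decomposition identity without comment, whereas you spend a paragraph worrying about the phase and double-cover bookkeeping; that worry is partially misplaced. The possibility that adding one edge ``merges or splits components'' or ``flips the parity of marked points'' is precisely what the stopping time in the statement excludes, so within the range where $M^\delta$ is defined the topology, $\Cvr_Z$, and the trivial/non-trivial loop classification are unchanged and $\Delta l(S)=\Delta\mathbf{1}_\gamma=0$. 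What remains is only the cancellation of the winding increment $e^{-\frac{i}{2}\wind(e)}$ against the ratio $(in_{\gamma_n}/in_{e})^{1/2}$ of normalisation factors, which is baked into the definition of $W_\Cvr$ (this is exactly what makes the observable consistent as a function of the source edge, as in \cite[Remark~2.4]{CHS2} and \cite[Proposition~9]{HonKyt}). So your proof is correct and is the paper's proof; the extra concern you raise is real in principle but resolved by the stopping-time caveat rather than by an additional sign computation.
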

\begin{proof}
Denote by $Z_{\gamma_{[1,n]}}$ the restriction of the partition function to the set of configurations that have $\gamma_{[1,n]}$ as the initial segment of the interface. If $e$ is an edge adjacent to $\gamma_n$, denote $p_e:=\P\left[\left.\gamma_{[1,n+1]}=\gamma_{[1,n]}\cup e\right|\mathfrak{F}_n\right]$. Then.
$$
p_e=\frac{Z_{\gamma_{[1,n]}\cup\{e\}}}{Z_{\gamma_{[1,n]}}}=\frac{x^{n+1}Z(\Od\backslash(\gamma_{[1,n]}\cup \{e\}),e,\dots,a^\delta_{2k})}{x^{n}Z(\Od\backslash\gamma_{[1,n]},\gamma_n,\dots,a^\delta_{2k})}=\frac{x F(\Od\backslash(\gamma_{[1,n]}\cup e),e,\dots,a^\delta_{2k})}{F(\Od\backslash\gamma_{[1,n]},\gamma_n,\dots,a^\delta_{2k})},
$$
where we have used (\ref{eq: obs_Z}) in the second equality. Therefore,
\begin{multline*}
\E\left[ \left. \frac{F(\Od\backslash(\gamma_{[1,n+1]},\gamma_{n+1},\ldots,z)}{F(\Od\backslash(\gamma_{[0,n+1]},\gamma_{n+1},\ldots,a_2)}
\right|\mathfrak{F}_n\right] =\sum_e p_e \frac{F(\Od\backslash(\gamma_{[1,n]}\cup e),e,\dots,z)}{F(\Od\backslash(\gamma_{[1,n]}\cup e),e,\dots,a^\delta_{2k})}\\= 
\frac{\sum_e x F(\Od\backslash(\gamma_{[1,n]}\cup e),e,\dots,z)}{F(\Od\backslash\gamma_{[1,n]},\gamma_n,\dots,a^\delta_{2k})}=
\frac{F(\Od\backslash\gamma_{[1,n]},\gamma_n,\ldots,z)}{F(\Od\backslash\gamma_{[1,n]},\gamma_n,\dots,a^\delta_{2k})}.
\end{multline*} 
\end{proof}

Strictly speaking, $\Od\setminus\gamma_{[0,n]}$ is not a discrete domain in the sense formulated above, hence Theorem \ref{thm: mainconv_int_1} as given in \cite{ChIz} cannot be applied. To handle this issue, define a domain $\Od_n$ to be the face-connected component of $\Od\backslash \{\text{faces incident to }
\gamma_{[0,n]}\}$ that contains the marked point $a^\delta_2$; see Figure \ref{Fig: cut}. We will only consider the process $\gamma_{[0,n]}$ as long as $\Od_n$ also contains $a^\delta_3,\dots,a^\delta_{2k}$ and has the same topology as $\Od$. Let $v_n$ be the first boundary vertex of $\Od_n$ visited by a continuation $\gamma_n,\gamma_{n+1},\dots$ of the interface. Note that $v_n$ does not depend on the continuation. Indeed, if there were two such vertices $v_n,v'_n$, then one could consider a region enclosed by the boundary arc of $\Od_n$ between $v_n$ and $v_n'$ and the segments of the two continuations of $\gamma_{[0,n]}$ from their last common vertex to $v_n$ and $v_n'$ respectively. Faces in that region cannot be touched by $\gamma_{[0,n]}$, and thus they must belong to $\Od_n$, which is a contradiction. 

\begin{figure}[t]
\begin{center}
\includegraphics[width=0.7\textwidth]{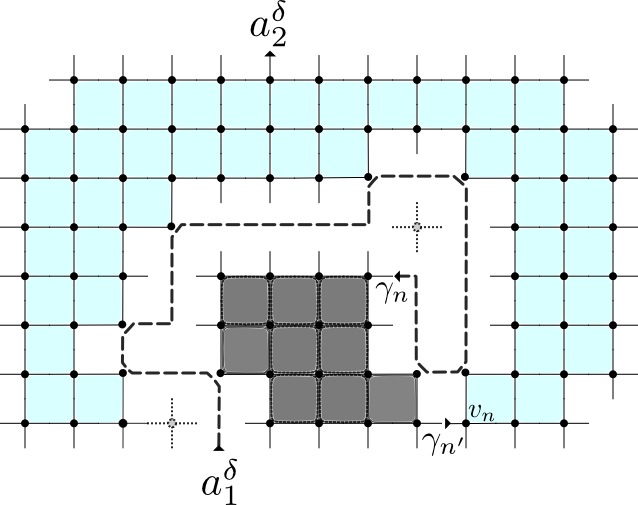}
\end{center}
\caption{\label{Fig: cut} A discrete domain $\Od$ slit by an initial segment $\gamma_{[0,n]}$ (dashed line) of the interface connecting $a^\delta_1$ to $a^\delta_2$. The domain $\Od\setminus \gamma_{[0,n]}$ is divided into $\Od_n$ (light blue squares) and $\overline{\Od_n}$ (bold gray squares). Any possible continuation of the interface enters $\Od_n$ throw the endpoint $v_n$ of the edge $\gamma_{n'}$. }
\end{figure}

Let $\overline {\Od_n}$ consist of all the edges of $\Od$ that belong neither to $\gamma_{[0,n]}$ nor to $\Od_n$. Let $\gamma_{n'}$ be the edge in one of the possible continuations $\gamma_n,\gamma_{n+1},\ldots$ traced right before visiting $v_n$; then $\gamma_{n'+1}$ is an edge of $\Od_n$. Applying one more time the above argument, one sees that if the interface $\gamma_{n'+1},\gamma_{n'+2},\ldots$ ever enters $\overline {\Od_n}$ again, it must return to $\Od_n$ through the same boundary vertex, and the same is true for loops and other interfaces of $S$ that have edges in $\Od_n$. Hence, changing the decomposition of $S$ if necessary, we can write it as $S=S'\cup S''$, where $S'$ and $S''$ are \emph{independent} configurations in $\Od_n$ and $\overline{\Od_n}$ respectively; more precisely,
$$
S'\cup\gamma_{n'+1}\in\Conf\left(\overline{\Od_n},\gamma_n,\gamma_{n'+1}\right)\quad \text{and}\quad S''\cup \gamma_{n'}\in\Conf(\Od, \gamma_{n'},a_2^\delta,\ldots a^\delta_{2k}).
$$
In particular,
\begin{equation}
\label{eq: decouple}
F_{\Cvr_Z}\left(\Od\setminus \gamma_{[0,n]},\gamma_n,\dots,z\right)=x^{-2}Z\left(\overline{\Od_n},\gamma_n,\gamma_{n'+1}\right)F_{\Cvr_Z}\left(\Od_n,\gamma_{n'},\dots,z\right).
 \end{equation}

\section{Proof of Theorem \ref{thm: conv_int}}

\subsection{A heuristic derivation of the driving process}
\label{sec: DP_general}

In this section, we give a heuristic derivation of the law of the limiting driving process $a_1(t)$ from the martingale property of the limit of $M^\delta_{\gamma_{[0,n]}}$. Suppose a random curve $\gamma_t\subset \Omega$ has a Loewner driving process $a_1(t)$, and the Loewner maps are denoted by $g_t(z)$, as defined before Theorem \ref{thm: conv_int}. Suppose we are given a family of analytic functions $M_{\gamma_{[0,t]}}(\cdot):\Omega\setminus \gamma_{[0,t]}\to\C$ with the following properties: first, for every $z\in\Omega$, $M_{\gamma_{[0,t]}}(z)$ is a local martingale, and second,
\begin{equation}
M_{\gamma_{[0,t]}}(z)=(g_t'(z))^\frac12\left(\frac{R_{\gamma_{[0,t]}}}{g_t(z)-a_1(t)}+Q_{\gamma_{[0,t]}}\left(g_t(z)\right)\right), 
\label{eq: mart_expansion}
\end{equation}
where $Q_{\gamma_{[0,t]}}(\cdot)$ extends analytically in a neighborhood of $a_1(t)$ and $Q_{\gamma_{[0,t]}}(a_1(t))=0$, and $R_{\gamma_{[0,t]}}$ does not depend on $z$. If $\gamma_t$ is a prospective scaling limit of the discrete Ising interface, then the scaling limit of $M^\delta_{\gamma_{[0,n]}}$, given by
\begin{equation}
M_{\gamma_{[0,t]}}(z)=\frac{f(\Omega\setminus \gamma_{[0,t]},\gamma_t,a_2,\ldots,a_{2k-1},z)}{f(\Omega\setminus  \gamma_{[0,t]},\gamma_t,a_2,\ldots,a_{2k-1},a_{2k})},                                                                                                                                                                             
\label{eq: def_M_cont}
\end{equation}
indeed satisfies this assumption. The expansion (\ref{eq: mart_expansion}), together with the analyticity $Q_{\gamma_{[0,t]}}(\cdot)$ at $a_1(t)$, follow directly from (\ref{eq: conf_inv_f}), (\ref{eq: obs_at_a}) and Schwarz reflection. We furthermore show in the beginning of Section \ref{sec: tech} that $O(1)$ in (\ref{eq: obs_at_a}) is in fact $o(1)$, and in the course of the proof of Lemma \ref{lem: non-vanish}, we prove the corresponding result for the multi-point observable; hence $Q_{\gamma_{[0,t]}}(a_1(t))=0$.

Assume that $M,R,Q$ depend nicely on $\gamma_{[0,t]}$, $R$ does not vanish, and $a_1(t)$ is a continuous semi-martingale. Then, the law of $a_1(t)$ is uniquely determined by the above conditions. Indeed, by It\^o's formula,
\begin{multline*}
 d\left(\frac{(g_t'(z))^\frac12R_{\gamma_{[0,t]}}}{g_t(z)-a_1(t)}\right)
 =
 \frac{R_{\gamma_{[0,t]}}d(g_t'(z))^\frac12+(g_t'(z))^\frac12(dR_{\gamma_{[0,t]}}+d[R,R]_t/2)}{g_t(z)-a_1(t)}\\
 -
 (g_t'(z))^\frac12R_{\gamma_{[0,t]}}\frac{dg_t(z)-da_1(t)-d[R;a_1]_t/R_{\gamma_{[0,t]}}}{(g_t(z)-a_1(t))^2}
 +
 \frac{(g_t'(z))^\frac12R_{\gamma_{[0,t]}}d[a_1]_t}{(g_t(z)-a_1(t))^3}.
\end{multline*}
Taking into account that $$d g_t(z)=\frac{2dt}{g_t(z)-a_1(t)}\quad \text{and}\quad d \log g'_t(z)=\frac{-2dt}{(g_t(z)-a_1(t))^{2}},$$ we obtain
\begin{multline*}
 d\left(\frac{(g_t'(z))^\frac12R_{\gamma_{[0,t]}}}{g_t(z)-a_1(t)}\right)
 =
 (g_t'(z))^\frac12R_{\gamma_{[0,t]}}\left(\frac{d[a_1]_t-3dt}{(g_t(z)-a_1(t))^3}\right.
 \\
 \left.+\frac{da_t+d[R;a_1]_t/R_{\gamma_{[0,t]}}}{(g_t(z)-a_1(t))^2}+\frac{(dR_{\gamma_{[0,t]}}+d[R,R]_t/2)/R_{\gamma_{[0,t]}}}{g_t(z)-a_1(t)}\right).
\end{multline*}
Similarly,  
\begin{multline*}
d\left((g_t'(z))^\frac12Q_{\gamma_{[0,t]}}\left(g_t(z)\right)\right)=\\(g_t'(z))^\frac12\left(-\frac{Q_{\gamma_{[0,t]}}(g_t(z))dt}{(g_t(z)-a_1(t))^2}+\frac{2Q'_{\gamma_{[0,t]}}\left(g_t(z)\right)dt}{g_t(z)-a_1(t)}+dQ_{\gamma_{[0,t]}}(w)|_{w=g_t(z)}\right).
\end{multline*}
Note that by our assumption on $Q_{\gamma_{[0,t]}}$, the first term in the brackets above is of order $(g_t(z)-a_1(t))^{-1}$ as $g_t(z)$ tends to $a_1(t)$. Since the drift term in $dM_{\gamma_{[0,t]}}(z)$ vanishes identically, so do the drift terms of the coefficients at $(g_t(z)-a_1(t))^{-3}$ and $(g_t(z)-a_1(t))^{-2}$. Thus we get
$$
 da_1(t)=\sqrt{3}dB_t-d[R;a_1]_t/R_{\gamma_{[0,t]}}.
$$
Using the definition (\ref{eq: Obs_many}), the conformal covariance property (\ref{eq: conf_inv_f}) and the expansion (\ref{eq: obs_at_a}), we can compute $R_{\gamma_{[0,t]}}$ explicitly:
$$
R_{\gamma_{[0,t]}}=\frac{P_1(\Lambda_t, a_2(t),\ldots,a_{2k-1}(t))}{g_t'(a_{2k})^{\frac12}f(\Lambda_t,a_1(t),\ldots,a_{2k}(t))}.
$$
By a well-known recursive formula for Pfaffians and the relation $n_{a_{2k}}g'_t(a_{2k})=n_{a_{2k}(t)}|g'_t(a_{2k})|$, we get
\begin{equation}
\label{eq: R}
R_{\gamma_{[0,t]}}=\frac{\sqrt{i n_{a_{2k}}}}{|g_t'(a_{2k})|^\frac12} \cdot\frac{\mathrm{Pf}\, [\sqrt{in_{a_r(t)}}f(\Lambda_t,a_m(t),a_r(t)]_{2\leq m,r\leq 2k-1}}{\mathrm{Pf}\, [\sqrt{in_{a_r(t)}}f(\Lambda_t,a_m(t),a_r(t))]_{1\leq m,r\leq 2k}}.
\end{equation}
Only the denominator of the second fraction depends on $a_1(t)$ (and hence non-smoothly on $t$). Therefore 
$$
d[R;a_1]_t/R_{\gamma_{[0,t]}}=-3\partial_{a_1}\log Z(\Lambda_t,a_1(t),\ldots, a_{2k}(t)),
$$
where $Z$ is as defined in (\ref{eq: defZ}).

Let us briefly indicate the steps needed to turn the above computation into a rigorous proof. First, we do not know \emph{a priori} that the scaling limit of the driving process exists and is a continuous semi-martingale. Although this can be derived from precompactness bounds as in \cite{CHS3}, we prefer another route and follow the argument of \cite{LSW}, dealing directly with $a^\delta_1(t)$. We consider a sequence of stopping times $t_1,t_2,\ldots$, with small increments $\Delta_{t_i}:=t_{i+1}-t_i$, and prove that the conditional expectations $\E[\Delta_{a_i}|\mathfrak{F}_{t_i}]$ and $\E[\Delta_{a_i}^2|\mathfrak{F}_{t_i}]$, where $\Delta_{a_i}:=a^\delta_1(t_{i+1})-a^\delta_1(t_{i})$, approximate the corresponding quantities 
for the diffusion $a_1(t)$ solving~(\ref{eq: a_t}). 

In order to compute the above conditional expectations, we use the martingale property of the discrete observable (\ref{Mart}). We express the increment of $M^\delta_{\gamma_{[0,n]}}$ between $n$ corresponding to the stopping times $t_i$ and $t_{i+1}$ in terms of $\Delta_{a_i}$ and $\Delta_{t_i}:=t_{i+1}-t_i$. To this end, we approximate the discrete observable by the continuous one (see Lemma \ref{lem: conv_ratio}), and then expand the latter in a Taylor series (see Lemma \ref{lem: variation}); in essence, this is the same computation as the one presented above.   By virtue of Lemma \ref{lem: coef_vanish}, we transform the martingale property into relations between $\E[\Delta_{a_i}^2|\mathfrak{F}_{t_i}]$, $\E[\Delta_{a_i}|\mathfrak{F}_{t_i}]$ and $\E[\Delta_{t_i}|\mathfrak{F}_{t_i}]$, and these relations allow one to apply a standard coupling argument to deduce convergence of $a_1^\delta$ to $a_1$. Along the way, we will prove various regularity estimates for the continuous observables. One particular 
problem to solve is a possible vanishing of $R_{\gamma_{[0,t]}}$ (which indeed may occur if $k>1$ and the domain is not simply connected). We will be able to solve it by renumbering the marked points; see Lemma~\ref{lem: generic}.

\subsection{Proof of Theorem \ref{thm: conv_int} modulo technical lemmas}
\label{sec: proof}
It will be assumed implicitly throughout the proof that $\delta\to 0$ along a sequence; of course, the convergence for \emph{any} such sequence also implies the same result when $\delta$ is a continuous parameter.

We fix a domain $(\Omega,a_1,\dots,a_{2k})$ and its approximations $(\Omega^\delta,a_1^\delta,\dots,a_{2k}^\delta)$,  maps $g_0$ and $g_0^\delta$ from $\Omega$ and $\Omega^\delta$ to almost-$\H$ domains $\Lambda_0$ and $\Lambda^\delta_0$, respectively. We require that $\partial\Lambda^\delta_0\to \partial\Lambda_0$ in the sense of $C^2$ (i. e. that there is a bijection between boundary components of $\partial\Lambda^\delta_0$ and $\partial\Lambda_0$, such that corresponding components can be parametrized by curves which are uniformly close and have uniformly close first and second derivatives).  We also require that $g_0^\delta\to g_0$ uniformly on compact subsets of $\Omega$; this implies that $a^\delta_m(0):=g^\delta_0(a^\delta_m)$ converges to $a_m(0):=g_0(a_m)$, $1\leq m\leq2k$. 

One way to construct such a sequence $g_0^\delta$ is as follows. Let $\Omega$ be $m$-connected, and let $\partial^\delta_1,\ldots,\partial^\delta_m$ denote connected components of $\C\setminus \Omega^\delta$,  $\partial^\delta_1$ being the unbounded one. Let $\varphi^\delta_1$ be conformal maps from $\C\setminus \partial^\delta_1$ to the unit disc $\D$ chosen so that $\varphi^\delta_1\stackrel{\delta\to 0}{\longrightarrow} \varphi_1$ as $\delta\to 0$.  Define the conformal maps $\varphi^\delta_2,\ldots,\varphi^\delta_m$ by $\varphi^\delta_{i+1}:=\psi^\delta_{i+1}\circ\varphi^\delta_i$, where $\psi^\delta_{i+1}$ is the conformal map from $\hat\C \setminus \varphi_i^\delta(\partial^\delta_{i+1})$ to $\hat\C \setminus \D$ satisfying $\psi^\delta_i(\infty)=\infty$ and $(\psi^\delta_i)'(\infty)>0$. By Carath\'eodory convergence of domains, we have that $\varphi^\delta_1(\hat\C\setminus\partial^\delta_2)\stackrel{Cara}{\longrightarrow}\varphi_1(\hat\C\setminus\partial_2)$, and therefore $\psi^\delta_2\to \psi_2$ 
and 
$\varphi^\delta_2\to \varphi_2$ uniformly on compact subsets of $\hat\C\setminus \varphi_1(\partial_2)$ and $\C\setminus (\partial_1\cup\partial_2)$ respectively. Iterating, we see that $\varphi^\delta_m\to\varphi_m$ uniformly on compact subsets of $\Omega$. Moreover, $\partial\varphi^\delta_m(\Od)$ converges to $\partial\varphi_m(\Omega)$ in the sense of $C^\infty$, since the image of the $i$-th boundary component under $\varphi^\delta_m$ coincides with the image of the unit circle under $\psi^\delta_m\circ\ldots\circ \psi^\delta_{i+1}$, and the uniform convergence of conformal maps implies the $C^\infty$ convergence. The maps $g_0^\delta$ are then constructed by taking composition with a conformal map from $\varphi_m(\C\setminus \partial_{a_1}^\delta)$ to the upper half-plane, where $\partial^\delta_{a_1}$ is the component containing $a^\delta_1$.

We fix a cross-cut $\crs$ in $\Lambda_0$ that separates $a_1(0)\in\R$ from other marked points and other boundary components. The constants in the proof below may depend on all these data (but not on anything else). Denote  $\crs_\Omega:=g_0^{-1}(\crs)$ and $\crs_{\Od}:=(g^\delta_0)^{-1}(\crs)$. Recall that $T^\delta_\crs$ is the hitting time of $\crs_{\Od}$ by the interface $\gamma_t$, where the latter is parametrized by twice half-plane capacity $t$ of its image $g_0^\delta(\gamma_t)$.

We denote by $\Theta_\Omega(\crs)$ the set of all possible pairs $(\Omega^\delta\setminus \gamma_{[0,t]},\gamma_t)$, viewed as a domain and a point on its boundary, where $t<T^\delta_\crs$. We denote by $\CSpace(\crs)$ the set of all possible tuples $(\Lambda^\delta_t,a^\delta_1(t),\dots,a_{2k}^\delta(t))$, $t<T^\delta_\crs$. Note that the closure of  $\Theta_\Omega(\crs)$ with respect to Carath\'eodory convergence of domains and the convergence of prime ends is compact, since the domains in $\Theta_\Omega(\crs)$ are uniformly bounded and all contain a fixed ball around a point $w$ separated from $a_1$ by $\crs_\Omega$; see \cite{Pommerenke}. Therefore, the closure of $\Theta(\crs)$, taken with respect to the topology induced from $C^2$-convergence of boundaries of almost-$\H$ domains and the usual convergence of marked points, compact and consists of domains of the same topology as $\Lambda_0$. Indeed, a convergent subsequence in $\Theta_\Omega(\crs)$ corresponds to a convergent one in $\Theta(\crs)$.

We now state several technical lemmas. The proofs are mainly based on the properties of the boundary value problem (\ref{eq: obs_on_bdry}) -- (\ref{eq: obs_at_a}) and compactness arguments, and are postponed to the next subsection.

\begin{lemma}
 \label{lem: non-vanish}
If $\partial \Omega$ is smooth near marked points $a_2,\dots,a_{2k}$, then $$f_{\Cvr_Z}(\Omega,a_1,\dots a_{2k-1},a_{2k})\neq 0.$$ 
\end{lemma}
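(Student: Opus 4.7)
The strategy is to reduce the claim to the non-vanishing of the SLE${}_3$ partition function. By the Pfaffian expansion along the $(2k)$-th row together with the definition (\ref{eq: obs_tilda}) of the multi-point observable,
$$
\sqrt{in_{a_{2k}}}\, f_{\Cvr_Z}(\Omega, a_1, \ldots, a_{2k-1}, a_{2k}) \;=\; Z(\Omega, a_1, \ldots, a_{2k}),
$$
where $Z$ is defined in (\ref{eq: defZ}). Thus the lemma is equivalent to the assertion that $Z(\Omega, a_1, \ldots, a_{2k}) \neq 0$, i.e.\ to the non-degeneracy of the real antisymmetric matrix $M_{mr} := \sqrt{in_{a_r}} f_{\Cvr_Z}(\Omega, a_m, a_r)$.

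To prove this, I would combine two inputs. First, as recalled just before (\ref{eq: Z_simply_conn}) and proved in \cite{ChIz}, every ratio $Z(\Omega, a_1, \ldots, a_{2k})/Z(\Omega, a_1', \ldots, a_{2k}')$ on a fixed $\Omega$ arises as a limit of strictly positive ratios of discrete Ising partition functions with the corresponding boundary conditions; in particular $Z(\Omega, \cdot)$ has a constant sign (which could a priori be zero) across all admissible configurations, so it suffices to exhibit a single reference configuration on $\Omega$ with $Z \neq 0$. Second, I would produce such a reference by clustering all marked points $a_1^\eta, \ldots, a_{2k}^\eta$ inside a disc of radius $\eta$ around a single smooth boundary point of $\Omega$ (available by the smoothness hypothesis on $a_2, \ldots, a_{2k}$). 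Composing with a conformal chart $\varphi$ sending that point to $0 \in \partial\H$, the conformal covariance (\ref{eq: conf_cov_Z}) together with Carath\'eodory continuity of the Riemann boundary value problem (\ref{eq: obs_on_bdry})--(\ref{eq: obs_at_a}) from \cite[Section~3]{ChIz} shows that each entry $\sqrt{in_{a_r^\eta}} f_{\Cvr_Z}(\Omega, a_m^\eta, a_r^\eta)$ has leading behavior $\sim 1/(\varphi(a_r^\eta) - \varphi(a_m^\eta))$ of order $\eta^{-1}$, plus a lower-order correction coming from the ``far'' part of the boundary (and from the double cover $\Cvr_Z$, which becomes trivial locally since all points sit on one boundary component). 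Consequently, $Z(\Omega, a_1^\eta, \ldots, a_{2k}^\eta) \sim \eta^{-k} \cdot \mathrm{Pf}[(\varphi(a_r^\eta) - \varphi(a_m^\eta))^{-1}]$ as $\eta \to 0$, and the Cauchy Pfaffian on the right is classically nonzero for distinct real points, cf.\ (\ref{eq: Z_simply_conn}).

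The main obstacle is the quantitative control of the remainder in the clustering step: one must show, uniformly in a neighborhood of the degenerate configuration, that the lower-order corrections do not conspire to cancel the Cauchy-Pfaffian leading term. This would follow from a careful extraction of the local behavior of the two-point observables in the spirit of \cite[Section~3]{ChIz}. An alternative route, which I would fall back on if the scaling argument becomes unwieldy, is a direct Riemann--Hilbert argument: any real vector $c$ in the kernel of $M$ gives a $\Cvr_Z$-spinor $g := \sum_m c_m f_{\Cvr_Z}(\Omega, a_m, \cdot)$ with prescribed simple poles and the reality boundary condition, and one studies the meromorphic $1$-form $g^2\, dz$, whose boundary restriction is $(g\sqrt{in_z})^2|dz| \geq 0$ while $\oint g^2\, dz$ can be expressed via residues; reality of the principal-value boundary integral should then force $g \equiv 0$ and hence $c = 0$, but the residue accounting at the poles is delicate because the ``regular value'' of $f_{\Cvr_Z}$ at its own pole need not vanish for multiply-connected $\Omega$.
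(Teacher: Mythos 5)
Your reduction $\sqrt{in_{a_{2k}}}\,f_{\Cvr_Z}(\Omega,a_1,\ldots,a_{2k-1},a_{2k})=Z(\Omega,a_1,\ldots,a_{2k})=\pfaff M$ is correct, and so is the reformulation as non-degeneracy of the antisymmetric matrix $M$. However, both routes you sketch have real gaps, and the paper's argument is different from either.

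On Route 1: the statement that ratios of discrete partition functions converge to ratios of the continuous $Z$'s only gives \emph{non-negativity} of such ratios; it does not rule out $Z$ being zero, which is exactly the unknown. More seriously, the clustering step cannot reach a general configuration in a multiply-connected $\Omega$. Marked points cannot be slid continuously from one boundary component to another while remaining on $\partial\Omega$, so the connected components of the configuration space are indexed by the distribution of the $a_i$ across boundary components, which is what determines $\Cvr_Z$. If the original configuration has marked points on several components (in particular, if some component carries an odd number), there is no continuous path of admissible configurations joining it to your single-component cluster, and the Cauchy--Pfaffian asymptotics you invoke do not apply to the configuration you actually need. You would have to cluster separately on each component and control cross-terms, which is a much more delicate computation than the one you describe, and you flag but do not close the uniform-remainder estimate.

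On Route 2: you have correctly identified the obstruction yourself. For a kernel vector $c$ of $M$, the function $g=\sum_m c_m f_{\Cvr_Z}(\Omega,a_m,\cdot)$ satisfies, via the antisymmetry of $M$, $\sum_{r\neq m}c_r f(\Omega,a_r,a_m)=0$ for each $m$; but the regular part of $g$ at $a_m$ still contains the term $c_m$ times the finite part of $f(\Omega,a_m,\cdot)$ at $a_m$, which need not vanish in the multiply-connected case. Hence $\mathrm{Res}_{a_m}(g^2)\neq 0$ a priori, and the PV-integral of $g^2\,dz$ is not even convergent (the double pole of $g^2$ makes the boundary integral diverge). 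This route, as stated, does not close.

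The paper's actual proof is a Riemann--Hilbert sign argument applied not to a kernel vector $g$ but directly to $f_k(\cdot):=f_{\Cvr_Z}(\Omega,a_1,\ldots,a_{2k-1},\cdot)$, for which the crucial additional structure is available: because the coefficients $P_s$ are Pfaffian minors, the sum $\sum_{r=2}^{2k-1}P_r f(a_r,a_1)$ is a Pfaffian with two equal columns, hence vanishes, so the regular part of $f_k$ at each of $a_1,\ldots,a_{2k-1}$ is zero and $h:=\Im\int f_k^2$ is locally constant on the boundary with controlled (Poisson-kernel) blow-up at the poles. One then takes the boundary component $l_{\min}$ where this constant is minimal, applies the Harnack principle to get $\partial_n h<0$ there, and counts sign changes of $f_k\sqrt{in_z}$ around $l_{\min}$: the parity constraint coming from the definition of $\Cvr_Z$ forces $a_{2k}\in l_{\min}$ and $f_k(a_{2k})\neq 0$. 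A separate perturbation-and-symmetry argument (using Lemma~\ref{lem: symmetry}) shows $f_k\not\equiv 0$. If you want to salvage Route 2, the fix is precisely this: argue about $f_k$ itself rather than an arbitrary kernel combination, so that the residue structure is under control.
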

In particular, this allows one to define the continuous analog of the martingale $M^\delta_{\gamma_{[0,n]}}$. Put 
$$
M_{\gamma_{[0,t]}}(z):=\frac{f_{\Cvr_Z}(\Omega^{\delta}\backslash \gamma_{[0,t]},\gamma_t,a^\delta_2,\dots,a^\delta_{2k-1},z^{\delta})}{f_{\Cvr_Z}(\Omega^{\delta}\backslash \gamma_{[0,t]},\gamma_t,a_2^\delta,\dots,a_{2k-1}^\delta,a^{\delta}_{2k})};
$$
here we slightly abuse the notation and write $a^{\delta}_{2},\ldots,a^{\delta}_{2k}$ for the endpoints of the corresponding boundary edges on the boundary of $\widetilde{\Omega}^\delta$; and $\Od$ is viewed as a continous (polygonal) domain. We also write  $\gamma_{[0,t]}$ for the parametrization of $\gamma\subset \Od$ by twice the half-plane capacity of $g_0^\delta(\gamma)$, whereas $\gamma_{[0,n]}$ is the parametrization by number of steps.

The following lemma is easily derived from Corollary \ref{cor: conv_obs} and the compactness (see Section~\ref{sec: tech}). In the next few lemmas, $\mathcal{C}\subset \Omega$ denotes an arbitrary compact set with non-zero interior separated from $a_1$ by the cross-cut $\crs_{\Omega}$.
\begin{lemma}
 \label{lem: conv_ratio}
Under conditions of Theorem \ref{thm: conv_int},
\begin{equation}
 \left|M^\delta_{\gamma_{[0,n_t]}}(z)-M_{\gamma_{[0,t]}}(z)\right|\stackrel{\delta\to 0}{\longrightarrow} 0 
\label{eq: conv_ratio}
\end{equation}
uniformly in $z\in\mathcal{C}\cap\mathrm{Edges}(\Od)$ and all possible initial segments $\gamma_{[0,t]}$, $t<T^\delta_\crs$, where $n_t$ stands for the smallest $n$ such that $2\mathrm{hcap}(g_0^\delta(\gamma_{[0,n]}))\geq t$.
\end{lemma}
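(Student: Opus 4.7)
The plan is to combine the pointwise convergence from Corollary \ref{cor: conv_obs} with a compactness argument on the closure of $\Theta_\Omega(\crs)$, after first reducing the problem to a bona fide discrete domain via the decoupling identity (\ref{eq: decouple}).

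Since $\Od\setminus\gamma_{[0,n]}$ is not a discrete domain in the sense of the paper, I would first use (\ref{eq: decouple}) to rewrite
$$
F_{\Cvr_Z}(\Od\setminus\gamma_{[0,n]},\gamma_n,\ldots,w)
=x^{-2}\,Z(\overline{\Od_n},\gamma_n,\gamma_{n'+1})\,F_{\Cvr_Z}(\Od_n,\gamma_{n'},\ldots,w)
$$
for both $w=z$ and $w=a_{2k}^\delta$, with the dots standing in each case for the fixed marked edges $a_2^\delta,\ldots,a_{2k-1}^\delta$. The prefactor is $w$-independent, so it cancels in the ratio defining $M^\delta_{\gamma_{[0,n]}}(z)$, which therefore coincides with the analogous ratio on the bona fide discrete domain $\Od_n$ with marked starting edge $\gamma_{n'}$. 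This puts us in a position to apply Corollary \ref{cor: conv_obs} to numerator and denominator separately.

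To obtain uniform convergence, I would argue by contradiction. If uniformity fails, then along some subsequence $\delta_k\to 0$ there exist initial segments $\gamma^{(k)}_{[0,t_k]}$ with $t_k<T^{\delta_k}_\crs$ and edges $z_k\in\mathcal{C}\cap\mathrm{Edges}(\Omega^{\delta_k})$ such that $|M^{\delta_k}_{\gamma^{(k)}_{[0,n_{t_k}]}}(z_k)-M_{\gamma^{(k)}_{[0,t_k]}}(z_k)|\geq\varepsilon>0$. By the compactness of the closure of $\Theta_\Omega(\crs)$ recorded just before the statement of the lemma, together with compactness of $\mathcal{C}$, one extracts a further subsequence along which the slit domains converge in the Carathéodory sense to a limit $\Omega_\infty$, the tips and the other marked prime ends converge to boundary points $a_1^\infty,\ldots,a_{2k}^\infty$, and $z_k\to z_\infty\in\mathcal{C}$. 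Applying Corollary \ref{cor: conv_obs} to the discrete domains $\Od_{n_{t_k}}$ yields normalizing factors $\beta(\delta_k)$ such that $\beta(\delta_k)F_{\Cvr_Z}(\Od_{n_{t_k}},\ldots,w)$ converges to $f_{\Cvr_Z}(\Omega_\infty,a_1^\infty,\ldots,a_{2k-1}^\infty,w)$ at $w=z_\infty$ and at $w=a_{2k}^\infty$; the factor $\beta(\delta_k)$ cancels in the ratio, and continuity of $f_{\Cvr_Z}$ in its data, a direct consequence of the boundary value problem (\ref{eq: obs_on_bdry})--(\ref{eq: obs_at_a}), implies that $M_{\gamma^{(k)}_{[0,t_k]}}(z_k)$ tends to the same limit, contradicting the lower bound. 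Uniform non-vanishing of the denominator throughout the compact family, needed to control the ratios, follows from Lemma \ref{lem: non-vanish} together with the same compactness argument.

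The main technical obstacle I anticipate is verifying that the regularity hypotheses of Corollary \ref{cor: conv_obs} hold uniformly across this compact family. Regularity of the approximations at the fixed boundary points $a_2,\ldots,a_{2k}$ is built into the standing assumption of Theorem \ref{thm: conv_int}; regularity at the moving tip $\gamma_{n'}$ follows from the fact that the slit traces along lattice edges, so that near the tip the boundary of $\Od_n$ consists of horizontal and vertical lattice segments and the regularity condition is automatic on a fixed scale. The restriction $t<T^\delta_\crs$ keeps the tip uniformly separated from $\mathcal{C}$ and from the points $a_2,\ldots,a_{2k}$, which is what prevents pathological collisions of marked points in the compactness extraction and allows the continuous dependence of $f_{\Cvr_Z}$ to be exploited.
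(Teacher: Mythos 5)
Your argument follows essentially the same route as the paper's proof: contradiction, compactness of $\Theta_\Omega(\crs)$ in the Carath\'eodory sense, reduction to the genuine discrete domain $\Od_n$ via the decoupling identity (\ref{eq: decouple}), Corollary \ref{cor: conv_obs} plus Lemma \ref{lem: non-vanish} for the discrete ratio, and Carath\'eodory continuity of the continuous ratio to match the two limits.

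One mild correction on a detail. You claim that regularity of the approximations at the moving tip $\gamma_{n'}$ holds automatically because the boundary near the tip consists of horizontal and vertical lattice segments. Regularity in the sense of \cite[Definition 3.14]{ChIz} requires a straight boundary segment on a \emph{fixed macroscopic} scale near the marked point; near the zigzagging slit at scale $\delta$ this fails. The actual reason this is harmless is different: the tip plays the role of the source point $a^\delta$ in Theorem \ref{thm: mainconv_int_1}, for which no regularity is required, and Corollary \ref{cor: conv_obs} is stated so as to demand regularity only at $a_2,\ldots,a_{2k}$ --- the antisymmetry in (\ref{eq: def_G}) always allows one to write each Pfaffian entry involving index $1$ with the evaluation point being one of the fixed marked points. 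Your assertion that Carath\'eodory continuity of $f_{\Cvr_Z}$ is ``a direct consequence of the boundary value problem'' is also somewhat optimistic, since the limiting slit domain $\Omega'$ need not be smooth; the paper instead obtains this continuity by observing that both the numerator and the denominator are Carath\'eodory limits of the \emph{discrete} observables, via Theorem \ref{thm: mainconv_int_1}.
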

To express the variation of $M_{\gamma_{[0,t]}}$ as $t$ grows, we will need spinors introduced in the following Lemma.
\begin{lemma}
\label{lem: f_n}
 Given an almost-$\H$ domain $\Lambda$, a point $a\in \R$, and an integer $n>0$, there is a unique $\Cvr_Z$-spinor $f^{(n)}(\Lambda,a,\cdot)$ satisfying (\ref{eq: obs_on_bdry}), (\ref{eq: f_at_infty}) and the expansion
 $$
 f^{(n)}(\Lambda,a,\cdot)=\frac{1}{(z-a)^{n+1}}+O(1), \quad z\to a.
 $$
 Moreover, $f^{(n)}(\Lambda,a,\cdot)\equiv \frac1n \partial_a f^{(n-1)}(\Lambda,a,\cdot)$.
\end{lemma}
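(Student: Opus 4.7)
My plan is to prove existence, uniqueness and the recursive identity simultaneously by induction on $n$, \emph{defining} $f^{(n)}(\Lambda,a,\cdot):=\tfrac1n\partial_a f^{(n-1)}(\Lambda,a,\cdot)$ at each step. The base case $n=0$ is provided by the existence and uniqueness results of \cite[Section~3]{ChIz} applied to the boundary value problem (\ref{eq: obs_on_bdry})--(\ref{eq: f_at_infty}): since $a\in\R$ forces $n_a=-i$ and hence $\sqrt{in_a}=1$, the basic observable $f_{\Cvr_Z}(\Lambda,a,\cdot)$ already has expansion $(z-a)^{-1}+O(1)$ near $a$ and decays as $O(|z|^{-1})$ at infinity, so we set $f^{(0)}:=f_{\Cvr_Z}(\Lambda,a,\cdot)$.

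For uniqueness at stage $n$, I would take the difference $g$ of any two candidates: it is a holomorphic $\Cvr_Z$-spinor on $\Lambda$ satisfying the homogeneous version of (\ref{eq: obs_on_bdry}) and the decay (\ref{eq: f_at_infty}), and it is bounded near $a$ because the prescribed singular parts cancel. Riemann's removable singularity theorem then extends $g$ holomorphically across $a$, reducing it to a solution of the homogeneous problem from \cite[Section~3]{ChIz}, which vanishes identically. For existence, assuming $f^{(n-1)}$ has been constructed, I would verify the required properties for $\tfrac1n\partial_a f^{(n-1)}$ one by one: the spinor property and holomorphicity in $z$ are preserved because differentiation in the real parameter $a$ commutes with these; the boundary condition is inherited because $\sqrt{in_z}$ is $a$-independent, so differentiating $f^{(n-1)}(\Lambda,a,z)\sqrt{in_z}\in\R$ in $a\in\R$ keeps the result real; and the expansion $(z-a)^{-n-1}+O(1)$ follows from $\partial_a(z-a)^{-n}=n(z-a)^{-n-1}$ together with the fact that the $a$-derivative of the regular part of $f^{(n-1)}$ remains $O(1)$ as $z\to a$. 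Differentiability in $a$ itself can be secured by writing
$$
f^{(n-1)}(\Lambda,a',z)-f^{(n-1)}(\Lambda,a,z)=\bigl[(z-a')^{-n}-(z-a)^{-n}\bigr]+g_{a,a'}(z),
$$
where $g_{a,a'}$ is the unique spinor absorbing the explicit singular parts, and then invoking the linear a priori estimates of \cite[Section~3]{ChIz}.

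The main obstacle I foresee is securing the decay $f^{(n)}(\Lambda,a,z)=O(|z|^{-1})$ at infinity \emph{uniformly} in $a$ on compact intervals of $\R$, which is needed both to invoke the uniqueness statement and to close the induction. I would treat this by a Cauchy-integral bound on $\partial_a f^{(n-1)}(\Lambda,a,z)$ for $|z|$ large: complexifying $a$ into a small neighborhood of $\R$ (the boundary value problem depends holomorphically on $a$ as long as $a$ stays away from $\pa\Lambda\setminus\{a\}$) and transferring the inductive $O(|z|^{-1})$ control on $f^{(n-1)}$ to its $a$-derivative via Cauchy's formula. Alternatively, working in the half-plane model via (\ref{eq: conf_inv_f}) one can make the $a$-dependence explicit through a Poisson-type representation and read off the uniform decay directly.
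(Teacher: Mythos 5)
Your proposal follows the same inductive strategy as the paper: define $f^{(n)}:=\frac1n\partial_a f^{(n-1)}$, identify $f^{(0)}$ with $f_{\Cvr_Z}(\Lambda,a,\cdot)$, and prove uniqueness by subtracting two candidates and invoking the removable singularity theorem plus the fact that the homogeneous problem has only the zero solution. The base case and the uniqueness argument are both correct, and the observation that $\sqrt{in_a}=1$ for $a\in\R$ is exactly the right reason $f^{(0)}=f$ works.

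The gap is in the existence step, and specifically in proving that the difference quotient $s^{-1}\bigl[f^{(n-1)}(\Lambda,a+s,\cdot)-f^{(n-1)}(\Lambda,a,\cdot)\bigr]$ actually converges, with control both near $a$ and at infinity, to a spinor with the required expansion. You flag this as the main obstacle yourself, but the two remedies you offer are not solid. Complexifying $a$ off $\R$ changes the boundary value problem (the pole migrates into $\Lambda^*$), and holomorphic dependence of the solution on such an $a$ is itself a nontrivial claim requiring roughly the same work you are trying to bypass; nothing in the paper or in \cite{ChIz} provides it. The ``linear a priori estimates of \cite[Section~3]{ChIz}'' concern convergence of discrete observables to continuous ones, not Lipschitz or $C^1$ regularity of the continuous solution in the parameter $a$. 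In other words, you have correctly reduced to the statement ``the spinor $g_{a,a'}$ absorbing the singular parts is $O(|a'-a|)$ and has a limit after rescaling,'' but you do not have a mechanism to prove it.

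The paper closes this gap with a single application of its Lemma 3.7 (\texttt{lem: remove\_singularity}): one treats the difference quotients $f_i(z):=s_i^{-1}\bigl[f^{(n-1)}(\Lambda,a+s_i,z)-f^{(n-1)}(\Lambda,a,z)\bigr]$ as a sequence of spinors satisfying (\ref{eq: obs_on_bdry}) away from a shrinking neighbourhood of $a$, with explicitly prescribed singular parts $v_i(z)=s_i^{-1}\bigl[(z-a-s_i)^{-n}-(z-a)^{-n}\bigr]$ converging to $n(z-a)^{-(n+1)}$. That lemma, proved earlier by a compactness argument on the associated harmonic functions $h_i=\Im\int f_i^2$ together with the maximum principle, delivers precisely the convergence, the uniform bound away from $a$, the decay at infinity, and the identification of the limit's singularity in one stroke. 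Your decomposition of $f^{(n-1)}(\Lambda,a',z)-f^{(n-1)}(\Lambda,a,z)$ is exactly the right input to that lemma; the missing step is to recognize that a compactness-plus-maximum-principle argument of this type is needed and to prove it, rather than appealing to Cauchy estimates in a complexified parameter or to estimates from \cite{ChIz} that do not address this question.
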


Recall that $R=R_{\gamma_{[0,t]}}$ is defined by formula (\ref{eq: R}). Define $R'_{\gamma_{[0,t]}}$ to be the derivative of that expression with respect to $a_1$ (evaluated at $\Lambda=\Lambda^\delta_t$, $a_1=a^\delta_1(t)$,\dots, $a_{2k}=a^\delta_{2k}(t)$). 

Fix $\epsilon>0$. Let $t_{1}<t_2$ and a realization of $\gamma_{[0,t_2]}$ be such $t_2<T^\delta_\crs$. Assume that 
\begin{gather}
t_2-t_1\leq 2\epsilon^2,\label{eq: cond_t_i}\\                                                                                                                                                                                                                                                                
\max_{t\in[t_1,t_2]}|a^\delta_1(t)-a^\delta_1(t_1)|\leq 2\epsilon. \label{eq: cond_a_t_i}                                                                                                                                                                                                                                                                                    \end{gather}
Denote $\Delta_a:=a^\delta_1(t_2)-a^\delta_1(t_1)$ and  $\Delta_t:=t_2-t_1$. Then the following variational formula holds true:
\begin{lemma}
\label{lem: variation}
 Under the above conditions, on has 
\begin{align*}
M_{\gamma_{[0,t_2]}}(z)-M_{\gamma_{[0,t_1]}}(z)=(g^\delta_{t_1})'(z)^\frac12&\left[R\left(\Delta_a^2-3\Delta_t\right)f^{(2)}\left(\Lambda^\delta_{t_1},a^\delta_1(t_1),g^\delta_{t_1}(z)\right)\right.\\&+\left(R\Delta_a+R'\Delta^2_a\right)f^{(1)}(\Lambda^\delta_{t_1},a^\delta_1(t_1),g^\delta_{t_1}(z)) \\&+\left.\sum_{s=1}^{2k-1}c_sf(\Lambda^\delta_{t_1},a^\delta_s(t_1),g^\delta_{t_1}(z)) \right]+O(\epsilon^3),                                     
\end{align*}
where $R=R_{\gamma_{[0,t_1]}}$, $R'=R'_{\gamma_{[0,t_1]}}$, the constant in $O(\epsilon^3)$ is uniform over realizations of $\gamma_{[0,t_2]}$ and in $z\in\comp$; $c_s=c_s(\epsilon,\gamma_{[0,t_1]})\in \sqrt{in_{a_{2k}}}\R$.
\end{lemma}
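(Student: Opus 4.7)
The plan is to pull $M_{\gamma_{[0,t_1]}}$ and $M_{\gamma_{[0,t_2]}}$ back to the common almost-$\H$ domain $\Lambda^\delta_{t_1}$ via $g^\delta_{t_1}$ and compare them as holomorphic $\Cvr_Z$-spinors, using a perturbative expansion in $\Delta_a$ and $\Delta_t$. Set $u := g^\delta_{t_1}(z)$, $v := u - a^\delta_1(t_1)$, and $h := g^\delta_{t_2} \circ (g^\delta_{t_1})^{-1}$; by the conformal covariance of the multi-point observable together with the chain rule,
$$M_{\gamma_{[0,t_2]}}(z) - M_{\gamma_{[0,t_1]}}(z) = (g^\delta_{t_1})'(z)^{1/2}\bigl[h'(u)^{1/2}\,\tilde M_{t_2}(h(u)) - \tilde M_{t_1}(u)\bigr],$$
where $\tilde M_t$ is the pullback of $M_{\gamma_{[0,t]}}$ to $\Lambda^\delta_t$, satisfying (\ref{eq: mart_expansion}): $\tilde M_t(w) = R_t/(w - a^\delta_1(t)) + Q_t(w)$ with $Q_t$ holomorphic at $a^\delta_1(t)$ and $Q_t(a^\delta_1(t))=0$.

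Integrating $\dot g_s = 2/(g_s - a^\delta_1(s))$ between $t_1$ and $t_2$ under (\ref{eq: cond_t_i})--(\ref{eq: cond_a_t_i}) gives the Loewner asymptotics, uniformly for $u \in g^\delta_{t_1}(\comp)$ (separated from $a^\delta_1(t_1)$ by $\crs$):
$$h(u) - a^\delta_1(t_2) = v - \Delta_a + \frac{2\Delta_t}{v} + O(\epsilon^3), \qquad h'(u)^{1/2} = 1 - \frac{\Delta_t}{v^2} + O(\epsilon^3).$$
Substituting these into $\tilde M_{t_2}(h(u))$ with the relation $R_{t_2} = R + R'\Delta_a + O(\epsilon^2)$ (which holds because $\Lambda^\delta_{t_2}$ and $a^\delta_s(t_2)$ for $s \ge 2$ differ from their $t_1$-values only by $O(\Delta_t)$, while $R$ depends smoothly on all its arguments) and expanding $(v - \Delta_a + 2\Delta_t/v)^{-1}$ as a geometric series in $1/v$ yields
\begin{align*}
h'(u)^{1/2}\,\tilde M_{t_2}(h(u)) = \frac{R + R'\Delta_a}{v} &+ \frac{R\Delta_a + R'\Delta_a^2}{v^2}\\ &+ \frac{R(\Delta_a^2 - 3\Delta_t)}{v^3} + \tilde Q(u) + O(\epsilon^3),
\end{align*}
with $\tilde Q$ holomorphic at $a^\delta_1(t_1)$. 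The critical coefficient $-3\Delta_t$ at $1/v^3$ combines $+2\Delta_t/v^3$ from the Loewner shift with $-\Delta_t/v^3$ from $h'(u)^{1/2}$ acting on the $R/v$ leading pole.

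By Lemma \ref{lem: f_n}, $f^{(n)}(\Lambda^\delta_{t_1}, a^\delta_1(t_1), u) = 1/v^{n+1} + O(1)$ near $a^\delta_1(t_1)$, so the $1/v^3$ and $1/v^2$ singular parts above are exactly cancelled by the prescribed $f^{(2)}$ and $f^{(1)}$ terms. The resulting
\begin{align*}
E(u) := h'(u)^{1/2}\,\tilde M_{t_2}(h(u)) - \tilde M_{t_1}(u) &- R(\Delta_a^2 - 3\Delta_t)\,f^{(2)}(\Lambda^\delta_{t_1}, a^\delta_1(t_1), u)\\ &- (R\Delta_a + R'\Delta_a^2)\,f^{(1)}(\Lambda^\delta_{t_1}, a^\delta_1(t_1), u)
\end{align*}
is a holomorphic $\Cvr_Z$-spinor on $\Lambda^\delta_{t_1} \setminus K$ (with $K := g^\delta_{t_1}(\gamma_{[t_1,t_2]})$) possessing only simple poles at $a^\delta_1(t_1), \ldots, a^\delta_{2k-1}(t_1)$ with small residues, decaying at infinity, and satisfying (\ref{eq: obs_on_bdry}) modulo an $O(\Delta_t)$ correction arising from the time-dependence of $\sqrt{in_{a^\delta_{2k}(t)}}$. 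One then chooses $c_s \in \sqrt{in_{a_{2k}}}\R$ to absorb those residues, leaving a remainder $\mathcal E(u) := E(u) - \sum_{s=1}^{2k-1} c_s f(\Lambda^\delta_{t_1}, a^\delta_s(t_1), u)$ free of singularities at every marked point.

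The main obstacle is the $L^\infty(\comp)$ estimate $\|\mathcal E\|_{L^\infty(\comp)} = O(\epsilon^3)$. The only residual irregularity of $\mathcal E$ is a $|u - u_{\mathrm{tip}}|^{-3/4}$ branch singularity at the tip of $K$ inherited from the slit map $h$, which is harmless on $\comp$. The required bound follows from a Riemann--Hilbert-type estimate in the spirit of \cite{ChIz}, exploiting both the $O(\epsilon^2)$ size of the boundary perturbation and the cancellation of principal parts up to order $1/v^3$ performed above (without which one would get only $O(\epsilon^2)$); compactness of the configuration space $\CSpace(\crs)$ then delivers the required uniformity over realizations of $\gamma_{[0,t_2]}$.
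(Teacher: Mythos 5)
Your overall strategy — pull everything back to $\Lambda^\delta_{t_1}$, expand the Loewner shift to second order, and cancel the explicit singular parts against $f^{(2)}$ and $f^{(1)}$ — is conceptually the same as the paper's, just applied to the multi-point observable as a single function rather than term by term (the paper expands each $g_t'(z)^{1/2}\tilde P_s\,f(\Lambda_t,a_s(t),g_t(z))$ separately using Lemma~\ref{lem: diff_one} for the $s=1$ and $s\geq 2$ cases, and then observes the Pfaffian identity $\sum_{s\geq 2}P_s f(\Lambda_{t_1},a_s(t_1),a_1(t_1))=0$ to kill the spurious $f^{(1)}$ contributions). Your algebra for the $1/v^3$ and $1/v^2$ coefficients is correct.

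The genuine gap is the step you flag yourself and then dismiss: the uniform $O(\epsilon^3)$ bound on the regularized remainder $\mathcal E$. This is precisely the content of Lemma~\ref{lem: diff_one} in the paper, and it is the hard part. Saying it ``follows from a Riemann--Hilbert-type estimate in the spirit of \cite{ChIz}'' and waving at compactness of $\CSpace(\crs)$ does not supply a proof. The paper proves Lemma~\ref{lem: diff_one} by contradiction: assume the bound fails along a sequence of configurations and cut-offs $\epsilon_i\to 0$, divide by $p_i\epsilon_i^{-3}$ (with $p_i\to 0$ slowly), verify that the normalized errors satisfy the hypotheses of Lemma~\ref{lem: remove_singularity} with $v\equiv 0$, and conclude that the normalized error converges to the zero spinor — contradicting the assumption. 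Without reproducing an argument of this type (or citing Lemma~\ref{lem: diff_one} explicitly, which you cannot do since the proposal is meant to replace the paper's proof), your expansion only identifies the \emph{form} of the leading terms; it does not establish that the discarded remainder is $O(\epsilon^3)$ uniformly over hulls $K$ and configurations in $\CSpace(\crs)$. In particular, near the hull $K$ the pulled-back observable is unbounded (you correctly note the $|u-u_{\mathrm{tip}}|^{-3/4}$ singularity), so a naive maximum-principle argument fails, and the compactness argument must be set up carefully so that the singularity shrinks to a point as $\epsilon_i\to 0$.

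A smaller but genuine error: you claim the pulled-back observable $h'(u)^{1/2}\tilde M_{t_2}(h(u))$ satisfies (\ref{eq: obs_on_bdry}) only ``modulo an $O(\Delta_t)$ correction arising from the time-dependence of $\sqrt{in_{a^\delta_{2k}(t)}}$.'' This is not the case: $\tilde M_{t_2}$ is, up to the fixed overall factor $\sqrt{in_{a_{2k}}}$ (which refers to the normal in the original domain $\Omega$ and does not change in $t$), a real linear combination of the covariant spinors $f(\Lambda_{t_2},a_s(t_2),\cdot)$, each satisfying (\ref{eq: obs_on_bdry}) exactly; and the boundary condition is preserved exactly under the conformal pullback by $h$ (that is the content of (\ref{eq: conf_inv_f})). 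There is no boundary-condition error to control, and your remainder $\mathcal E$ satisfies (\ref{eq: obs_on_bdry}) exactly on $\pa\Lambda^\delta_{t_1}\setminus K$ — which is exactly what makes the compactness argument (via the harmonic integral $h_i=\Im\int(\cdot)^2$ in Lemma~\ref{lem: remove_singularity}) go through.
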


This and the following lemma will allow one to use the martingale property of $M^\delta_{[0,n]}$ to extract information about conditional expectations of $\Delta_a$ and $\Delta_t$ given $\gamma_{[0,t]}$. 
\begin{lemma}
\label{lem: coef_vanish}
There is a constant $L=L(\comp)>0$ such that the following holds. Let $\gamma_{[1,t]}$ be a realization of the initial segment of the interface, $t<T^\delta_\crs$, and let $\Phi:\Lambda^\delta_{t}\to \C$ be a spinor of the form
\begin{equation}
\label{eq: Phi}
\Phi(z)=C_2f^{(2)}(\Lambda^\delta_{t},a_1(t),z)+ C_1f^{(1)}(\Lambda^\delta_t,a_1(t),z) + \sum\limits_{s=1}^{2k-1}c_s f(\Lambda^\delta_t,a_s(t),z) 
\end{equation}
with real constants $C_{1,2},c_1,\ldots,c_{2k-1}$. Then, 
$
|C_{1,2}|\leq L\max_{z\in g^\delta_t(\mathcal{C})}|\Phi(z)|.
$ 
\end{lemma}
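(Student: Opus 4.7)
My strategy is to reduce the lemma to a quantitative linear independence statement: the real-linear map $(C_2,C_1,c_1,\ldots,c_{2k-1})\mapsto \Phi$ from $\R^{2k+1}$ into $L^\infty(g^\delta_t(\comp))$ should admit a left inverse whose norm is uniformly bounded over the configuration space $\CSpace(\crs)$. Equivalently, I aim to prove that there exists $c=c(\comp)>0$ with
\begin{equation*}
\sup_{g^\delta_t(\comp)}|\Phi|\;\geq\; c\cdot\max\bigl(|C_2|,|C_1|,|c_1|,\ldots,|c_{2k-1}|\bigr)
\end{equation*}
for all configurations in $\CSpace(\crs)$; the lemma then follows with $L=1/c$, and in fact yields the stronger bound on all the coefficients, not just $C_{1,2}$.

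\textbf{Compactness argument.} I would prove the inequality above by contradiction. If it fails, there is a sequence $(\Lambda^{\delta_n}_{t_n},a_1^{\delta_n}(t_n),\ldots,a_{2k}^{\delta_n}(t_n))\in\CSpace(\crs)$ and real coefficient vectors of unit $\ell^\infty$-norm whose associated spinors $\Phi_n$ satisfy $\sup_{g^{\delta_n}_{t_n}(\comp)}|\Phi_n|\to 0$. By compactness of $\CSpace(\crs)$ in the combined $C^2$-plus-marked-points topology recorded in Section~\ref{sec: proof}, together with Bolzano--Weierstrass applied to the unit sphere of $\R^{2k+1}$, I can extract a subsequence along which the configurations converge to some $(\Lambda_\infty,a_{1,\infty},\ldots,a_{2k,\infty})\in\overline{\CSpace(\crs)}$ and the coefficients converge to a nonzero limit $(C_2^\infty,C_1^\infty,c_1^\infty,\ldots,c_{2k-1}^\infty)$. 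Continuity of $f(\Lambda,a,\cdot)$ (Corollary~\ref{cor: conv_obs}) and of the higher-order observables $f^{(1,2)}(\Lambda,a,\cdot)$ under $C^2$-convergence of boundaries then delivers a limit spinor $\Phi_\infty$ that vanishes identically on the limit set $K_\infty$ of the compact sets $g^{\delta_n}_{t_n}(\comp)$. Since $\comp$ has non-empty interior in $\Omega$ and the conformal maps $g^{\delta_n}_{t_n}$ converge uniformly on compacts to a conformal limit, $K_\infty$ has non-empty interior in $\Lambda_\infty$.

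\textbf{Singularity analysis.} By analytic continuation, $\Phi_\infty\equiv 0$ throughout $\Lambda_\infty\setminus\{a_{1,\infty},\ldots,a_{2k-1,\infty}\}$. Reading off the Laurent expansions: near $a_{1,\infty}$ the three terms $C_2^\infty f^{(2)}$, $C_1^\infty f^{(1)}$ and $c_1^\infty f(\cdot,a_{1,\infty},\cdot)$ carry poles of orders $3$, $2$ and $1$ respectively while the remaining $f(\cdot,a_{s,\infty},\cdot)$ are regular, so $C_2^\infty=C_1^\infty=c_1^\infty=0$; at each $a_{s,\infty}$ with $s\geq 2$ only $c_s^\infty f(\cdot,a_{s,\infty},\cdot)$ is singular, forcing $c_s^\infty=0$. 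This contradicts the unit-norm normalization.

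\textbf{Main obstacle.} The genuine technical difficulty is establishing the required continuity of the higher-order observables $f^{(n)}$ under $C^2$-convergence of almost-$\H$ domains, since Corollary~\ref{cor: conv_obs} only provides it for $f$ and its multi-point generalization. I expect to reduce $f^{(n)}$ to this case either via the identity $f^{(n)}=\tfrac{1}{n}\partial_a f^{(n-1)}$ from Lemma~\ref{lem: f_n} combined with Schauder-type estimates for the $a$-derivative uniformly away from $a$, or alternatively by rerunning the argument of Corollary~\ref{cor: conv_obs} directly on the boundary value problem characterizing $f^{(n)}$. Everything else---the compactness of $\CSpace(\crs)$, the unique continuation, and the singularity analysis---is essentially standard.
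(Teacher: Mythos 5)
Your proof is correct and follows essentially the same compactness-plus-contradiction route as the paper's, with the only difference that you spell out the singularity analysis that shows the limiting spinor is non-trivial, whereas the paper merely asserts it (linear independence via distinct pole orders). The one caveat: the continuity you flag as the ``main obstacle'' is not what Corollary~\ref{cor: conv_obs} gives (that is discrete-to-continuous convergence); the tool you actually want is Lemma~\ref{lem: remove_singularity}, which directly provides stability of $f^{(n)}$ under $C^1$-convergence of almost-$\H$ domains by reformulating the boundary value problem, exactly as you anticipate in your fallback option.
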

There will be an additional technical problem caused by the fact that $R=R_{\gamma_{[0,t_1]}}$ might be small or vanish. It turns out that this situation can be ruled out by changing the order of points. Note that while the expression in (\ref{eq: R}) is symmetric under permutations of $a_2,\ldots,a_{2k-1}$, the points $a_1$ and $a_{2k}$ play a special role.
\begin{lemma}
\label{lem: generic}
There exists a constant $c>0$ such that for every possible realization of $\gamma_{[0,t]}$, $t<T^\delta_\crs$, there is a reordering of the points $a_2^\delta,\ldots,a^\delta_{2k}$ after which $|R_{\gamma_{[0,t]}}|>c$.
\end{lemma}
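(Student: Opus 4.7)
The plan is to use the Pfaffian recursion for $Z(\Lambda_t, a_1(t), \ldots, a_{2k}(t))$ --- the denominator in (\ref{eq: R}) --- to obtain an identity relating the values $R^{(j)}_{\gamma_{[0,t]}}$ that (\ref{eq: R}) produces under each permutation that puts some $a_j$ ($j \geq 2$) into the special last slot, and then to conclude by pigeonhole that at least one of them must be bounded below.

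First, write $A_{mr} := \sqrt{i n_{a_r(t)}}\, f(\Lambda_t, a_m(t), a_r(t))$ for $1 \leq m, r \leq 2k$; this matrix is antisymmetric, by the reciprocity implicit in (\ref{eq: defZ}). Let $A^{[1,j]}$ denote the submatrix obtained by deleting rows and columns $1$ and $j$. The Pfaffian expansion along the first row reads
\[
Z(\Lambda_t, a_1(t), \ldots, a_{2k}(t)) = \sum_{j=2}^{2k} (-1)^{j}\, \sqrt{i n_{a_j(t)}}\, f(\Lambda_t, a_1(t), a_j(t))\, \mathrm{Pf}(A^{[1,j]}).
\]
For $j \in \{2, \ldots, 2k\}$, let $R^{(j)}_{\gamma_{[0,t]}}$ be the value produced by (\ref{eq: R}) after renaming $a_j$ as the last marked point. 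Up to signs coming from the Pfaffian reordering,
\[
\bigl|R^{(j)}_{\gamma_{[0,t]}}\bigr| = \frac{|\mathrm{Pf}(A^{[1,j]})|}{|g'_t(a_j)|^{1/2}\,|Z(\Lambda_t, a_1(t), \ldots, a_{2k}(t))|}.
\]
Note that $Z \neq 0$ by (\ref{eq: obs_Z}) combined with Lemma \ref{lem: non-vanish} applied in $\Lambda_t$. Dividing the expansion by $Z$ and taking absolute values then gives
\[
1 \leq \sum_{j=2}^{2k} \bigl|f(\Lambda_t, a_1(t), a_j(t))\bigr|\,|g'_t(a_j)|^{1/2}\,\bigl|R^{(j)}_{\gamma_{[0,t]}}\bigr|.
\]

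It remains to bound the coefficients $|f(\Lambda_t, a_1(t), a_j(t))|\,|g'_t(a_j)|^{1/2}$ uniformly over admissible realizations of $\gamma_{[0,t]}$, which I expect to be the main technical point. Compactness of the closure of $\CSpace(\crs)$ (in the topology combining $C^2$-convergence of boundaries with convergence of marked points) reduces this to continuity of $f$ and $g'_t(a_j)$ at limit configurations. Since $\crs$ separates $a_1(t)$ from the fixed points $a_j$ ($j \geq 2$), these coefficients are evaluated at pairs of boundary points that stay bounded apart, where $f_{\Cvr_Z}(\cdot, a_1, \cdot)$ is continuous up to the boundary away from its pole and $g_t$ is conformal in a fixed neighborhood of $a_j$; both extensions follow from the framework of \cite{ChIz}. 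This produces a uniform constant $C > 0$, and pigeonhole yields an index $j$ with $|R^{(j)}_{\gamma_{[0,t]}}| \geq 1/((2k-1)C)$. The corresponding relabeling proves the lemma, say with $c := 1/(2(2k-1)C)$.
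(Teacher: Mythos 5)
Your argument is correct and is essentially the paper's in a somewhat different packaging. The paper's proof observes that the minor Pfaffians $\mathrm{Pf}(A^{[1,j]})$, $j=2,\ldots,2k$, are precisely (up to sign) the coefficients $P_s$ in the expansion (\ref{eq: obs_tilda}) of the multi-point observable $f_{\Cvr_Z}(\Lambda^\delta_t,a^\delta_{2k}(t),a^\delta_2(t),\dots,a^\delta_{2k-1}(t),\cdot)$ as a linear combination of the single-point observables $f(\Lambda^\delta_t,a^\delta_s(t),\cdot)$, invokes Lemma \ref{lem: non-vanish} to conclude that this linear combination is not identically zero (the $f(\Lambda,a_s,\cdot)$ have distinct poles, hence are independent), so at least one $\mathrm{Pf}(A^{[1,j]})\neq 0$, and then gets the uniform lower bound on $\max_j|R^{(j)}|$ from continuity (Lemma \ref{lem: lipshitz}) plus compactness of the closure of $\CSpace(\crs)$. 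You replace the linear-independence step by the Pfaffian row expansion of $Z$ itself and the triangle inequality, then apply pigeonhole; this is algebraically the same identity (evaluating the paper's linear combination at $a_1(t)$ produces exactly the row expansion you write down), but it produces the quantitative bound $\max_j|R^{(j)}|\geq 1/((2k-1)C)$ directly rather than by a contradiction argument, which is a small but genuine simplification. The compactness input you need --- uniform upper bound on $|f(\Lambda_t,a_1(t),a_j(t))|\,|g'_t(a_j)|^{1/2}$, which relies on the marked points staying separated on the closure of $\CSpace(\crs)$ --- is the same one the paper's terse ``continuity and compactness'' appeals to, and is furnished by the same Lemma \ref{lem: lipshitz}. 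One small slip in the write-up: the non-vanishing of $Z$ in the continuous domain $\Lambda_t$ is Lemma \ref{lem: non-vanish} (together with Lemma \ref{lem: symmetry}, which identifies $Z(\Lambda_t,\ldots)$ with $\sqrt{in_{a_{2k}}}\,f_{\Cvr_Z}(\Lambda_t,a_1(t),\ldots,a_{2k}(t))$); equation (\ref{eq: obs_Z}) is the discrete analogue of that identity and is not what is needed here.
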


Recall that by definition, 
\begin{equation}
D(\Lambda,a_1,\ldots,a_{2k})=3\frac{\partial_{a_1}\mathrm{Pf}\,[\sqrt{in_{a_r}}f(\Lambda,a_m,a_r)]_{1\leq r,m\leq 2k}}{\mathrm{Pf}\,[\sqrt{in_{a_r}}f(\Lambda,a_m,a_r)]_{1\leq r,m\leq 2k}}.                                                       
\label{eq: def_D_bis}
\end{equation}
Therefore, $D(\Lambda^\delta_t,a^\delta_1(t),\ldots,a^\delta_{2k}(t))=-3R'_{\gamma_{[0,t]}}/R_{\gamma_{[0,t]}}$, as long as $R_{\gamma_{[0,t]}}\neq 0$. It will sometimes be convenient to view this quantity as a function of the values of the driving process $a^\delta_1(s)$, $s\leq t$. More generally, given an almost-$\H$ domain $\Lambda_0$ equipped with with marked points $a_1(0)\in\R$, \dots, $a_{2k}(0)$ and a cross-cut $\crs$ separating $a_1(0)$ from other marked points and other boundary components, write $D((a_1)_{[0,t]})$ for $D(\Lambda_t,a_1(t),\ldots,a_{2k}(t))$, where $a_1(t)$ and a continuous driving function of the Loewner chain $g_t$, $\Lambda_t=g_t(\Lambda)$ and $a_2(t)=g_t(a_2)$, \dots, $a_{2k}(t)=g_t(a_{2k})$. 

\begin{lemma}
\label{lem: D-lipshitz}Assume that $a_1(s)$ , $\tilde{a}_1(s)$ are two driving functions such that the corresponding hulls don't intersect $\crs$ at times $t$, $\tilde{t}$ respectively, and the initial conditions are given either by $\Lambda_0,a_1(0),\ldots,a_{2k}(0)$, or by  $\Lambda^\delta_0,a^\delta_1(0),\ldots,a^\delta_{2k}(0)$. Then 
$$
 \left|D((a_1)_{[0,t]})-D((\tilde{a}_1)_{[0,t]})\right|\leq L\left(|\tilde{t}-t|+\sup\limits_{0\leq s\leq t\wedge \tilde{t}} |a_1(s)-\tilde{a}_1(s)|\right),
$$
where a constant $L>0$ may depend on $\Lambda_0$, $\{a_i(0)\}$ and $\crs$, but not on $a_1(s)$, $\tilde{a}_1(s)$ or $\delta$.
\end{lemma}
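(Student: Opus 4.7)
The strategy is to factor the map $a_1(\cdot)\mapsto D((a_1)_{[0,t]})$ through two Lipschitz-continuous pieces: first, the map that sends a driving function and a time to the geometric data $(\Lambda_t,a_1(t),a_2(t),\ldots,a_{2k}(t))\in\overline{\CSpace(\crs)}$; and second, the map $D$ itself, viewed as a function on this space equipped with the topology of $C^2$-convergence of the boundary plus Euclidean convergence of marked points. Uniform control of constants will be obtained from the fact that the closure of $\CSpace(\crs)$ is compact in this topology.

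For the first piece I invoke standard estimates for the Loewner equation. Because the hulls $K_s$ and $\widetilde K_s$ stay disjoint from $\crs$ for $s\leq t\wedge\tilde t$, the preimage under $g_s$ of a small neighborhood of $\crs\cup(\pa\Lambda_0\setminus\R)$ is a compact subset of $\Lambda_0$ that remains uniformly separated from the tip. On such sets, standard ODE arguments show that $g_s$ and $g'_s$ depend locally Lipschitz on the driving function in $L^\infty([0,s])$-norm and on $s$. Hence the marked points $a_i(s)=g_s(a_i(0))$ for $i\geq 2$ and the boundary components of $\Lambda_s$ (in the $C^2$ topology) depend locally Lipschitz on the pair $((a_1)_{[0,s]},s)$. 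To handle the two possible families of initial data $(\Lambda_0,\{a_i(0)\})$ and $(\Lambda_0^\delta,\{a_i^\delta(0)\})$ in a single constant, I use that $\Lambda_0^\delta\to\Lambda_0$ in $C^2$ and $a_i^\delta(0)\to a_i(0)$, so all admissible initial configurations lie in a common compact family.

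For the second piece I use that $D=3\partial_{a_1}\log Z$ with $Z$ given by the Pfaffian \eqref{eq: defZ}. The observables $f_{\Cvr_Z}(\Lambda,a_m,a_r)$ solve the Riemann-Hilbert problem \eqref{eq: obs_on_bdry}--\eqref{eq: obs_at_a}; together with their $a_1$-derivatives they depend smoothly on $(\Lambda,a_1,\ldots,a_{2k})$ under $C^2$ perturbations of the boundary. This is most cleanly seen by transferring the problem to a reference almost-$\H$ domain through a conformal map that itself varies smoothly with $\Lambda$, and then applying Schauder-type estimates to the resulting boundary value problem with perturbed coefficients. Combined with Lemma~\ref{lem: non-vanish} and compactness of $\overline{\CSpace(\crs)}$ (which forces $|Z|$ to be bounded away from $0$), this makes $D$ a Lipschitz function on $\overline{\CSpace(\crs)}$. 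Composing the two pieces gives the claim.

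The main obstacle is the second step: obtaining a smoothness estimate for $f_{\Cvr_Z}$ (and $\partial_{a_1}f_{\Cvr_Z}$) that is uniform over almost-$\H$ domains whose conformal modulus ranges over a compact set, especially since these domains are multiply connected and genuinely different from one another. A compactness-plus-pointwise-smoothness argument in the spirit of \cite{Zhan-LERW} is expected to suffice: pointwise smoothness at each individual configuration is standard from potential theory, and uniformity is inherited from the $C^2$-compactness of the reachable configurations.
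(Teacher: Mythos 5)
Your proposal follows essentially the same two-step factorization the paper uses: Lipschitz dependence of $(\Lambda_t,a_1(t),\ldots,a_{2k}(t))$ on the driving function via standard Loewner estimates, then Lipschitz dependence of $D$ on this geometric data via compactness of $\overline{\CSpace(\crs)}$ plus non-vanishing of $Z$ (Lemma~\ref{lem: non-vanish}). The ``main obstacle'' you flag at the end---uniform $C^2$-smoothness of the observables over the compact family of domains---is precisely what the paper has already established as Lemma~\ref{lem: lipshitz} (via the compactness argument of Lemma~\ref{lem: remove_singularity}), so you should invoke it directly rather than sketch a fresh Schauder-type argument.
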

The latter estimate, in particular, guarantees the existence and the uniqueness of a strong solution to (\ref{eq: a_t}) (in fact, $\sqrt{3} B_t$ may be replaced by any continuous function) up to $T_\crs$. This can be proven in the usual way by Picard-Lindel\"of iteration (see, e. g., \cite[Theorem 3.1]{Zhan-LERW}).
 
\begin{proof}[Proof of Theorem \ref{thm: conv_int}]
We follow the standard scheme proposed in \cite{LSW} and also employed e. g. in \cite{SS_harm, Zhan-LERW}. We introduce a small parameter $\epsilon$ (eventually, $\epsilon$ will be taken to zero, but $\delta$ will be much smaller than $\epsilon$), and \emph{stopping times} $0=t_0<t_1<t_2<\dots<t_{N(\crs)}=T^\delta_\crs$, so that $t_{i+1}$ is the smallest $t>t_i$ such that  either $t-t_i=\epsilon^2$, or $|a^\delta_1(t)-a^\delta_1(t_{i})|=\epsilon$, or $t=T^\delta_\crs$. Note that $t_i$ and $t_{i+1}$ satisfy (\ref{eq: cond_t_i}) and (\ref{eq: cond_a_t_i}).

Pick a compact set $\mathcal{C}\subset \Omega$ with a non-empty interior separated from $a_1$ by $\crs_\Omega$. By Lemma~\ref{lem: conv_ratio}, for every fixed $\epsilon>0$ there is a $\delta_0>0$ such that
$$
\left|M^\delta_{\gamma_{[0,n_{i}]}}(z)-M_{\gamma_{[0,t_{i}]}}(z)\right|\leq \epsilon^3
$$
for all $i<N(\crs)$, $z\in\comp$ and $\delta<\delta_0$; recall that $n_i=n_{t_i}$ stands for the smallest $n$ such that $2\mathrm{hcap}(g_0^\delta(\gamma_{[0,n]}))\geq t_i$. If $\mathfrak{F}_{t_i}$ denotes the sigma-algebra generated by $\gamma_{[0,t_i]}$, then $\mathfrak{F}_{t_i}=\mathfrak{F}_{n_i}$. Therefore, by Proposition \ref{prop: mart}, 
$$\left|
\E\left[M_{\gamma_{[0,t_{i+1}]}}(z)-M_{\gamma_{[0,t_{i}]}}(z)|\mathfrak{F}_{t_{i}}\right]\right|\leq 2\epsilon^3.
$$
for all $z\in\comp$ and $i<N(\crs)$. Denote $\Delta_{a_i}:=a^\delta_1(t_{i+1})-a^\delta_1(t_{i})$, $\Delta_{t_i}:=t_{i+1}-t_{i}$, $R_i:=R_{\gamma_{[0,t_i]}}$, $R'_i:=R'_{\gamma_{[0,t_i]}}$. Applying Lemma \ref{lem: variation}, we get that  $(g^\delta_{t_i})'(z)^\frac12\Phi_i(z)=O(\epsilon^3)$ uniformly in $z\in\comp$ and $i<N(\crs)$, where $\Phi_i(z)$ is of the form (\ref{eq: Phi}) with $t=t_i$ and 
\begin{eqnarray*}
C_2=\E[R_i\Delta_{a_i}^2-3R_i\Delta_{t_i}|\mathfrak{F}_{t_i}],\\
C_1=\E[R_i\Delta_{a_i}+R'_i\Delta^2_{a_i}|\mathfrak{F}_{t_i}].
\end{eqnarray*} 
Since $|(g^\delta_{t_i})'(z)|$ is uniformly bounded from below over $\comp$, Lemma \ref{lem: coef_vanish}, applied to $\Phi_i/\sqrt{in_{a_{2k}}}$ (which has real coefficients), implies 
\begin{eqnarray*}
\E[R_i\Delta_{a_i}^2-3R_i\Delta_{t_i}|\mathfrak{F}_{t_i}]=O(\epsilon^3),\\
\E[R_i\Delta_{a_i}+R'_i\Delta^2_{a_i}|\mathfrak{F}_{t_i}]=O(\epsilon^3),
\end{eqnarray*} 
uniformly in $i<N(\crs)$. By Lemma \ref{lem: generic}, for every $i$ we can reorder the marked points so that  $|R_i|>c$, hence 
\begin{eqnarray}
\E[\Delta_{a_i}^2-3\Delta_{t_i}|\mathfrak{F}_{t_i}]=O(\epsilon^3),\label{eq: exp_Delta_1}\\
\E[\Delta_{a_i}-D_i\Delta_{t_i}|\mathfrak{F}_{t_i}]=O(\epsilon^3),\label{eq: exp_Delta_2}
\end{eqnarray} where $D_i=D_{\gamma_{[0,t_i]}}$.
 Define a function $w(t)$ by 
 $ w(t_i):=a^{\delta}_1(t_i)-\sum\limits_{s=0}^{i-1}D_{s}\Delta_{t_s},$ extending it in a piecewise linear way between $t_i$. Then (\ref{eq: exp_Delta_1}) and (\ref{eq: exp_Delta_2}) imply
$$
\begin{aligned}
&\E[(w(t_{i+1})-w(t_i))^2-3\Delta_{t_i}|\mathfrak{F}_{n_i}]=O(\epsilon^3),\\
&\E[w(t_{i+1})-w(t_i)|\mathfrak{F}_{n_i}]=O(\epsilon^3).
\end{aligned}
$$
It is well-known (see \cite[Section 3.3]{LSW} that these relations allow one to construct a coupling of the process $w_t$ with a standard Brownian motion $B_t$ and a sequence of stopping times $\tau_i$ for $B_t$, such that $w(t_{i})=\sqrt{3}B_{\tau_i}$ and, moreover, the probability that $\max|\tau_i-t_i|> \sqrt{\epsilon}$ is $O(\epsilon)$. Note that by Lemma \ref{lem: D-lipshitz} and the continuity of Brownian motion, this implies that $\P[\max\limits_{t<\tau_{N(\crs)}}|w(t)-\sqrt{3}B_t|> \epsilon^{\frac15}]=O(\epsilon)$. Let us show that on this event $a^{\delta}_1(t)$ and the solution $a_1(t)$ to (\ref{eq: a_t}) are also uniformly close to each other with high probability, up to $t=T^\delta_{\crs'}\wedge T_{\crs'}$, where $\crs'$ is any cross-cut separating $g_0(a_1)$ from $\crs$.

 The bound of Lemma \ref{lem: D-lipshitz} ensures that the sum in the definition of $w(t)$ can be replaced by an integral with a small error, i. e. there is a constant $C>0$, such that 
$$\left|a^{\delta}_1(t)-w(t)-\int_0^tD\left((a^{\delta}_1)_{[0,s]}\right)ds\right|<C\epsilon
$$
We thus can write
$$
|a_1^{\delta}(t)-a_1(t)|\leq|w(t)-\sqrt{3}B_t|+\int_0^t\left|D\left((a^{\delta}_1)_{[0,s]}\right)-D\left((a_1)_{[0,s]}\right)\right|ds+C\epsilon.
$$  
By Lipschitzness of the drift function $D$ (see Lemma \ref{lem: D-lipshitz}), it follows that on the event that $|w(t)-\sqrt{3}B_t|<\epsilon':=\epsilon^{\frac15}$ one has
$$
|a_1^{\delta}(t)-a_1(t)|\leq\epsilon'+C\epsilon+L\int_0^t\max\limits_{[0,s]}|a_{1}^{\delta}(\cdot)-a_{1}(\cdot)|ds.
$$
From Gronwall's lemma, we conclude that $|a_1^{\delta}(t)-a_1(t)|\leq(\epsilon'+C\epsilon)e^{Lt}\leq(\epsilon'+C\epsilon)e^{2L\text{hcap}(\crs')}$ as long as $t<T^\delta_{\crs'}\wedge T_{\crs'}$. Taking $\epsilon$ to zero concludes the proof.
\end{proof}

\begin{proof}[Proof of Corollary \ref{cor: stronger topology}]
We assume that the reader is familiar with the proof in \cite{CHS3}, which goes by checking Condition C from \cite{AnttiStas}. The only additional problem in the present case is that \cite{AnttiStas} only deals with simply-connected domains. However, due to a local nature of our result, this is irrelevant. Let $\widehat{\Omega}^\delta$ denote the simply-conncted domain obtained from $\Od$ by filling the connected components of $\widehat{\C}\setminus \Od$, except for the one containing  $a_1^\delta$. Let $\tilde{\gamma}$ be an interface in $\widehat{\Omega}^\delta$ that consists of $\gamma_{[0,T^\delta_\crs]}$ and its continuation to some auxilliary point $\hat{a}^\delta\in \pa \hat{\Omega}^\delta$ separated from $a_1^\delta$ by $\crs_{\Od}$. This interface fits into the framework of \cite{AnttiStas}. Note that unless the hitting point $\gamma_{T^\delta_\crs}$ is close to the boundary of $\Od$, this continuation (and the point $\hat{a}_1$) can be chosen so that it does not create any additional unforced 
crossings of annuli of large modulus, and the initial segment $\gamma_{[0,T^\delta_\crs]}$ does such crossings with probability bounded away from $1$ by Corollary 1.7 in \cite{CDH}, as explained in \cite{CHS3}. Modifying $\crs_{\Od}$ if necessary and applying the same corollary once more, we see that the probability of $\gamma_{T^\delta_\crs}$ to be close to the boundary is small. Therefore, Corollary follows from the results of \cite{AnttiStas} applied to $\tilde{\gamma}$.
\end{proof}

\subsection{Proof of the technical lemmas}

\label{sec: tech}

We start by making several remarks concerning the boundary value problem of the type (\ref{eq: obs_on_bdry}) on almost-$\H$ domains. First of all, note that if $f$ satisfies (\ref{eq: obs_on_bdry}) on $\partial \Lambda\setminus \{a\}$, then Schwarz reflection allows one to extend $f$ to an analytic function on $\Lambda^*\setminus \{a\}$, where $\Lambda^*=\Lambda\cup\overline{\Lambda}\cup \R$. Furthermore, note that $h(z):=\Im\int^z f^2(z)dz$ is a harmonic function, locally constant on $\pa\Lambda\setminus\{a\}$ (and therefore single-valued), with non-negative inner normal derivative. This, in particular, implies that if $f$ satisfies (\ref{eq: obs_on_bdry}) \emph{everywhere} on the boundary, then $f\equiv 0$. We will always choose a constant of integration for $h$ in such a way that $h\equiv 0$ on $\R\setminus \{a\}$. (The condition (\ref{eq: f_at_infty}) ensures that the constants on $(-\infty;a)$ and $(a,\infty)$ coincide.) Therefore, if $a\in\R$ and $h$ corresponds to $f=f(\Lambda,a,\cdot)$, then 
$h(z)=-\Im\frac{1}{z-a}+O(\Im z)$ as $\quad z\to a$. Differentiating and taking the square root, we get
\begin{equation}
f(\Lambda,a,z)=\frac1{z-a}+Q(\Lambda,a,z) 
\label{eq: obs_expansion}
\end{equation}
with $Q(\Lambda,a,\cdot)$ holomorphic in $\Lambda^*$ and $Q(\Lambda,a,a)=0$. 

We will need several auxilliary lemmas. The following one will be a convenient tool to prove variational formulae for the observables. Denote $B_r(w):=\{z:|z-w|<r\}$.
\begin{lemma}
\label{lem: remove_singularity}
 Let $\Lambda_i$ be a sequence of almost-$\H$ domains converging to $\Lambda$ in the sense of $C^1$, $a\in\R$, and $r_i>0$ tends to $0$ as $i\to\infty$. Suppose $f_i:(\Lambda_i\cup\pa\Lambda_i)\setminus B_{r_i}(a)\to \C$ are continuous holomorphic spinors satisfying (\ref{eq: obs_on_bdry}) on $\pa\Lambda_i\setminus B_{r_i}(a)$ and (\ref{eq: f_at_infty}). Assume that for some $R>0$, one can write $f_i(z)\equiv u_i(z)+v_i(z)$, $z\in B_R(a)\setminus B_{r_i}(a)$, where $u_i(\cdot)$ is analytic in $B_R(a)$ and for every $0<r<R$, $v_i(\cdot)\to v(\cdot)$ uniformly on $B_R(a)\setminus B_r(a)$. Then, there is a unique holomorphic spinor $f:\Lambda\to\C$ satisfying (\ref{eq: obs_on_bdry}), (\ref{eq: f_at_infty}), and $f(z)=v(z)+O(1)$ as $z\to a$. Moreover, $f_i(z_i)\to f(z)$ whenever $z_i\to z\in(\Lambda\cup\partial\Lambda)\setminus\{a\}$.
\end{lemma}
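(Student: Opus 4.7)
My plan is to split the argument into three parts: prove uniqueness of the limiting spinor $f$, establish a locally uniform bound for the family $\{f_i\}$ on compact subsets of $(\Lambda\cup\partial\Lambda)\setminus\{a\}$, and then combine these via a normal-families extraction that identifies $f$ as the limit along any subsequence, which is then upgraded to full convergence by uniqueness.

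Uniqueness will follow the template recalled at the start of Section~\ref{sec: tech}. If $f$ and $\tilde f$ both satisfy the claim, then $g:=f-\tilde f$ is bounded in a punctured neighborhood of $a$ because the singular part $v$ cancels; Schwarz reflection across $\R$ together with Riemann's removable-singularity theorem extend $g$ holomorphically across $a$. The extension satisfies (\ref{eq: obs_on_bdry}) on all of $\partial\Lambda$ and (\ref{eq: f_at_infty}), and the associated harmonic function $\Im\int^z g^2\,dw$ is locally constant on every boundary component, has non-negative inner normal derivative, and vanishes at infinity; as recalled at the start of Section~\ref{sec: tech}, this forces $g\equiv 0$.

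For the uniform bound I will argue by contradiction: if $M_i:=\sup_K|f_i|\to\infty$ on some compact $K\subset(\Lambda\cup\partial\Lambda)\setminus\{a\}$, I renormalize $\tilde f_i:=f_i/M_i$, so that $\tilde f_i$ solves the same linear problem while its near-$a$ decomposition $\tilde f_i=\tilde u_i+\tilde v_i$ has $\tilde v_i\to 0$ uniformly on every annulus around $a$. The $C^1$-convergence of $\partial\Lambda_i$ to $\partial\Lambda$ permits Schwarz reflection of $\tilde f_i$ across each boundary component away from $a$ with uniform control, while the maximum principle applied to the analytic pieces $\tilde u_i$ transfers any bound on $\partial B_{R/2}(a)$ inward. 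Together these produce a normal family on $(\Lambda\cup\partial\Lambda)\setminus\{a\}$, and a diagonal extraction yields a subsequential limit $\tilde f$ solving the same problem with $v\equiv 0$; by the uniqueness above, $\tilde f\equiv 0$, contradicting $\sup_K|\tilde f|=1$. With the uniform bound established, the same extraction applied directly to $f_i$ yields a subsequential limit $f$ with all required properties---in particular $u_i=f_i-v_i$ converges locally uniformly in $B_R(a)$ to an analytic function, giving the $v+O(1)$ expansion at $a$---and uniqueness promotes subsequential to full convergence, including at points $z_i\to z\in\partial\Lambda\setminus\{a\}$ thanks to the uniform boundary control afforded by the reflection.

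The main obstacle I expect is precisely the propagation of the \emph{local} normalization $\sup_K|\tilde f_i|=1$ to a genuinely uniform bound on an exhaustion of $(\Lambda\cup\partial\Lambda)\setminus\{a\}$. In the multiply-connected setting the harmonic function $\tilde h_i:=\Im\int^z\tilde f_i^2\,dw$ takes a priori unknown constants on inner boundary components, and those constants are part of the global data determining $\tilde f_i$. I plan to read them off by integrating $\tilde f_i^2$ along arcs starting on $\R$, which expresses them as continuous functionals of $\tilde f_i$ on a separating curve close to $a$; together with the smallness of $\tilde v_i$ and the decay at infinity (\ref{eq: f_at_infty}), this keeps the constants bounded, after which the maximum principle for $\tilde h_i$ on $\Lambda_i\setminus B_{R/2}(a)$ closes the loop.
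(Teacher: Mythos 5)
Your overall architecture --- uniqueness via removable singularity and the Section~\ref{sec: tech} preamble, then a normal-families contradiction argument, then upgrade of subsequential to full convergence --- is the same as the paper's. But you have chosen a different, and harder, normalization, and your proposed workaround for the resulting difficulty is not complete.

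The paper sets $M_i := \sup_{\Lambda_i \setminus B_{R/2}(a)}|f_i|$, a \emph{global} supremum over the whole domain away from $a$. This quantity is finite because each $f_i$ is continuous up to $\partial\Lambda_i \setminus B_{r_i}(a)$ and decays at $\infty$ by (\ref{eq: f_at_infty}). Once one assumes $M_i$ bounded, the uniform bound on every compact subset of $\Lambda^* \setminus \{a\}$ is essentially definitional: outside $B_{R/2}(a)$ it is $M_i$ itself, and inside the annulus $B_R(a)\setminus B_r(a)$ it follows from the maximum principle applied to $u_i = f_i - v_i$ together with the uniform convergence of $v_i$. There is nothing to propagate. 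The case $M_i \to\infty$ is then handled by renormalizing to $\tilde f_i := M_i^{-1}f_i$, applying the bounded case with $\tilde v_i \to 0$ to conclude $\tilde f_i(z_i)\to 0$ for all $z_i\to z$, and contrasting with a point $z_i\in\Lambda_i\setminus B_{R/2}(a)$ where $|\tilde f_i(z_i)| > 1/2$; the decay at $\infty$ keeps $z_i$ in a fixed compactum $B_{R_1}(a)$ and yields the contradiction.

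You instead normalize by $\sup_K|f_i|$ for a single compact $K$, and you correctly flag that this does \emph{not} control $f_i$ elsewhere: in a multiply-connected domain the constants $C_i^{(j)}$ on the inner boundary components are free parameters, and unless they are controlled the harmonic function $\tilde h_i = \Im\int\tilde f_i^2$ can blow up on the inner components even while $\tilde f_i$ stays bounded on $K$. Your proposed fix --- reading the constants off by integrating $\tilde f_i^2$ along arcs from $\R$ --- is on the right track but is not yet a proof. It requires the arcs to each inner boundary component, plus a curve controlling $\tilde h_i$ on $\partial B_{R/2}(a)$, to lie inside $K$, which forces a preliminary enlargement of $K$ (allowable, since failure of a bound on $K$ implies failure on any $K'\supset K$, but this needs to be said). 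The phrase ``a separating curve close to $a$'' does not match what is needed, since the inner components are away from $a$. It also requires a Schwarz-reflection step across the inner boundaries, after subtracting the constants $C_i^{(j)}$, to convert the $L^\infty$ bound on $\tilde h_i$ into a bound on $\tilde f_i = \sqrt{2\partial_z \tilde h_i}$ up to the boundary; and the appeal to ``decay at infinity'' needs a uniform constant in $i$, which again presupposes a uniform bound on $\tilde f_i$ on some fixed circle $|z|=R_1$, hence is circular as stated. Finally, the contradiction ``$\tilde f\equiv 0$ vs.\ $\sup_K|\tilde f|=1$'' needs the extra step, present in the paper, of extracting maximizing points $z_i\in K$, passing to a convergent subsequence, and invoking the ``up to the boundary'' convergence to conclude $|\tilde f(z_0)|=1$.

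In summary: the paper's global choice of $M_i$ is the device that makes the argument go through without extra work, and that is the piece you are missing. Your local normalization can in principle be patched up along the lines you sketch, but the patch is substantially more involved than the original argument and your sketch of it has the gaps listed above.
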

\begin{proof}
Put $M_i=\sup_{\Lambda_i\setminus B_{R/2}(a)}|f_i|$ and assume first that $M_i$ are uniformly bounded. Then, for every $r>0$, $f_i$ are uniformly bounded on  $B_R(a)\setminus B_{r}(a)$, since $v_i$ converge uniformly on this set and 
$$
\sup_{B_R(a)\setminus B_{r} (a)}|u_i|\leq \sup_{B_R\setminus B_{R/2} (a)}|u_i|\leq M_i+\sup_{B_R\setminus B_{R/2} (a)}|v_i|,$$ 
where we have applied maximum principle to $u_i$. Therefore, we may assume (after extracting a subsequence) that $f_i$ converge to a spinor $f$ uniformly on compact subsets of $\Lambda^*\setminus \{a\}$. Applying once again the maximum principle to $u_i=f_i-v_i$, we infer that $u=f-v$ is bounded in a neighborhood of $a$. Consider the harmonic functions $h_i(w):=\Im \int^w f_i^2(z)dz$, and denote their constant values on the $m$ inner boundary components by $C_i^{(1)},\ldots,C_i^{(m)}$. From convergence of $f_i$ inside $\Lambda$ it is clear that these constants must converge to the boundary values $C^{(1)},\ldots,C^{(m)}$ of $h=\int f^2$, from which we in turn deduce that $f$ satisfies (\ref{eq: obs_on_bdry}) and (\ref{eq: f_at_infty}). Moreover, since $\pa \Lambda$ converges in the sense of $C^1$, it is easy to see that the convergence of $\nabla h_i$, and hence $f_i=\sqrt{2\partial_z h_i}$, holds ``up to the boundary'' as stated in the assertion. As explained above, the singularity at $a$ determines $f$ 
uniquely, hence the convergence holds without extraction.

Assume now that $M_i\to \infty$ along a subsequence. Then we can apply the above reasoning to $\tilde{f}_i:=M^{-1}_if_i$. Since $M^{-1}_iv_i\to 0\cdot v=0$; this yields $\tilde{f}_i(z_i)\to 0$ whenever $z_i\to z\in (\Lambda\cup\partial\Lambda)\setminus \{a\}$. On the other hand, by definition of $M_i$, we could find $z_i\in \Lambda_i\setminus B_{R/2}(a)$ such that $|\tilde{f}_i(z_i)|>\frac12$. Since $\tilde{f}_i$ are holomorphic at $\infty$ by (\ref{eq: f_at_infty}), we may assume $z_i\in B_{R_1}(a)$ for a large fixed $R_1$. Extracting a convergent subsequence $z_{i_s}$ now yields a contradiction. 
\end{proof}

\begin{proof}[Proof of Lemma \ref{lem: f_n}.]
The proof goes by induction in $n$. The base is given by $f^{(0)}(\Lambda,a,\cdot)=f(\Lambda,a,\cdot)$. For the induction step, just apply Lemma \ref{lem: remove_singularity} to $\Lambda_i\equiv \Lambda$, $f_i(z):= s^{-1}_i [f^{(n)}(\Lambda,a+s_i,z)-f^{(n)}(\Lambda,a,z)]$ and $v_i(z):=s_i^{-1}\left[\frac 1{(z-a-s_i)^{n+1}}-\frac1{(z-a)^{n+1}}\right]$, where $s_i\to 0$.
\end{proof}
In a similar manner, one obtains the following Lipschitzness result.

\begin{lemma}
\label{lem: lipshitz} 
Given a $C^2$-compact set $\Theta$ of almost-$\H$ domains, $a\in\R$, $n\geq 0$ and $r>0$, there is a constant $L=L(\Theta,a,n,r)$ such that if $\Lambda,\Lambda'\in \Theta$, $z\in\Lambda\cup\partial\Lambda$, $z'\in\Lambda'\cup\partial\Lambda'$ and $|z-a|>r$, $|z'-a|>r$, then 
$$
|f^{(n)}(\Lambda,a,z)-f^{(n)}(\Lambda',a,z')|\leq L \left(d_{C^1}(\Lambda',\Lambda)+|z'-z|\right).
$$
\end{lemma}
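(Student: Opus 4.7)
I would split the bound by the triangle inequality into Lipschitzness in $z$ at fixed $\Lambda$ and Lipschitzness in $\Lambda$ at an evaluation point close to both $\partial\Lambda$ and $\partial\Lambda'$; after Schwarz-reflecting across analytic boundary arcs, both observables extend to a common collar of $\partial\Lambda\approx\partial\Lambda'$, so this split is meaningful. Both parts rest on the preliminary uniform bound
$$
\sup_{\Lambda\in\Theta,\,|z-a|\geq r/2}\bigl|f^{(n)}(\Lambda,a,z)\bigr|<\infty,
$$
which I would prove by contradiction: were it to fail, $C^2$-compactness of $\Theta$ and the convergence up to the boundary in Lemma~\ref{lem: remove_singularity} would force the supposedly divergent values $f^{(n)}(\Lambda_i,a,z_i)$ along some extracted subsequence to converge to the finite limit $f^{(n)}(\Lambda_\infty,a,z_\infty)$.

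Given this bound, Lipschitzness in $z$ at fixed $\Lambda$ follows from Cauchy's estimate at interior points with $\mathrm{dist}(z,\partial\Lambda)\geq\eta$. To reach boundary points I would Schwarz-reflect $f^{(n)}(\Lambda,a,\cdot)$ across each analytic boundary arc (and across $\R$), extending it holomorphically to a one-sided neighborhood of $\partial\Lambda\setminus\{a\}$; $C^2$-compactness of $\Theta$ provides local straightening biholomorphisms of uniformly bounded $C^1$-norm, making the reflection collar of uniform width over $\Theta$. Cauchy estimates inside the extended domain then yield a Lipschitz constant in $z$ uniform over $\Lambda\in\Theta$ on $\{|z-a|\geq r\}$.

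For Lipschitzness in $\Lambda$ I would argue by contradiction. If it failed, one could extract $\Lambda_i,\Lambda_i'\in\Theta$ and $z_i$, $|z_i-a|\geq r$, with $\eta_i:=s_i^{-1}|f^{(n)}(\Lambda_i,a,z_i)-f^{(n)}(\Lambda_i',a,z_i)|\to\infty$, where $s_i:=d_{C^1}(\Lambda_i,\Lambda_i')$. By $C^2$-compactness pass to a subsequence along which $\Lambda_i,\Lambda_i'\to\Lambda_\infty$ and $z_i\to z_\infty$; the first paragraph's bound rules out $s_i$ bounded below, so $s_i\to 0$. I would then transport $f^{(n)}(\Lambda_i',a,\cdot)$ onto $\Lambda_i$ via a conformal map $\Phi_i\colon\Lambda_i\to\Lambda_i'$ fixing $a$ and hydrodynamic at infinity, for which $C^2$-closeness yields $\|\Phi_i-\mathrm{id}\|_{C^1(\Lambda_i\setminus B_{r/4}(a))}=O(s_i)$, and form the rescaled difference
$$
g_i(z):=s_i^{-1}\Bigl(f^{(n)}(\Lambda_i,a,z)-\mathcal{T}_{\Phi_i}(z)\,f^{(n)}\bigl(\Lambda_i',a,\Phi_i(z)\bigr)\Bigr),
$$
where the conformal weight $\mathcal{T}_{\Phi_i}$ is derived from (\ref{eq: conf_inv_f}) and Lemma~\ref{lem: f_n}, chosen so that the pole of order $n+1$ at $a$ and the decay at infinity cancel in $g_i$. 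By construction $g_i$ is a holomorphic spinor on $\Lambda_i$ obeying (\ref{eq: obs_on_bdry}) with an $O(1)$ boundary defect (coming from the $C^1$-mismatch between $\partial\Lambda_i$ and $\Phi_i^{-1}(\partial\Lambda_i')$, of size $O(s_i)$ before rescaling), and a Lemma~\ref{lem: remove_singularity}-type compactness argument delivers a uniform bound $|g_i(z_i)|=O(1)$, contradicting $\eta_i\to\infty$.

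The main obstacle is executing this transport step with the right normalization: one has to choose $\Phi_i$ and the weight $\mathcal{T}_{\Phi_i}$ precisely enough that the pole at $a$ genuinely cancels after dividing by $s_i$, and that the residual boundary defect is $O(1)$ rather than $O(s_i^{-1})$. A convenient way to bypass the awkward transformation rule of $f^{(n)}$ for $n\geq 1$ is to first prove the case $n=0$, where the clean covariance (\ref{eq: conf_inv_f}) applies directly, upgrading the output to joint Lipschitzness in $(\Lambda,a,z)$ uniform for $a$ in a small real interval, and then to derive $n\geq 1$ by induction from $f^{(n)}=\tfrac{1}{n}\partial_a f^{(n-1)}$ (Lemma~\ref{lem: f_n}) via Cauchy-type bounds in the real parameter $a$, using analytic dependence of the boundary value problem (\ref{eq: obs_on_bdry})--(\ref{eq: obs_at_a}) on $a$.
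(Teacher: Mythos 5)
Your preliminary steps --- compactness of $\Theta$, the contradiction setup, the uniform bound on $f^{(n)}$, and the $z$-Lipschitzness at fixed $\Lambda$ via Schwarz reflection and Cauchy estimates --- are sound, and the last two are essentially what the paper does for the first term of its triangle-inequality decomposition. The gap is in the $\Lambda$-Lipschitzness: your transport step requires a conformal map $\Phi_i\colon\Lambda_i\to\Lambda_i'$ fixing $a$ and hydrodynamic at infinity, and for multiply connected almost-$\H$ domains such a map generically \emph{does not exist}. Two finitely connected planar domains of connectivity at least two are conformally equivalent only when their conformal moduli coincide, and the moduli of $\Lambda_i$ and $\Lambda_i'$ will generically differ by something of order $d_{C^1}(\Lambda_i,\Lambda_i')$. (Even if the moduli did match, the conformal automorphism group of such a domain is trivial, so you cannot additionally impose $\Phi_i(a)=a$ and hydrodynamic normalization at infinity.) Without a conformal $\Phi_i$ the weighted pullback $\mathcal{T}_{\Phi_i}(z)\, f^{(n)}(\Lambda_i',a,\Phi_i(z))$ is unavailable, and substituting a non-conformal diffeomorphism destroys (\ref{eq: conf_inv_f}) and hence the Riemann boundary value problem that your $g_i$ would have to satisfy.

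The paper sidesteps this by comparing both $f^{(n)}(\Lambda_i,a,\cdot)$ and $f^{(n)}(\Lambda_i',a,\cdot)$ to the observable on a common almost-$\H$ domain $\Lambda_i''\supset\Lambda_i\cup\Lambda_i'$ chosen so that $d_{C^2}(\Lambda_i,\Lambda_i'')\leq 10\,d_{C^2}(\Lambda_i,\Lambda_i')$. The difference $\hat f_i:=f^{(n)}(\Lambda_i,a,\cdot)-f^{(n)}(\Lambda_i'',a,\cdot)$ then lives on $\Lambda_i$, satisfies (\ref{eq: obs_on_bdry}) exactly on $\R$ and with an $O(d_i)$ defect on the curved boundary arcs; the estimate is closed via the harmonic function $\Im\int\hat f_i^2$, the maximum principle, and a boundary Harnack inequality, with no conformal map between $\Lambda_i$ and $\Lambda_i'$ ever invoked. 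If you want to salvage your route, you should enlarge rather than map.

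A secondary point: proving $n=0$ first and inducting via $f^{(n)}=\tfrac{1}{n}\partial_a f^{(n-1)}$ does not bypass the transport problem (the base case already needs it), and Lipschitzness of $f^{(n-1)}$ in $\Lambda$ does not automatically pass to its $a$-derivative without a further quantitative input such as joint analytic dependence on $(a,\Lambda)$; that would be an additional nontrivial lemma rather than a shortcut.
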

\begin{proof} Assume that such $L$ does not exist. Then, there exist sequences $\Lambda_i,\Lambda_i', z_i, z_i'$ satisfying the conditions of the Lemma and $p_i\to 0$ such that $p_i d_i^{-1}|f^{(n)}(\Lambda_i,a,z_i)-f^{(n)}(\Lambda'_i,a,z'_i)|$ does not tend to zero, where $d_i=d_{C^1}(\Lambda'_i,\Lambda_i)+|z_i'-z_i|$. By passing to a subsequence, we may assume that $\Lambda_i$ and $\Lambda'_i$ converge to $\Lambda$ and $\Lambda'$ in $C^2$ metric and that $z_i\to z$, $z'_i\to z'$. By Lemma \ref{lem: remove_singularity}, one has $f^{(n)}(\Lambda_i,a,z_i)\to f^{(n)}(\Lambda,a,z)$ and $f^{(n)}(\Lambda'_i,a,z'_i)\to f^{(n)}(\Lambda',a,z')$, therefore we must have $\Lambda=\Lambda'$ and $z=z'$.

Let $\Lambda''_i$ be an almost-$\H$ domain such that $\Lambda_i\subset\Lambda''_i$ $\Lambda'_i\subset\Lambda''_i$ and $d_{C^2}(\Lambda_i,\Lambda''_i)<10 d_{C^2}(\Lambda_i,\Lambda'_i)$. Then 
\begin{multline}
|f^{(n)}(\Lambda_i,a,z_i)-f^{(n)}(\Lambda'_i,a,z'_i)|\leq |f^{(n)}(\Lambda''_i,a,z_i)-f^{(n)}(\Lambda''_i,a,z'_i)|\\+|f^{(n)}(\Lambda_i,a,z_i)-f^{(n)}(\Lambda''_i,a,z_i)|+|f^{(n)}(\Lambda'_i,a,z'_i)-f^{(n)}(\Lambda''_i,a,z'_i)|
\label{eq: lip_triangle}
\end{multline}
 We will show that all three terms tend to zero when multiplied by $p_id_i^{-1}$, which yields a contradiction. Indeed, if $z\in\Lambda$ or $z\in\R\cup\infty$, then one has 
 $$
 d_i^{-1}|f^{(n)}(\Lambda''_i,a,z_i)-f^{(n)}(\Lambda''_i,a,z'_i)|\leq \frac{|f^{(n)}(\Lambda''_i,a,z_i)-f^{(n)}(\Lambda''_i,a,z'_i)|}{|z_i-z_i'|}\to |\partial_zf(\Lambda,a,z)|,
 $$
 since the uniform convergence of holomorphic functions implies the $C^{\infty}$ convergence. In the case $z\in \partial \Lambda\backslash \R$, let $l$ denote the connected component of $\C\setminus \Lambda$ containing $z$, and $l_i$ the corresponding component of $\C\setminus \Lambda''_i$. Let $\varphi_i$ be a conformal map from $\hat{\C}\setminus l_i$ to $\H$, chosen so that $\varphi_i$ converge to a conformal map $\varphi$ uniformly on compact sets of $\hat{\C}\setminus l$ and in the sense of $C^2$ up to $l$. The uniqueness in the boundary value problem (\ref{eq: obs_on_bdry}) implies the following generalization of the conformal covariance property (\ref{eq: conf_inv_f}):
 \begin{equation}
f^{(n)}(\Lambda''_i,a,\cdot)=\sum_{k=0}^n c^{(k)}_i \varphi_i'(\cdot)^{\frac12}f^{(k)}(\varphi_i(\Lambda''_i),\varphi_i(a),\varphi_i(\cdot)),  
\label{eq: conf_cov_f_n}
 \end{equation}
 where the real coefficients $c^{(k)}_i$, depending on the first $k$ derivatives of $\varphi_i$ at $a$, converge to the corresponding coefficients $c^{(k)}$. Employing this relation, Lemma \ref{lem: remove_singularity} and induction on $k$, it is easy to see that $f^{(k)}(\varphi_i(\Lambda''_i),\varphi_i(a),w_i)$ converges to $f^{(k)}(\varphi(\Lambda),\varphi(a),w)$ whenever $w_i\to w\in \Lambda\cup\overline{\Lambda}\cup \R$. Using Schwarz reflection, we see that this convergence holds in the $C^{\infty}$ sense in a neighborhood of such a $w$, which implies the bound 
 $$|f^{(k)}(\varphi_i(\Lambda''_i),\varphi_i(a),w_i)-f^{(k)}(\varphi_i(\Lambda''_i),\varphi_i(a),w'_i)|\leq C|w_i'-w_i|;$$ 
 hereinafter $C$ denotes a positive quantity that does not depend on $i$. By the assumptions on the sequence $\varphi_i(z)$, we have $|\varphi_i'(z_i)^{\frac12}-\varphi_i'(z_i')^{\frac12}|\leq C|z_i'-z_i|$ and $|\varphi_i(z_i)-\varphi_i(z_i')|\leq C|z_i'-z_i|$. Plugging this estimates into (\ref{eq: conf_cov_f_n}) gives that
 $$
 \left|f^{(n)}(\Lambda''_i,a,z_i)-f^{(n)}(\Lambda''_i,a,z'_i)\right|\leq C |z'_i-z_i|\leq C d_i,
 $$
 which concludes the proof of the required bound for the first term in (\ref{eq: lip_triangle}).
 For the second one, and similarly for the third one, note that $\hat{f}_i(\cdot):=f^{(n)}(\Lambda_i,a,\cdot)-f^{(n)}(\Lambda''_i,a,\cdot)$ satisfies (\ref{eq: obs_on_bdry}) on $\R$. Given $w\in \partial \Lambda_i\backslash \R$, pick $w'\in\pa \Lambda''_i$ such that $|w-w'|\leq d_i$ and $|n_w-n_{w'}|\leq d_i$. Then  
\begin{multline}
\left|\Im\hat{f}_i(w)\sqrt{i n_w}\right|\\=\left|\Im \left(f^{(n)}(\Lambda''_i,a,w)(\sqrt{i n_w}-\sqrt{in_{w'}})+(f^{(n)}(\Lambda''_i,a,w)-f^{(n)}(\Lambda''_i,a,w'))\sqrt{in_{w'}}\right)\right|\\ \leq 2d_i|f^{(n)}(\Lambda''_i,a,w)|+ |f^{(n)}(\Lambda''_i,a,w)-f^{(n)}(\Lambda''_i,a,w')|\leq C\cdot d_i,
\label{eq: lip_bdry_cond}
\end{multline}
with some constant $C$ independent of $i$. Hence the derivative of $\hat{h}_i=\Im\int(p_id^{-1}_i\hat{f}_i)^2$ in the tangential direction is $o(1)$ uniformly on $\pa\Lambda_i$. By maximum principle, this implies that up to uniform $o(1)$, $\hat{h}_i$ is equal to a harmonic function with constant boundary values on each boundary component. By one more extraction, dividing by the maximum of those constants if necessary, we may assume that $\hat{h}_i$ has a limit $h$. To conclude the proof, it suffices to show that this limit must be zero, which will imply $p_id^{-1}_i\hat{f}_i(\cdot)=\sqrt{2\partial_z\hat{h}_i(z_i)}\to 0$. 

Assume, on the contrary, that the limit of $h$ is non-trivial, and consider the boundary component with the maximal (constant) value of $h$. Boundary Harnack's estimate implies that there is a constant $c>0$ such that for $i$ large enough, the outer normal derivative of $\hat{h}_i$ is greater than $c$ on the corresponding component of $\Lambda_i$. Thus 
$$
c\leq p^2_id^{-2}_i\Im\left[\hat{f}^2_i(w)n_w\right]=-p^2_id^{-2}_i\left(\Re \hat{f}_i(w)\sqrt{in_w})\right)^2+p^2_id^{-2}_i\left(\Im \hat{f}_i(w)\sqrt{in_w})\right)^2, 
$$
which, together with (\ref{eq: lip_bdry_cond}), leads to a contradiction.
\end{proof}

\begin{lemma}
\label{lem: diff_one}
 Let $\epsilon>0$. Assume that $\Lambda$ is an almost-$\H$ domain, $a\in\R$, $b\in \pa\Lambda$ such that $|b-a|>10\epsilon$, and let $K\subset \overline{\H}$ be an $\H$-hull of twice half-plane capacity $t<2\epsilon^2$ and such that $K\subset \{z:|\Re (z-a)|<2\epsilon\}$. Let $g_K(z)$ denote the conformal map from $\H\setminus K$ to $\H$ with hydrodynamic normalization at infinity, $\Lambda_K:=g_K(\Lambda)$. Then
\begin{multline}
g_{K}'(z)^\frac12f(\Lambda_K,a,g_{K}(z))- f(\Lambda,a,z)=\\t\left(-3f^{(2)}\left(\Lambda,a,z\right)+\partial_zQ(\Lambda_K,a,a)f(\Lambda,a,z)\right)+O(\epsilon^3),  \label{eq: var_a}                                                                                                         
\end{multline}
and 
\begin{multline}
|g_{K}'(b)|^\frac12g_{K}'(z)^\frac12f(\Lambda_K,g_K(b),g_{K}(z))- f(\Lambda,b,z)=\\t\left(2\partial_zf(\Lambda_K,g_K(b),a)f(\Lambda,a,z)-f(\Lambda_K,g_K(b),a)f^{(1)}(\Lambda,a,z)\right)+O(\epsilon^3),                                                                                                           
\label{eq: var_b}
\end{multline}
where $O(\epsilon^3)$ is uniform over $C^2$-compact subsets of  $(\Lambda,a)$ and over $b,z$ at distance at least $r$ from $a$ for each fixed $r>10\epsilon$. 
\end{lemma}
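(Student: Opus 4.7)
The proof of both variational formulae follows the same template, based on Lemma~\ref{lem: remove_singularity}. For (\ref{eq: var_a}), I set
\[ h(z) := g_K'(z)^{1/2} f(\Lambda_K, a, g_K(z)) - f(\Lambda, a, z), \]
a holomorphic $\Cvr_Z$-spinor on $\Lambda\setminus\overline{K}$. By conformal covariance~(\ref{eq: conf_inv_f}) of the boundary value problem together with the hydrodynamic normalization of $g_K$, the spinor $h$ satisfies (\ref{eq: obs_on_bdry}) on $\pa\Lambda\setminus\overline{K}$ and decays at infinity as in (\ref{eq: f_at_infty}). The plan is to compute the principal part of $h$ at $a$ up to an $O(\epsilon^3)$ remainder, and then to deduce that the difference of $h$ and the stated right-hand side is $O(\epsilon^3)$ uniformly on compacts at distance $\geq r$ from $a$, via the compactness-uniqueness argument used in Lemma~\ref{lem: remove_singularity} and Lemma~\ref{lem: lipshitz}.

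To extract the principal part, I combine the standard Loewner-type expansion of $g_K$ around points at distance $\geq r>10\epsilon$ from $a$ (valid because $K\subset\{|\Re(z-a)|<2\epsilon\}$ and $\mathrm{hcap}(K)=O(\epsilon^2)$, hence $\mathrm{diam}(K)=O(\epsilon)$, so the tail of the expansion is of size $O(\epsilon^3)$ on the compact) with the Laurent expansion~(\ref{eq: obs_expansion}), $f(\Lambda_K,a,w)=(w-a)^{-1}+Q(\Lambda_K,a,w)$, $Q(\Lambda_K,a,a)=0$. A direct computation at $z=a$ shows that the residues combine to produce the principal part $-3t/(z-a)^3 + t\,\partial_z Q(\Lambda_K,a,a)/(z-a)$, where the coefficient at $(z-a)^{-1}$ collects contributions from $g_K'(z)^{1/2}Q(\Lambda_K,a,g_K(z))$ after substituting the Taylor expansion of $Q(\Lambda_K,a,\cdot)$ at $a$ and using $Q(\Lambda_K,a,a)=0$. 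This matches the singular part of the stated right-hand side $t\bigl(-3f^{(2)}(\Lambda,a,z)+\partial_z Q(\Lambda_K,a,a)f(\Lambda,a,z)\bigr)$, and the bounded-at-$a$ contribution $Q(\Lambda_K,a,z)-Q(\Lambda,a,z)$ becomes a domain-variation term absorbed by the same $\partial_z Q(\Lambda_K,a,a)$ coefficient up to an $O(\epsilon^3)$ residual. After subtraction, the compactness-uniqueness argument promotes the resulting pointwise $O(\epsilon^3)$ bound near $a$ to a uniform bound on the compact.

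For (\ref{eq: var_b}) the same template applies to $h_2(z):=|g_K'(b)|^{1/2}g_K'(z)^{1/2}f(\Lambda_K, g_K(b), g_K(z)) - f(\Lambda, b, z)$. Since $b$ is at distance $\geq 10\epsilon$ from $K$, $g_K$ admits a regular Taylor expansion at $b$, and conformal covariance then ensures that the singularities at $b$ of the two terms in $h_2$ cancel to leading order in $t$, so $h_2$ is holomorphic in a neighborhood of $b$. The only surviving singularity on the relevant region is at $a$, where $f(\Lambda_K, g_K(b), \cdot)$ is smooth; Taylor-expanding this function at $a$ and substituting the Loewner expansion of $g_K(z)-z$ produces precisely the principal part $-tf(\Lambda_K,g_K(b),a)/(z-a)^2 + 2t\partial_z f(\Lambda_K,g_K(b),a)/(z-a)$ that matches the RHS of (\ref{eq: var_b}). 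The compactness-uniqueness step then identifies the remainder as $O(\epsilon^3)$ uniformly.

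The principal technical obstacle, common to both parts, is that the naive bound afforded by Lemma~\ref{lem: lipshitz} gives only $|f(\Lambda_K,a,z)-f(\Lambda,a,z)|=O(t)=O(\epsilon^2)$, one order short of the desired $O(\epsilon^3)$. The structural content of the stated right-hand side is that the coefficient $\partial_z Q(\Lambda_K,a,a)$ absorbs exactly the leading-order $t$-contribution of this domain variation, so that the subtraction leaves a genuine $O(\epsilon^3)$ residual; verifying this sharper cancellation requires a Hadamard-type analysis of how the boundary value problem~(\ref{eq: obs_on_bdry})--(\ref{eq: obs_at_a}) responds to the perturbation $\Lambda\to\Lambda_K$, and the uniformity over $C^2$-compact families of $(\Lambda,a)$ claimed in the lemma follows by running this argument inside the compactness framework used in the proof of Lemma~\ref{lem: lipshitz}.
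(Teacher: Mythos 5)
Your overall framework is right: subtract the stated right-hand side, extract the principal part at $a$ via the Loewner expansion of $g_K$ and the Laurent expansion $f=(z-a)^{-1}+Q$, and close with the compactness machinery of Lemma~\ref{lem: remove_singularity}. This is exactly the paper's route. However, there are two related misunderstandings in the final paragraph that, taken literally, would derail the argument.

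First, the coefficient $\partial_z Q(\Lambda_K,a,a)$ does \emph{not} come from comparing $Q(\Lambda_K,a,\cdot)$ with $Q(\Lambda,a,\cdot)$. It comes purely from Taylor-expanding $Q(\Lambda_K,a,g_K(z))$ around $z=a$, using $Q(\Lambda_K,a,a)=0$ and $g_K(z)-a=(z-a)+\frac{2t}{z-a}+O(\epsilon^3)$: the cross term $\frac{2t}{z-a}\cdot\partial_zQ(\Lambda_K,a,a)$ is the source of the residue. The ``domain-variation'' quantity $Q(\Lambda_K,a,z)-Q(\Lambda,a,z)$ plays no role whatsoever in producing this coefficient.

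Second, and more seriously, you claim the proof requires a Hadamard-type variational analysis to verify that the $O(\epsilon^2)$ domain-variation term is reduced to an $O(\epsilon^3)$ residual. It does not, and this is precisely the point of Lemma~\ref{lem: remove_singularity}. That lemma decomposes $f_i=u_i+v_i$ near $a$, with $u_i$ analytic and $v_i$ encoding the singularity, and asserts that the limit of $f_i$ is \emph{uniquely determined by the limit $v$ of $v_i$} (together with the boundary conditions and the decay at infinity); the analytic part $u_i$ need not be small and is never estimated directly. In the paper's proof-by-contradiction, one rescales by $p_i\epsilon_i^{-3}$ with $p_i\to 0$, observes that the rescaled singular parts at $a$ cancel exactly (the error being $O(p_i)$), and that the remaining piece is analytic near $a$. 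Lemma~\ref{lem: remove_singularity} with $v\equiv 0$ then forces the limit to be the zero spinor, giving the contradiction; the $O(\epsilon^2)$ analytic contribution $Q(\Lambda_K,a,z)-Q(\Lambda,a,z)$ is simply part of the $u_i$ and is absorbed for free by the uniqueness in the boundary value problem. So the ``principal technical obstacle'' you identify is not an obstacle at all, and adding a Hadamard variational analysis would be both unnecessary and off-route. You should make the proof-by-contradiction with the $p_i\epsilon_i^{-3}$ rescaling explicit, and recognize that the only quantitative input needed near $a$ is the exact cancellation of the singular parts (up to $O(\epsilon^3)$ error coming from the Loewner expansion); everything else follows from Lemma~\ref{lem: remove_singularity}.
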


\begin{proof}
Assume that (\ref{eq: var_a}) does not hold. This means that there are sequences $\Lambda_i$, $a_i$, $\epsilon_i$, $K_i$,  and $p_i\to 0$ such that $|z_i-a_i|>r$ and $f_i(z_i)$ does not tend to zero, where
\begin{multline}
f_i(z):=p_i\epsilon_i^{-3}\left(g_{K_i}'(z)^\frac12f(\Lambda_{K_i},a_i,g_{K_i}(z))-f(\Lambda_i,a_i,z)\right.\\ \left.+3t_i f^{(2)}\left(\Lambda_i,a_i,z\right)-t_i\partial_zQ(\Lambda_{K_i},a_i,a_i)f(\Lambda_i,a_i,z))\right).                                                                                                           
\end{multline}
By compactness, we may assume that those sequences converge to $\Lambda,a,K,z$. Then, $g_{K_i}(z_i)$ and $g_{K_i}'(z_i)$ converge to $g_K(z)$ and $g_K'(z)$, and also,  by Lemma \ref{lem: remove_singularity}, $f^{(n)}\left(\Lambda_i,a_i,z_i\right)\to f^{(n)}\left(\Lambda,a,z\right)$ and $f\left(\Lambda_{K_i},a_i,g_{K_i}(z_i)\right)\to f\left(\Lambda_K,a,g_{K}(z)\right)$. Hence, necessarily $\epsilon_i\to 0$.

To obtain a contradiction, we are going to check that $f_i$ satisfy the conditions of Lemma~\ref{lem: remove_singularity} with $v\equiv 0$. Indeed, $f_i$ obeys (\ref{eq: obs_on_bdry}) on $\partial\Lambda_i\setminus (K_i\cup\{a_i, g^{-1}_{K_i}(a_i)\})$, and the set $K_i\cup\{a_i, g^{-1}_{K_i}(a_i)\}$ shrinks to $a$. Fix $R>r>0$ so that $B_R(a_i)\cap \H$ is contained in $\Lambda_i$ for all $i$.
Standard estimates for Loewner chains (see, e.g., \cite[Eq. (3.2)--(3.4)]{Izy_free}) imply that for every fixed $r>0$, one has
$$
g_{K_i}'(z)^\frac12=1-\frac{t_i}{(z-a_i)^2}+O(\epsilon^3);\quad g_{K_i}(z)=z+\frac{2t_i}{z-a_i}+O(\epsilon^3).
$$
uniformly over $z$ such that $|z-a_i|>r$. Therefore,
\begin{multline}
 g_{K_i}'(z)^\frac12f(\Lambda_{K_i},a_i,g_{K_i}(z))\\=\left(1-\frac{t_i}{(z-a_i)^2}\right)\left(\frac{1}{g_{K_i}(z)-a_i}+Q(\Lambda_{K_i},a_i,g_{K_i}(z))\right)+O(\epsilon^3_i).
\end{multline}
We have 
$$
 \frac{1}{g_{K_i}(z)-a_i}=\frac{1}{z-a_i}-\frac{2t_i}{(z-a_i)^3}+O(\epsilon_i^3)
$$
and
\begin{multline*}
Q(\Lambda_{K_i},a_i,g_{K_i}(z))=Q(\Lambda_{K_i},a_i,z)+\frac{2t_i}{z-a_i}\partial_zQ(\Lambda_{K_i},a_i,z)+O(\epsilon^3_i)\\=\left(z-a_i+ \frac{2t_i}{z-a_i}\right)\partial_zQ(\Lambda_{K_i},a_i,a_i)+(z-a_i)^2\hat{u}_i(z)+2t_i\hat{\hat{u}}_i(z)+O(\epsilon^3_i),
\end{multline*}
where $\hat{u}_i$ and $\hat{\hat{u}}_i$ are analytic in $B_R(a_i)$; here we have used that $Q(\Lambda,a,a)=0$. Putting everything together, we obtain  
\begin{multline*}
 g_{K_i}'(z)^\frac12f(\Lambda_{K_i},a_i,g_{K_i}(z))=\frac{1}{z-a_i}-\frac{3t_i}{(z-a_i)^3}+\frac{t_i}{z-a_i}\partial_zQ(\Lambda_{K_i},a_i,a_i)+\tilde{u}_i(z)+O(\epsilon^3_i),
\end{multline*}
where $\tilde{u}_i$ is analytic in $B_R(a_i)$. Thus $f_i(z)=O(p_i)+p_i\epsilon_i^{-3}\tilde{u}_i(z)$, and, applying Lemma \ref{lem: remove_singularity}, we get that $f_i(z_i)\to 0$, the desired contradiction.

The proof of (\ref{eq: var_b}) is similar. Indeed, note that the left-hand side of \ref{eq: var_b} has no singularity at $b$, and thus satisfies (\ref{eq: obs_on_bdry}) everywhere on $\partial\Lambda\setminus K$. We have
\begin{multline}
 |g_{K}'(b)|^\frac12g_{K}'(z)^\frac12f(\Lambda_{K},g_K(b),g_{K}(z))\\=|g_{K}'(b)|^\frac12\left(1-\frac{t}{(z-a)^2}\right)\left(f(\Lambda_K,g_K(b),z)+\partial_zf(\Lambda_K,g_K(b),z)\frac{2t}{z-a}\right)+O(\epsilon^3)
 \\=\partial_zf(\Lambda_K,g_K(b),a)\frac{2t}{z-a}-f(\Lambda_K,g_K(b),a)\frac{t}{(z-a)^2}+u(z)+O(\epsilon^3),
\end{multline}
where $u(\cdot)$ is holomorphic near $a$ and we have dropped the subscript $i$ everywhere. Up to a function analytic near $a$, this is equal to the right-hand side of (\ref{eq: var_b}), and one more application of Lemma \ref{lem: remove_singularity} concludes the proof.
\end{proof}

\begin{rem}
\label{rem: improve_a_lemma}
 Using the Lipschitz estimate of Lemma \ref{lem: lipshitz} and Lemma \ref{lem: symmetry} below, one can replace $\Lambda_K$ and $g_K(b)$ in the right-hand side of (\ref{eq: exp_Delta_1}) and (\ref{eq: exp_Delta_2}) by $\Lambda$ and $b$. Indeed, the difference between $\Lambda$ and $\Lambda_K$ in $C^2$ metric is of order $\epsilon^2$, and Lemma \ref{lem: symmetry} shows that $f(\Lambda,b,a)=\sqrt{in_{b}}f(\Lambda,a,b)$ and $\partial_zf(\Lambda,b,a)=\sqrt{in_{b}}f^{(1)}(\Lambda,a,b)$, to which Lemma \ref{lem: lipshitz} readily applies.
\end{rem}

\begin{lemma}
 \label{lem: symmetry} 
For any domain $\Omega$ whose boundary is smooth near $a_1,\dots,a_{2k}$, and for any permutation $\sigma$, one has 
$$\sqrt{in_{a_{2k}}}f(\Omega,a_1,\dots,a_{2k})=(-1)^{\text{sgn}(\sigma)}\sqrt{in_{a_{\sigma(2k)}}}f(\Omega,a_{\sigma(1)},\dots,a_{\sigma(2k)}),$$ provided that for each $a_i$, the  sign of $\sqrt{in_{a_i}}$ is chosen consistently at all occurrences. 
\end{lemma}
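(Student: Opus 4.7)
My plan is to identify the left-hand side with the Pfaffian $Z(\Omega,a_1,\ldots,a_{2k})$ from (\ref{eq: defZ}), and then invoke the standard antisymmetry of Pfaffians under simultaneous row/column permutations.

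First, I substitute $z=a_{2k}$ in (\ref{eq: obs_tilda}) and multiply both sides by $\sqrt{in_{a_{2k}}}$. Using the explicit formula for the coefficients $P_s$, this yields
$$
\sqrt{in_{a_{2k}}}\,f(\Omega,a_1,\ldots,a_{2k-1},a_{2k})
=\sum_{s=1}^{2k-1}(-1)^{s+1}M_{s,2k}\,\mathrm{Pf}\bigl[M_{m,r}\bigr]_{m,r\in\{1,\ldots,2k-1\}\setminus\{s\}},
$$
where $M_{m,r}:=\sqrt{in_{a_r}}\,f(\Omega,a_m,a_r)$. The right-hand side is precisely the cofactor expansion of the Pfaffian of the $2k\times 2k$ antisymmetric matrix $[M_{m,r}]_{1\leq m,r\leq 2k}$ along its last row/column, so
$$
\sqrt{in_{a_{2k}}}\,f(\Omega,a_1,\ldots,a_{2k-1},a_{2k})=Z(\Omega,a_1,\ldots,a_{2k}).
$$
This is the continuous analog of (\ref{eq: obs_Z}).

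Second, I apply the standard identity $\mathrm{Pf}(P^{\top}AP)=\det(P)\cdot\mathrm{Pf}(A)$ for an antisymmetric matrix $A$ and the permutation matrix $P$ of $\sigma$, obtaining
$$
Z(\Omega,a_{\sigma(1)},\ldots,a_{\sigma(2k)})=(-1)^{\mathrm{sgn}(\sigma)}Z(\Omega,a_1,\ldots,a_{2k}).
$$
Applying the first displayed identity both to $(a_1,\ldots,a_{2k})$ and to the permuted tuple $(a_{\sigma(1)},\ldots,a_{\sigma(2k)})$, and comparing, yields the lemma.

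The only non-formal input is that the matrix $M$ is in fact antisymmetric, which is tacitly required for (\ref{eq: defZ}) to make sense and is where I expect the actual work to lie. I would establish this by a contour integration: the product $g(z):=f(\Omega,a_m,z)\,f(\Omega,a_r,z)$ is a single-valued holomorphic function on $\Omega\setminus\{a_m,a_r\}$ (the two $\Cvr_Z$-spinor factors multiply to an honest function on $\Omega$), and the boundary condition (\ref{eq: obs_on_bdry}) forces $g(z)\,dz$ to be real along $\partial\Omega$ with the natural orientation that keeps $\Omega$ on the left. Indenting $\partial\Omega$ by small semicircles at $a_m$ and $a_r$ and applying Cauchy's theorem expresses the principal-value integral as $i\pi$ times the sum of the two residues $\sqrt{in_{a_m}}\,f(\Omega,a_r,a_m)$ and $\sqrt{in_{a_r}}\,f(\Omega,a_m,a_r)$, each of which is itself real by (\ref{eq: obs_on_bdry}). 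For the real total to equal $i\pi$ times a real number, that real number must vanish, giving $M_{r,m}=-M_{m,r}$. Once this is in hand, the rest is purely formal Pfaffian algebra.
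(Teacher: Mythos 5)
Your proof is correct and follows essentially the same route as the paper's: both reduce the general case to the $k=1$ antisymmetry $\sqrt{in_{a_1}}f(\Omega,a_2,a_1)=-\sqrt{in_{a_2}}f(\Omega,a_1,a_2)$ via the Pfaffian identity $\sqrt{in_{a_{2k}}}f(\Omega,a_1,\dots,a_{2k})=\mathrm{Pf}\,[\sqrt{in_{a_r}}f(\Omega,a_m,a_r)]$, and establish the base case by a boundary-integral argument. The paper phrases that base case via the monodromy of the harmonic function $h=\Im\int f(\Omega,a_1,z)f(\Omega,a_2,z)\,dz$, which is the same computation as your principal-value contour integral of $g(z)\,dz$ over $\partial\Omega$.
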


\begin{proof}
We first establish the case $k=1$. Consider the harmonic (possibly multi-valued) function $h:=\Im \;\int f(\Omega,a_1,z)f(\Omega,a_2,z) dz$. It follows from (\ref{eq: obs_on_bdry}) and (\ref{eq: obs_at_a}) that $h$ is locally constant on the boundary, except at $a_1,a_2$, where it has jumps  of size $-\pi \sqrt{in_{a_{1}}}f_\Cvr(\Omega,a_2,a_1)$ and $-\pi\sqrt{in_{a_{2}}} f_\Cvr(\Omega,a_1,a_2)$, counted in the direction of the tangent vector oriented to have $\Omega$ on its left. Since the derivative of $h$ is well-defined on $\Omega$, $h$ itself is well-defined on the universal cover, with additive monodromy. This can only be the case if one jump is  negative of the other, i. e. $\sqrt{in_{a_{1}}}f_\Cvr(\Omega,a_2,a_1)=-\sqrt{in_{a_{2}}} f_\Cvr(\Omega,a_1,a_2)$.

For the general case, note that 
$$\sqrt{in_{a_{2k}}}f(\Omega,a_1,\dots,a_{2k})=\text{Pf}_{1\leq m,r\leq 2k}\;  \left[\sqrt{in_{a_r}}f_\Cvr(\Omega,a_m,a_r)\right]$$ 
and use the anti-symmetry of the Pfaffian. 
\end{proof}

\begin{proof}[Proof of Lemma \ref{lem: non-vanish}]
Applying a conformal map, we may assume that $\Omega$ is bounded and $\partial \Omega$ is smooth. In the following proof, we drop $\Omega$ from the notation. Note that $f_{k}(\cdot):=f_{\Cvr_Z}(a_1,\dots,a_{2k-1},\cdot)$ satisfies the following expansion as $z\to a_1$: 
\begin{equation}
f_k(z)=P_1f(a_1,z)+\sum_{r=2}^{2k-1}P_r\cdot f({a_r,a_1})+o(1)=\frac{P_1\sqrt{in_{a_1}}}{z-a_1}+o(1), 
\label{eq: exp_f_multipoint}
\end{equation}
since by Lemma \ref{lem: symmetry}, $$\sum_{r=2}^{2k-1}P_r\cdot f({a_r,a_1})=(in_{a_1})^{-\frac12}\sum_{r=2}^{2k-1}P_r\sqrt{in_{a_{1}}}f({a_1,a_r}),$$ which is a Pfaffian of a matrix with two coinciding columns. Expansions similar to (\ref{eq: exp_f_multipoint}) hold as $z\to a_2,$ \dots, $z\to a_{2k-1}$, and $f_k$ satisfies (\ref{eq: obs_on_bdry}) on $\pa\Omega\setminus \{a_1,\ldots,a_{2k-1}\}.$ Hence, the harmonic function $h=\Im \int f_k^2$ is constant on each boundary component of $\Lambda$, except for a possible blow-up as a non-negative multiple of the Poisson kernel at marked points. Let $l_{min}$ be the boundary component with the smallest value of that constant. Note that by Harnack principle, unless $h\equiv \const$, one has $\partial_n h<0$ on $l_{min}$; in particular, $f_k\neq 0$ and therefore $\res_{a_i}f_k\neq 0$, $a_i\in l_{min}$. Write 
$$
\sqrt{-\partial_n h(z)}=\sqrt{-\Im (f^2_k(z)n_z)}=\sqrt{if^2_k(z)n_z}=f_k(z) \sqrt{in_z},\quad z\in l_{min}\setminus\{a_1,\ldots,a_{2k-1}\}
$$ where we have used (\ref{eq: obs_on_bdry}) in the second equality. Note that, as $z$ goes along $l_{min}$, the left-hand side changes sign at the poles $a_i\in l_{min}$ and cannot change it in between. We conclude that $f_k$ changes sign around $l_{min}$ if and only if there is an \emph{even} number of poles $a_i \in l_{min}$. By definition of $\Cvr_Z$, this is only possible if $a_{2k}\in l_{min}$, and hence $f_k(a_{2k})\neq 0$. 

 It remains to show that $f_k(z)$ cannot be identically zero. We first claim that there exists a small perturbation $\{a'_1,\ldots,a'_{2k-1}\}$ of $\{a_1,\ldots,a_{2k-1}\}$, such that $f(a'_1,\dots,a'_{2k-1},\cdot)$ is not identically zero. Indeed, otherwise, each coefficient $P_r$ in (\ref{eq: obs_tilda}) vanishes identically as a function of marked points in a neighborhood of $\{a_1,\ldots,a_{2k-1}\}$. However, $P_r=\sqrt{in_{a_{2k-1}}}f(a_1,\dots,\widehat{a}_r,\dots,a_{2k-1})$,  so the holomorphic function $f(a'_1,\dots,\widehat{a'}_r,\dots,\cdot)$ vanishes on a boundary arc and hence is also identically zero, where $\{a_1',\ldots,a'_{2k-1}\}$ is \emph{any} small perturbation of $\{a_1,\ldots,a_{2k-1}\}$. By iterating the argument, we deduce that each $f(a_i,z)$ is identically zero, a contradiction which shows the existence of a required perturbation.

By the first part of the proof, $f(a'_1,\dots,a'_{2k-1},a_{2k})\neq 0$. Then Lemma \ref{lem: symmetry} implies that $f(a'_1,\dots,a'_{2k-2},a_{2k},\cdot)$ is not identically zero, and hence $f(a'_1,\dots,a'_{2k-2},a_{2k},a_{2k-1})\neq 0$. Iterating, one can replace each $a_i'$ by $a_i$.
\end{proof}

\begin{proof}[Proof of Lemma \ref{lem: conv_ratio}]
Assume that the uniform convergence does not hold, i. e., that there is a collection of $\delta_i\to 0$, initial segments $\gamma^{(i)}_{[0,t_i]}$ in $\Omega^{\delta_i}$ and points $z_i\in\mathcal{C}$ such that the expression in (\ref{eq: conv_ratio}) stays bounded from below by a constant. Extracting a subsequence, we may assume that $\Omega^{\delta_i}\setminus \gamma^{(i)}_{[0,t_i]}\stackrel{\text{Cara}}{\longrightarrow} \Omega'$, $\gamma^{(i)}_{t_i}\to \gamma_t$ and $z_i\to z\in \mathcal{C}$. Note that in this case, $\Omega^{\delta_i}_{n_i}\stackrel{\text{Cara}}{\to}\Omega'$ and $\gamma_{n'_i}\to \gamma_t$, where $n_i=n_{t_i}$ and $\Omega^\delta_n$, $n_i'$ are as discussed after Proposition \ref{prop: mart}. Indeed, $\Omega^\delta_n\subset\Omega^{\delta}\backslash \gamma_{[1,n]}$ and every point of their difference can be cut off by a cross-cut of length at most $2\delta$, whereas $\gamma_{n}$ and $\gamma_{n'}$ can be cut off by a common cross-cut of length at most $2\delta$.

By Corollary \ref{cor: conv_obs}, Lemma \ref{lem: non-vanish} and (\ref{eq: decouple}), one has
$$
M^{\delta_i}_{\gamma^{(i)}}(z_i) \to \frac{f(\Omega',\gamma_t,\dots,z)}{f(\Omega',\gamma_t,\dots,a_{2k})}.
$$ However,  $f(\Omega,a_1,\dots,z)/f(\Omega,a_1,\dots,a_{2k})$ is Carathéodory continuous (since the numerator and the denominator are limits of the discrete observables under Carath\'eodory convergence of domains), and hence $M_{\gamma^{(i)}}(z_i)$ converges to the same quantity as $M^{\delta_i}_{\gamma^{(i)}}(z_i)$. This gives a contradiction.
\end{proof}

\begin{proof}[Proof of Lemma \ref{lem: variation}]
Recall that by definition and conformal covariance of the observable, 
$$
M_{\gamma_{[0,t]}}(z)=g_{t}'(z)^\frac12 \sum^{2k-1}_{s=1}\tilde{P}_s \cdot f(\Lambda_t,a_s(t),g_t(z)),
$$ 
where $$\tilde{P}_s=(-1)^{s+1}\frac{\sqrt{in_{a_{2k}}}\mathrm{Pf}[\sqrt{in_{a_r(t)}}f(\Lambda_t,a_m(t),a_r(t)]_{1\leq m,r\neq s\leq 2k-1}}{|g'_t(a_{2k})|^{\frac12}\mathrm{Pf}[\sqrt{in_{a_r(t)}}f(\Lambda_t,a_m(t),a_r(t)]_{1\leq m,r\leq 2k}}$$ does not depend on $z$. Note that Lemmas \ref{lem: f_n}, \ref{lem: diff_one} and Remark \ref{rem: improve_a_lemma} allow one to expand $M_{\gamma_{[0,t_2]}}-M_{\gamma_{[0,t_1]}}$ in a Taylor series up to $O(\epsilon^3)$. For example, 
\begin{multline}
g_{t_2}'(z)^\frac12f(\Lambda_{t_2},a_1(t_2),g_{t_2}(z))-g_{t_1}'(z)^\frac12f(\Lambda_{t_1},a_1(t_1),g_{t_1}(z))\\=g_{t_1}'(z)^\frac12\left(g_{K}'(z)^\frac12f(\Lambda_{t_2},a_1(t_2),g_{t_2}(z))-f\right)\\=g_{t_1}'(z)^\frac12\left(-3\Delta_tf^{(2)}+\Delta_t\partial_zQ f+f(\Lambda_{t_1},a_1(t_2),g_{t_1}(z))-f\right)+O(\epsilon^3)\\=g_{t_1}'(z)^\frac12 \left(-3\Delta_tf^{(2)}+\Delta_t\partial_zQ f+\Delta^2_af^{(2)}+\Delta_af^{(1)}\right)+O(\epsilon^3),
\label{eq: diff_first_term}
\end{multline}
where the functions with omitted arguments are evaluated at $\Lambda_{t_1},a_1(t_1),g_{t_1}(z)$; in the second line, $K=g_{t_1}(\gamma_{[t_1,t_2]})$, in the third line we have used (\ref{eq: var_a}), modified according to Remark \ref{rem: improve_a_lemma}, and in the fourth line we use that $f^{(n)}=\frac{1}{n!}\frac{\partial^n}{\partial a^n}f$. 

The overall result of the Taylor expansion, viewed as a function of $z$, will be $g_{t_1}'(z)^\frac12$ times a linear combination of $f^{(2)},$ $f^{(1)}$, $f$ and $f(\Lambda_{t_1},a_s(t_1),g_{t_1}(z))$ for $s=2,\dots,2k-1$; we are interested in the coefficients in front of 
 $f^{(2)}$ and $f^{(1)}$. In fact, $f^{(2)}$ only appears in (\ref{eq: diff_first_term}); hence its coefficient is $R(\Delta_a^2-3\Delta_t)$. The term $f^{(1)}$ comes from (\ref{eq: diff_first_term}) with the coefficient $\Delta^2_aR'+R\Delta_a$, and also pops up when applying (\ref{eq: var_b}) to $f(\Lambda_t,a_s(t),g_t(z))$, $s=2,\ldots,2k-1$. The latter contributions sum up to zero, since the corresponding coefficient is proportional to 
 $$
\sum_{s=2}^{2k-1}P_sf(\Lambda_{t_1},a_s(t_1),a_1(t_1))=0,
 $$
 as explained in the proof of Lemma \ref{lem: non-vanish} after the equation (\ref{eq: exp_f_multipoint}). The property $c_i\in \sqrt{in_{a_{2k}}}\cdot\R$ follows readily from the same property of $\tilde{P_s}$.
 \end{proof}

\begin{proof}[Proof of Lemma \ref{lem: coef_vanish}]
 Assume, on the contrary, that such an $L$ does not exist. Then there is a sequence of domains $(\Lambda^{(i)},a_1^{(i)}\dots,a_{2k-1}^{(i)})\in \CSpace(\crs)$ and a sequence of spinors $\Phi_i$ in those domains defined by (\ref{eq: Phi}) with constants $C_{1,2}^{(i)}$, $c_m^{(i)}$, $1\leq m\leq{2k-1}$ such that 
$$
\max \{C_{1}^{(i)}, C_2^{(i)}\} \to \infty;
$$
but $\max_{g^\delta_{t_i}(\mathcal{C})} {|\Phi_i|}$ is bounded. Denote $M_i:=\max\{C_{1,2}^{(i)};\,c_m^{(i)}\}$. By extracting a subsequence, we may assume that  $(\Lambda^{(i)},a_1^{(i)}\dots,a_{2k-1}^{(i)}) \to (\Lambda,a_1,\dots,a_{2k-1})$ and $M^{-1}_i\Phi_i\to \Phi$, where $\Phi$ is also of the form (\ref{eq: Phi}), with one of the coefficients equal to one, and hence it is not identically zero. On the other hand, $\Phi$ vanishes on a set with non-zero interior, which is a contradiction. 
\end{proof}

\begin{proof}[Proof of Lemma \ref{lem: generic}.]
 We have to show that for each realization of $\gamma_{[0,t]}$, there is at least one $s=2,\ldots,2k$ such that 
 $$
 \mathrm{Pf}[\sqrt{in_{a_r}}f(\Lambda^\delta_t,a^\delta_m(t),a^\delta_r(t))]_{2\leq m,r\neq s\leq 2k}\neq 0;
 $$ then the desired bound follows from continuity and compactness. However, these are coefficients in the decomposition (\ref{eq: obs_tilda}) of $f_\Cvr(\Lambda^\delta_t,a^\delta_{2k}(t),a^\delta_2(t),\dots,a^\delta_{2k-1}(t),\cdot)$ as a linear combination of $f(\Lambda^\delta_t,a^\delta_s(t),\cdot)$. By Lemma \ref{lem: non-vanish}, the former does not vanish identically, hence at least one of those coefficients is non-zero.
\end{proof}

\begin{proof}[Proof of Lemma \ref{lem: D-lipshitz}]
Note that the denominator in the definition (\ref{eq: def_D_bis}) of $D$ does not vanish by Lemma \ref{lem: non-vanish}, and thus by compactness, the corresponding denominators of $D((a_1)_{[0,t]})$ and $D((\tilde{a}_1)_{[0,\tilde{t}]})$ are bounded away from zero. Using symmetry (Lemma \ref{lem: symmetry}), we therefore see that it suffices to prove the Lipschitz estimates separately for $f^{(1)}(\Lambda,a_1,a_m)$, $f(\Lambda,a_1,a_m)$ and $f(\Lambda,a_r,a_m)$, $2\leq m\neq r \leq 2k$. 

Note that the Loewner maps $g_{t}(z)$ are uniformly Lipschitz (with their derivatives) with respect to the driving process, i. e. for any cross-cut $\crs'$ such that $\crs$ separates $\crs'$ from $a_1$, there are constants $L_0$, $L_1,\dots,$ such that 
\begin{eqnarray*}
|g_{t}(z)-\tilde{g}_{\tilde{t}}(z)|<L_0\left(|t-\tilde{t}|+\sup\limits_{0\leq s\leq \tau}|a_1(s)-\tilde{a}_1(s)|\right),
\\                                                                                                                                                                                                                          
|\partial^{(n)}_zg_{t}(z)-\partial^{(n)}_z\tilde{g}_{\tilde{t}}(z)|<L_n\left(|t-\tilde{t}|+\sup\limits_{0\leq s\leq \tau}|a_1(s)-\tilde{a}_1(s)|\right),                                                                                                                                                                                                                                       
\end{eqnarray*}
where $\tau= t\wedge \tilde{t}$ and  $z$ is separated from $\crs$ by $\crs'$ (see, e. g., \cite[Lemma 5.5]{Zhan-LERW}). Hence, it suffices to show that the functions in question are Lipschitz with respect to the metric on tuples $(\Lambda,a_1,\dots,a_{2k})$ given by $d_{C^2}(\Lambda',\Lambda'')+\max\;|a'_m-a''_m|$. For $\partial f^{(1)}(\Lambda,a_1,a_m)$, $f(\Lambda,a_1,a_m)$, and $f(\Lambda,a_r,a_m)$, where $a_r\in\R$, this is proven in Lemma \ref{lem: lipshitz}. In the remaining case $f(\Lambda,a_r,a_m)$, where neither $a_m\in \R$, nor $a_r\in\R$, it can be proven in a similar way or by conformal covariance; we leave the details to the reader.
\end{proof}

\label{technical}

\section{Proof of Propositions \ref{prop: ann1} -- \ref{prop: spin}}

In this section, we describe the proofs of Propositions \ref{prop: ann1}--\ref{prop: spin}. This includes three additional ingredients as compared to Theorem \ref{thm: conv_int}. First, we explain how to transfer the techniques developed above from half-plane Loewner chains to annulus ones. We will only describe the analog of the heuristic computation of Section \ref{sec: DP_general}; the rigorous justification goes along the same lines as in the proof of Theorem \ref{thm: conv_int}. Note that in the doubly connected case, most of the technical lemmas follow directly from the explicit formulae for the observables. We also point out that essentially, the computation in the annulus case is the same as in the radial one. The error of approximation of the annulus Loewner kernel by the radial one does not enter into the answer.

Second, we prove Proposition \ref{prop: spin} and treat the case of the target point inside the domain (or, in other words,  on a ``microscopic'' boundary component consisting of one face). Finally, we compute the partition functions explicitly. 

\subsection{Annulus Loewner chains}
\label{sec: annulus}
We start by recalling the definition of the annulus Loewner chains. The \emph{annulus Schwarz kernel} is defined to be the unique analytic function $S^p(z)$ in $\A_p$ satisfying the boundary conditions $\Re S^p(z)\equiv 0$ for $|z|=1$, $z\neq 1$, $\Re S^p(z)=\const$ for $|z|=e^{-p}$ and the expansion $S^p(z)=\frac{1+z}{1-z}+o(1)$ as $z\to 1$. Explicitly, $S^p(z)$ is given by 
 $$
 S^p(z)=v.p. \sum_{k=-\infty}^{\infty}\frac{e^{2pk}+z}{e^{2pk}-z}.
 $$
 The \emph{annulus Loewner kernel} is defined to be $V^p_\theta(z)=zS^p(ze^{-i\theta})$.  The \emph{annulus Loewner equation} with a driving function $a(t)=e^{i\theta(t)}$ reads 
 $$
 \partial_t g_t(z)=V^{p-t}_{\theta(t)}(g_t(z)),
 $$
 $g_t$ being a conformal map from a sub-domain $\Omega_t\subset \Omega$ to $\A^{p-t}$; the initial conditions are given by a map $g_0:\Omega\to \A^p$

Recall the setup of Section \ref{sec: DP_general}, bearing in mind that now $g_t$ denotes the annulus Loewner maps and not the half-plane ones; the marked points $a_m(t)$ will be parametrized by $a_m(t)=\rho_m(t) e^{i\theta_m(t)}$, where $\rho_m(t)\equiv 1$ for $a_m(t)$ on the outer boundary and $\rho_m(t)\equiv e^{t-p}$ for $a_m(t)$ on the inner one.

First, we need an expansion of $f(\A^p,a,z)$ near a boundary point $a$. Denote $\varphi_a(z):=i\frac{a-z}{a+z}$, which is a conformal map from the unit disc to the upper half-plane. Using (\ref{eq: conf_inv_f}) and (\ref{eq: obs_expansion}), we get 
$$
f(\A^p,a,z)=|\varphi'_a(a)|^{\frac12}\varphi'_a(z)^\frac12\left(\frac{1}{\varphi_a(z)}+o(1)\right)=\frac{\sqrt{ai}}{z-a}+o(1),\quad z\to a.
$$
Due to the computation (\ref{eq: exp_f_multipoint}), the same expansion holds for $f(\A^p,a_1,\ldots,a_{2k-1},z)$. Therefore, (\ref{eq: mart_expansion}) can be now replaced with 
\begin{equation*}
M_{\gamma_{[0,t]}}(z)=(g_t'(z))^\frac12\left(\frac{\sqrt{ie^{i\theta_1(t)}}}{g_t(z)-e^{i\theta_1(t)}}R_{\gamma_{[0,t]}}+Q_{\gamma_{[0,t]}}\left(g_t(z)\right)\right), 
\end{equation*}
 where $Q_{\gamma_{[0,t]}}$ is analytic in a neighborhood of and vanishes at $e^{i\theta_1(t)}$, and $R_{\gamma_{[0,t]}}$ is given by (\ref{eq: R}) with $\Lambda_t$ replaced by $\A^{p-t}$. Note that 
 $$
 dg_t(z)=V^{p-t}_{\theta_1(t)}(g_t(z))dt= \left(\frac{2e^{i\theta_1(t)}}{e^{i\theta_1(t)}-g_t(z)}-3e^{i\theta_1(t)}+o(1)\right)dt, \quad g_t(z)\to e^{i\theta_1(t)};
 $$
 $$
 dg'_t(z)^\frac12 =\frac12g'_t(z)^\frac12(V^{p-t}_{\theta_1(t)})'(g_t(z))dt= g'_t(z)^\frac12\left(\frac{e^{2i\theta_1(t)}}{(e^{i\theta_1(t)}-g_t(z))^2}+O(1)\right), \quad g_t(z)\to e^{i\theta_1(t)},
 $$
 and that 
 $$de^{i\theta_1(t)}=ie^{i\theta_1(t)}d\theta_1(t)-\frac12e^{i\theta_1(t)}d[\theta_1]_t;$$ 
 $$d[e^{i\theta_1(t)}]_t=-e^{i2\theta_1(t)}d[\theta]_t.$$
Taking the It\^o derivative of $M_{\gamma_{[0,t]}}$ with the help of these formulae, one obtains
$$
dM_{\gamma_{[0,t]}}(z)=(g_t'(z))^\frac12\left( \frac{\sqrt{i}e^{\frac52i\theta_1(t)}(3dt-d[\theta_1]_t)R}{(g_t(z)-e^{i\theta_1(t)})^3}+\frac{i^\frac32e^{\frac32i\theta_1(t)}(Rd\theta_1(t)+d[\theta_1,R]_t)}{(g_t(z)-e^{i\theta_1(t)})^2}+\dots\right),
$$
from which one readily gets $d\theta_1(t)=\sqrt{3}dB_t-3\partial_{\theta_1}\log R$, as claimed in Propositions \ref{prop: ann1} and \ref{prop: ann2}.

\subsection{Change of normalization and a single-point boundary component}
We begin by explaining a proof of Proposition \ref{prop: spin}.
\begin{proof}[Proof of Proposition \ref{prop: spin}]
Let $\tilde{Z}(\Od,a^\delta_1,\dots,a^\delta_{2k})$ and $Z(\Od,a^\delta_1,\dots,a^\delta_{2k})$ denote the partition functions of the model with fixed spin and monochromatic boundary conditions respectively. By the same argument as in Proposition \ref{prop: mart},
$$
\tilde{M}^\delta_{\gamma_{[0,n]}}(z):=M^\delta_{\gamma_{[0,n]}}(z)\cdot
\frac{Z(\Od\setminus \gamma_{[0,n]},\gamma_n,\dots,a^\delta_{2k})}{\tilde{Z}(\Od\setminus\gamma_{[0,n]},\gamma_n,\dots,a^\delta_{2k})}
=
\sqrt{in_{a_{2k}}}\cdot\frac{F(\Od\setminus\gamma_{[0,n]},\gamma_n,\dots,z)}{\tilde{Z}(\Od\setminus\gamma_{[0,n]},\gamma_n,\dots,a^\delta_{2k})}
$$ 
is a martingale under the probability measure on $\gamma_{[0,n]}$ corresponding to the fixed boundary conditions. By our assumptions, $\tilde{M}^\delta_{\gamma_{[0,n]}}(z)$ converges to 
$
M_{\gamma_{[0,n]}}(z)\Gamma^{-1}(\Omega\setminus\gamma_{[0,t]},\gamma_t,\ldots,a_{2k}). 
$ Multiplication by $\Gamma^{-1}$, which is independent of $z$, does not affect the proof of Theorem~\ref{thm: conv_int}, except that $R_{\gamma_{[0,t]}}$ gets replaced by $\tilde{R}_{\gamma_{[0,t]}}=R_{\gamma_{[0,t]}}\Gamma^{-1}$, leading to the corresponding change of its logarithmic derivative in the final result.
\end{proof}

We now recall the definitions of the observable $f(\Omega,a,z)$ in the cases when there are single point boundary components and/or $a\in \Omega$. In the former case, if $\{b\}$ is such a component, and $\Cvr$ branches around $b$, the condition (\ref{eq: obs_on_bdry}) becomes \cite{ChIz}
\begin{equation}
\label{eq: obs_at_b}
\lim_{z\to b} \sqrt{i}\sqrt{z-b}f(\Omega,a,z)\in i\R; 
\end{equation}
in particular, we require that the limit exists. If $a\in\Omega$, then \cite{CHHI} the condition (\ref{eq: obs_at_a}) has to be replaced with 
\begin{equation}
\label{eq: obs_at_a_inside}
\lim_{z\to a} \sqrt{i}\sqrt{z-b}f(\Omega,a,z)=1; 
\end{equation}
note that the notation in \cite{CHHI} differs from one in the present paper and in \cite{ChIz} by an extra factor of $\sqrt{i}$ in front of all the observables. 

The proof of Theorem \ref{thm: mainconv_int_1} in \cite{ChIz} does cover the case of a single-face boundary component; however, the proof of convergence of interfaces needs a modification due to the singularity of $f$ at $b$. The following lemma allows one to fix the issue. Let $b^\delta$ be an inner vertex of $\Omega^\delta$, and denote by $\{b^\delta\}$ one of its adjacent faces. Denote by $Z(\Od,a^\delta,b^\delta)$ the partition function of the Ising model in $\Od$ with interface starting at $a^\delta$ and ending at $b^\delta$, and let, as before, $\Cvr=\Cvr_Z$ stand for the double cover that branches only around $\{b^\delta\}$ and the boundary component that contains $a^\delta$. 

\begin{lemma} There is a normalizing factor $\vartheta(\delta)$ (which depends only on $\delta$), such that 
as $\delta\to 0$, one has  
\label{lem: conv_single_point}
\begin{equation}
\frac{F_\Cvr(\Od\backslash \{b^\delta\},a^\delta,\cdot)}{\vartheta(\delta) Z(\Omega^\delta,a^\delta,b^\delta)}\to
\frac{f_\Cvr(\Omega\backslash \{b\},a,\cdot)}{Z(\Omega,a,b)}. 
\end{equation}
where $Z(\Omega,a,b)=\sqrt{-i\res_b f^2_\Cvr(\Omega\backslash \{b\},a,\cdot)}$.
\end{lemma}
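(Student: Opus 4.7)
The approach is to combine Theorem \ref{thm: mainconv_int_1} (convergence of the observable with some domain-dependent normalization $\beta(\delta)$) with a discrete identity expressing $Z(\Omega^\delta,a^\delta,b^\delta)$ in terms of the singular part of $F_\varpi(\Omega^\delta\setminus\{b^\delta\},a^\delta,\cdot)$ near the edges incident to $b^\delta$. This discrete identity is the combinatorial counterpart of the defining relation $Z(\Omega,a,b)^2=-i\,\res_b f_\varpi^2(\Omega\setminus\{b\},a,\cdot)$, and it implies that $\beta(\delta)Z(\Omega^\delta,a^\delta,b^\delta)$ itself converges (up to a factor depending only on $\delta$) to $Z(\Omega,a,b)$; the conclusion then follows by dividing.

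First, I would invoke Theorem \ref{thm: mainconv_int_1} to obtain factors $\beta(\delta)=\beta(\delta,\Omega^\delta,a^\delta,\varpi)$ with $\beta(\delta)F_\varpi(\Omega^\delta\setminus\{b^\delta\},a^\delta,\cdot)\to f_\varpi(\Omega\setminus\{b\},a,\cdot)$ uniformly on compact subsets of $\Omega\setminus\{b\}$. Since $\{b^\delta\}$ is a single face and the approximation near it is automatically regular, the convergence extends up to the four edges incident to $b^\delta$ in the sense of the second part of Theorem \ref{thm: mainconv_int_1}: more precisely, $\beta(\delta)F_\varpi(\Omega^\delta\setminus\{b^\delta\},a^\delta,e)$ matches, to leading order, the values of the $(z-b)^{-1/2}$ singular part of $f_\varpi(\Omega\setminus\{b\},a,\cdot)$ prescribed by \eqref{eq: obs_at_b}.

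Second, I would establish a discrete residue identity. The key observation is that each configuration in $\mathrm{Conf}(\Omega^\delta,a^\delta,b^\delta)$ can be sorted according to which of the four edges $e$ incident to $b^\delta$ is the last edge of the interface $\gamma_{a^\delta\to b^\delta}$, and deleting that edge produces a bijection with $\mathrm{Conf}(\Omega^\delta\setminus\{b^\delta\},a^\delta,e)$. Tracking the weights $W_\varpi$ of \eqref{Observable} across this bijection — the winding of the final edge, the indicator $\mathbf{1}_{\gamma_{a^\delta\to z}}$ of lifting to $\widetilde\Omega^\delta$, and the fact that $\varpi$ branches at $\{b^\delta\}$, so $F_\varpi(\ldots,e)$ changes sign as $e$ rotates around $b^\delta$ — leads to an identity of the shape
\begin{equation*}
Z(\Omega^\delta,a^\delta,b^\delta)^{2}=\mu(\delta)\sum_{e\sim b^\delta}\eta(e)\,F_\varpi(\Omega^\delta\setminus\{b^\delta\},a^\delta,e)^{2},
\end{equation*}
where $\mu(\delta)$ depends only on $\delta$ and $\eta(e)$ are fixed local phases depending only on the orientation of $e$ relative to $b^\delta$. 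Squaring is natural because the left-hand side is positive while $F_\varpi$ is spinorial around $b^\delta$; a quadratic combination is precisely what extracts a single-valued quantity from a $\varpi$-spinor.

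Third, the quadratic functional on the right-hand side is the discrete analog of $-i\,\res_b f_\varpi^2$: indeed, reading the continuous asymptotic $f_\varpi\sim D/(\sqrt{i}\sqrt{z-b})$, $D\in i\mathbb{R}$, at the four approximating discrete positions recovers $\res_b f_\varpi^2$ in the limit. Combined with the uniform convergence of $\beta(\delta)F_\varpi$ near $b$, this yields
\begin{equation*}
\beta(\delta)^{2}\mu(\delta)^{-1}Z(\Omega^\delta,a^\delta,b^\delta)^{2}\longrightarrow -i\,\res_{b}f_\varpi^{2}(\Omega\setminus\{b\},a,\cdot)=Z(\Omega,a,b)^{2}.
\end{equation*}
Taking square roots (with a fixed sign convention from the discrete model) gives $\beta(\delta)\mu(\delta)^{-1/2}Z(\Omega^\delta,a^\delta,b^\delta)\to Z(\Omega,a,b)$. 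Dividing the conclusion of Theorem \ref{thm: mainconv_int_1} by this relation yields the stated convergence with $\vartheta(\delta):=\mu(\delta)^{1/2}$, which depends only on $\delta$ as required.

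The hard part is the precise form of the combinatorial identity in the second step. The obstacle is not any single step but the careful accounting of the three sources of phase in $W_\varpi$ — the winding half-integer shift contributed by the final edge, the sign flips from loops that become non-trivial on $\widetilde\Omega^\delta$ once $\{b^\delta\}$ is removed, and the indicator of lifting along the bijection — and showing that these combine into a single local phase $\eta(e)$ plus a global factor $\mu(\delta)$. A convenient way to organize this bookkeeping is via a Pfaffian/Kasteleyn-type argument at the vertex $b^\delta$ analogous to the one behind \eqref{eq: obs_Z}, reducing the problem to a computation on the ``flower'' of four edges around $b^\delta$.
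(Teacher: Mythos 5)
Your proposal has a genuine gap at its central step — the ``discrete residue identity'' — which you correctly flag as the hard part, but whose stated form is not correct and whose proof is not supplied. The paper does not take this route; it reduces the lemma to results of \cite{CHHI} (Theorems~2.15 and~2.19 there), where the combinatorial input relating $Z(\Od,a^\delta,b^\delta)$ to the observable in the punctured domain is a \emph{linear} relation, evaluated at a single boundary edge of $\Od\setminus\{b^\delta\}$ near $b^\delta$ (the quantity written as $F_{[\Od,a,b]}(b+\tfrac{\delta}{2})$ in \cite{CHHI}), combined with a delicate analysis of the local $\delta^{-1/2}$-type blow-up of the discrete observable near the removed face. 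This is essentially a variant of the identity (\ref{eq: obs_Z}) in the punctured domain, not a quadratic ``sum of squares over edges around $b^\delta$''.

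Two concrete problems with your step two. First, the map ``delete the last edge of the interface'' is not a bijection from $\Conf(\Od,a^\delta,b^\delta)$ onto $\bigsqcup_e \Conf(\Od\setminus\{b^\delta\},a^\delta,e)$: in the target, all four edges of the removed face except the marked one $e$ are forbidden, whereas in the source, configurations may freely use any of the face edges of $\{b^\delta\}$. What you are comparing are genuinely different ensembles, and the discrepancy is exactly what the local analysis in \cite{CHHI} is designed to handle. Second, even granting some corrected identity, you would still need to show that the quadratic sum converges to $-i\,\res_b f^2$; the uniform $C^2$ convergence of boundaries and Theorem~\ref{thm: mainconv_int_1} provide convergence on compacts of $\Omega\setminus\{b\}$, but not at lattice distance $\delta$ from $b$ where the observable blows up. That boundary-layer analysis is precisely the content of \cite[Theorem 2.19]{CHHI}, which the paper invokes directly instead of redoing. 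Your proposal therefore substitutes a plausible-sounding but unproven (and in fact misstated) combinatorial identity for the actual, nontrivial, already-published technical input.
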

\begin{proof}
 This lemma was proven in \cite[Theorem 2.19]{CHHI}, in a different notation. The function $F_{[\Od,a,b]}(z)$ in the notation of the present paper corresponds to the ratio $\frac{F_\Cvr(\Od\backslash \{b^\delta\},a^\delta,z^\delta)}{F_\Cvr(\Od\backslash \{b^\delta\},a^\delta,a^\delta)}$, $F_{[\Od,a,b]}(b+\frac{\delta}{2})=\frac{Z(\Od,a^\delta,b^\delta)}{F_\Cvr(\Od\backslash \{b^\delta\},a^\delta,a^\delta)}$ and $\mathcal{B}_{[\Omega,a,b]}=Z(\Omega,a,b)$. Hence 
$$
\frac{F_\Cvr(\Od\backslash \{b^\delta\},a^\delta,\cdot)}{\vartheta(\delta) Z(\Omega^\delta,a^\delta,b^\delta)}=\frac{F_{[\Od,a,b]}(\cdot)}{\vartheta(\delta )F_{[\Od,a,b]}(b+\frac{\delta}{2})},
$$
and thus the result follows from \cite[Theorems 2.15, 2.19]{CHHI}. Note that although \cite{CHHI} only treats the case when $a^\delta$ is inside the domain, in the proof of Theorem 2.19 one deals with local behavior of $F$ near the singularity at $b$. Thus the same proof applies to the case when $a^\delta$ is on the boundary. 
\end{proof}

We will also need the following analog of Lemma \ref{lem: non-vanish}.
\begin{lemma}
\label{lem: symmetry_bis}
 If $a\in \partial\Omega$ and $b\in\Omega$, then 
$$
Z(\Omega,a,b) \neq 0. 
$$
\end{lemma}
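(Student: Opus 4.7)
The plan is to adapt the argument of Lemma~\ref{lem: non-vanish} to the present setting. Let $f := f_\Cvr(\Omega \setminus \{b\}, a, \cdot)$, where $\Cvr$ branches around $\{b\}$ and around the boundary component $l_1 \ni a$. The branching at $b$ forces a spinor expansion $f(z) = \alpha_0/\sqrt{z-b} + O(\sqrt{z-b})$, and the boundary condition (\ref{eq: obs_at_b}) forces $\alpha_0 \in \sqrt{i}\cdot\R$. Writing $\alpha_0 = r\sqrt{i}$ with $r\in\R$, one gets $\res_b f^2 = ir^2$ and hence $Z(\Omega, a, b) = |r|$, so the task reduces to showing $r \neq 0$.

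I would argue by contradiction: if $r = 0$, then $f^2$ extends holomorphically across $b$ and $h := \Im \int f^2$ extends to a single-valued harmonic function on $\Omega \setminus \{a\}$. A direct calculation using the boundary condition gives $h$ locally constant on $\partial\Omega\setminus\{a\}$ (normalize it to be $0$ on $l_1$) and $\partial_n h = -|f\sqrt{in_z}|^2 \leq 0$, while near $a$ the expansion $f = \sqrt{in_a}/(z-a)+O(1)$ produces a positive Poisson-kernel singularity with $h \to +\infty$ from inside $\Omega$. In particular $h$ is bounded below, its infimum is attained on some boundary component $l_{i_0}$, and the strong minimum principle together with Hopf's boundary lemma give $\partial_n h < 0$ strictly on $l_{i_0}\setminus\{a\}$; thus $f\sqrt{in_z}$ is real and nonvanishing on $l_{i_0}$ away from the poles of $f$.

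The contradiction then comes from a parity mismatch around $l_{i_0}$, just as in Lemma~\ref{lem: non-vanish}. One loop around $l_{i_0}$ contributes a factor $-1$ to a continuous branch of $\sqrt{in_z}$ and the $\Cvr$-monodromy of $l_{i_0}$ to $f$, which by construction of $\Cvr$ equals $-1$ precisely when $l_{i_0} = l_1$. This net monodromy of $f\sqrt{in_z}$ must match the parity of the number of simple poles of $f$ on $l_{i_0}$ (where $f\sqrt{in_z}$ switches sign): one pole at $a$ if $l_{i_0}=l_1$, and none otherwise. In both cases the two parities disagree, producing the contradiction and showing $r\neq 0$.

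The main point requiring care is the monodromy bookkeeping in the last step and the applicability of Hopf's lemma at points of $l_1\setminus\{a\}$ in the presence of the singularity at $a$; this is handled by first noting, via the strong maximum principle, that $h>0$ strictly in the interior (otherwise $h$ would be identically zero, contradicting the Poisson-kernel blow-up of $h$ at $a$), so Hopf applies pointwise along $l_1\setminus\{a\}$.
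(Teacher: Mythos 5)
Your proof is correct, and it follows a genuinely different route from the paper's. The paper first establishes a symmetry identity $2Z(\Omega,a,b)=f_{\Cvr_Z}(\Omega,b,a)$, by computing the monodromy of $\Im\int f(\Omega\setminus\{b\},a,z)f(\Omega,b,z)\,dz$, thereby reducing the statement to the non-vanishing of the \emph{second} observable $f(\Omega,b,\cdot)$ at the boundary point $a$; this non-vanishing is then obtained by the minimum-principle/Hopf/parity argument applied to $h=\Im\int f(\Omega,b,\cdot)^2$ (which has no boundary poles and blows up only at $b$, so the parity count forces the minimizing component to be the one containing $a$). You instead work directly and only with $f=f_\Cvr(\Omega\setminus\{b\},a,\cdot)$: you read off $Z(\Omega,a,b)=|r|$ from the spinor expansion at $b$, assume $r=0$ so that $f^2$ extends holomorphically across $b$ and $\{b\}$ ceases to be a boundary component, and then run the same minimum-principle/Hopf/parity machinery on $h=\Im\int f^2$ to obtain a contradiction in both cases $l_{i_0}=l_1$ (monodromy $+1$ but one pole) and $l_{i_0}\neq l_1$ (monodromy $-1$ but no poles). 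Your argument is thus more self-contained, as it never introduces $f(\Omega,b,\cdot)$; the trade-off is that it does not produce the identity (\ref{eq: symmetry_bis}), which the paper records along the way and then reuses in Section~\ref{DP_explicit} to compute (\ref{eq: pf_annulus_radial}) explicitly. Minor remarks: as in Lemma~\ref{lem: non-vanish}, one should first normalize $\Omega$ to be bounded with smooth boundary; and your ``$h>0$ in the interior'' should read ``$h>\min h$ in the interior'', which is what non-constancy (forced by the blow-up at $a$) actually gives and is what Hopf's lemma needs at the minimizing component---it does not require knowing that $l_{i_0}=l_1$ in advance.
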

\begin{proof}
By conformal invariance, we may assume the boundary of $\Omega$ to be smooth. Consider the function $H(w)=\Im \int^w f(\Omega\setminus \{b\},a,z)f(\Omega,b,z)dz$. Since it is an integral of a single-valued analytic function, it is well-defined up to an additive monodromy. The boundary conditions (\ref{eq: obs_on_bdry}) and (\ref{eq: obs_at_b})--(\ref{eq: obs_at_a_inside}) imply that $H$ is locally constant along the boundary, except at $a$, where it has jump $\pi f(\Omega,b,a)$, and at $b$, where it behaves as $Z(\Omega,a,b)\Im\,\log(z-b)$. The monodromies must coincide, which implies that
\begin{equation}
\label{eq: symmetry_bis}
2Z(\Omega,a,b)=f_{\Cvr_z}(\Omega,b,a). 
\end{equation}
Non-vanishing of the latter quantity is proven as in Lemma \ref{lem: non-vanish}: the harmonic function $h:=\Im \int f(\Omega,b,z)^2dz$ is constant on each boundary component, hence by Harnack principle its gradient cannot vanish on the boundary component with a minimal value of the constant, and the definition of $\Cvr_Z$ then implies that $a$ belongs to that component.
\end{proof}

\begin{proof}[Proof of Proposition \ref{prop: ann2}] 
The quantity 
$$
M^\delta_{\gamma_{[0,n]}}(z):=\frac{F_\Cvr\left(\Od\setminus \left(\{b^\delta\}\cup\gamma_{[0,n]}\right),\gamma_n,z\right)}{\vartheta(\delta) Z\left(\Od\setminus \left(\{b^\delta\}\cup\gamma_{[0,n]}\right),\gamma_n,b^\delta\right)}
$$
is a martingale with respect to the interface $\gamma_{[0,n]}$; hence, in view of Lemmas \ref{lem: conv_single_point}--\ref{lem: symmetry_bis}, the proof of Theorem \ref{thm: conv_int} readily extends, with $M_{\gamma_{[0,t]}}(z)=f(\Omega\setminus \gamma_{[0,t]},\gamma_t,z)Z(\gamma_{[0,t]},\gamma_t,b)^{-1}$ and $R_{\gamma_{[0,t]}}(z)=Z(\gamma_{[0,t]},\gamma_t,b)^{-1}$.
\end{proof}

\subsection{Explicit computation of partition functions}
\label{DP_explicit}

Let $f_{0,1}(\A_p,a,z)$ be two observables for the annulus $\A_p$ corresponding to the trivial and non-trivial covers. Our goal is to find explicit expressions for $f_{0,1}$. Consider the covering map $\psi:z\mapsto e^{iz}$ of   $\A_p$ by the strip $\bS_p:=\{0<\Im \,z< p\}$. The functions $f_{0,1}(\bS_p,\theta,w):=(\psi'(w))^{\frac12}f_{0,1}(\A_p,e^{i\theta},\psi(w))$ are holomorphic in $w\in\bS_p$, anti-periodic (respectively, periodic) with period $2\pi$ and have simple poles of residue $\pm 1$ at $\theta+2\pi m$, $m\in \Z$. Note also that $f_{0,1}(\bS_p,\theta,w)\in \R$ for $w\in \R\setminus \theta$ and $f_{0,1}(\bS_p,\theta,w)\in i\R$ for $w\in \R+ip$. This allows one to continue these functions by Schwarz reflection as $2pi$-antiperiodic meromorphic functions in $w\in\C$.  
These conditions characterize the functions uniquely as Jacobian elliptic functions \cite{Spec_functions} $ds(w|\pi,p)$ and $cs(w|\pi,p)$. To evaluate $f(\A_p,a_m,a_n)$, where $a_m$ and $a_n$ are on different boundary components, which corresponds to evaluation of $f_{0,1}(\bS_p,\theta,w)$ at $w$ with $\Im w=p$, we have used a transformation formula for a shift by a quarter-period. We also dropped a common factor from the final result.

In the case the target point $b$ is inside the domain, by Lemma \ref{lem: symmetry_bis}, it is enough to compute $f(\A_p,b,a)$. Passing again to the strip, consider the function $$f^2(\bS_p,\theta+i\rho,w):=\psi'(w)f^2(\A_p,\psi(\theta+i\rho),\psi(w)),$$ which is analytic and $2\pi$-periodic in $w$, purely real on $\R$ and $\R+ip$ and has  simple poles at $\theta+i\rho+2\pi m$. Schwarz reflection then allows one to extend it to an analytic function in the full plane which is also both $2\pi$- and $2pi$-periodic and has two simple poles per fundamental domain. Hence, it also has two zeros per fundamental domain. Since $f(\bS_p,\theta+i\rho,w)$ is itself purely imaginary and $2\pi$-antiperiodic along $\R+ip$, it must have zeros there, which by simple symmetry considerations have to be at $\theta+\pi+ip+2\pi\Z$. Consequently, $f^2(\bS_p,\theta+i\rho,w)$ has a double zero there, and hence
$$f^{-2}(\bS_p,\theta+i\rho,w)=c_1\wp(w-\theta-\pi-ip|2\pi,2p)+c_2.$$ The constants are determined from the condition (\ref{eq: obs_at_a_inside}), leading to the expression (\ref{eq: pf_annulus_radial}).

\def\cprime{$'$}

\end{document}